\newcolumntype{Y}{&gt;{\centering\arraybackslash}X} %
\newcommand{\change}[1]{\textcolor{blue}{#1}}
\renewcommand{\change}[1]{#1}
\newtheorem{thm}{Theorem}
\newtheorem{prop}[thm]{Proposition}
\newcommand{\eq}[1]{\begin{align} #1 \end{align}}
\newcommand{\nx}{ \nonumber \\}
\newcommand{\bmat}[1]{\begin{matrix} #1 \end{matrix}}
\newcommand{\bmatc}[2]{\begin{array}{#2} #1 \end{array}}
\newcommand{\sumi}{\sum_{i=1}^N}
\newcommand{\boldA}{\boldsymbol{A}}
\newcommand{\boldB}{\boldsymbol{B}}
\newcommand{\boldC}{\boldsymbol{C}}
\newcommand{\ak}{{\boldA^k}}
\newcommand{\aki}{{\boldA^k_i}}
\newcommand{\akmi}{{\boldA^{k-1}_i}}
\newcommand{\bkp}{{\boldB^{k+1}}}
\newcommand{\bkpi}{{\boldB^{k+1}_i}}
\newcommand{\bki}{{\boldB^k_i}}
\newcommand{\Ak}{{\boldA^3}}
\newcommand{\Aki}{{\boldA^3_i}}
\newcommand{\Bkp}{{\boldB^4}}
\newcommand{\Bk}{{\boldB^3}}
\newcommand{\Bkpi}{{\boldB^4_i}}
\newcommand{\Bki}{{\boldB^3_i}}
\newcommand{\ep}{{E_{+1}}}
\newcommand{\ez}{{E_0}}
\newcommand{\fpp}{{F_{+2}}}
\newcommand{\fp}{{F_{+1}}}
\newcommand{\fz}{{F_0}}
\newcommand{\fm}{{F_{-1}}}
\newcommand{\fmm}{{F_{-2}}}
\newcommand{\cep}{e_{+1}}
\newcommand{\cez}{e_0}
\newcommand{\cem}{e_{-1}}
\newcommand{\cfpp}{f_{+2}}
\newcommand{\cfp}{f_{+1}}
\newcommand{\cfz}{f_{0}}
\newcommand{\cfm}{f_{-1}}
\newcommand{\cfmm}{f_{-2}}
\newcommand{\dg}{^\dagger}
\newcommand{\Aone}{A_{(1)}}
\newcommand{\Atwo}{A_{(2)}}
\newcommand{\Athree}{A_{(3)}}
\newcommand{\Afour}{A_{(4)}}
\newcommand{\Bone}{B_{(1)}}
\newcommand{\Btwo}{B_{(2)}}
\newcommand{\Bthree}{B_{(3)}}
\newcommand{\Bfour}{B_{(4)}}
\newcommand{\Cone}{C_{(1)}}
\newcommand{\Ctwo}{C_{(2)}}
\newcommand{\Cthree}{C_{(3)}}
\newcommand{\Cfour}{C_{(4)}}
\newcommand{\Ckmm}{C_{(k-2)}}
\newcommand{\Ckm}{C_{(k-1)}}
\newcommand{\Ic}[1]{\multicolumn{1}{|c}{#1}}
\newcommand{\cI}[1]{\multicolumn{1}{c|}{#1}}
\newcommand{\ezo}[1]{E_0^{\otimes #1}}
\newcommand{\im}{\mathrm{i}}
\begin{document}

\title{Absence of \change{nontrivial} local conserved quantities in the spin-1 bilinear-biquadratic chain and its anisotropic extensions}

\author{Akihiro Hokkyo}
\email{hokkyo@cat.phys.s.u-tokyo.ac.jp}
\affiliation{Department of Physics, Graduate School of Science, The University of Tokyo, 7-3-1 Hongo, Bunkyo, Tokyo 113-8654, Japan}

\author{Mizuki Yamaguchi}
\email{yamaguchi-q@g.ecc.u-tokyo.ac.jp}
\affiliation{Department of Basic Science, Department of Multidisciplinary Sciences, Graduate School of Arts and Sciences, The University of Tokyo, 3-8-1 Komaba, Meguro, Tokyo 153-8902, Japan}

\author{Yuuya Chiba}
\email{yuya.chiba@riken.jp}
\affiliation{ Nonequilibrium Quantum Statistical Mechanics RIKEN Hakubi Research Team, \change{Pioneering Research Institute (PRI)}, RIKEN, 2-1 Hirosawa, Wako, Saitama 351-0198, Japan}

\begin{abstract}
We provide a complete classification of the
integrability and nonintegrability of the spin-1 bilinear-biquadratic model with a uniaxial anisotropic field, 
which includes the Heisenberg model and the Affleck-Kennedy-Lieb-Tasaki model. 
\change{It is rigorously shown that, within this class, 
the only integrable systems are those 
that have been solved by the Bethe ansatz method%
, 
and that all other systems are nonintegrable, in the sense that
they do not have nontrivial local conserved quantities. 
Here, ``nontrivial'' excludes quantities like the Hamiltonian or the total magnetization, 
and ``local'' refers to sums of operators that act only on sites within a finite distance.
This result establishes the nonintegrability of the Affleck-Kennedy-Lieb-Tasaki model and, consequently, 
demonstrates that the quantum many-body scars observed in this model emerge independently of any conservation laws of local quantities.} 
Furthermore, 
\change{we extend the proof of nonintegrability to 
more general spin-1 models that encompass anisotropic extensions of the bilinear–biquadratic Hamiltonian,
and completely classified the integrability of generic Hamiltonians that possess translational symmetry, $U(1)$ symmetry, time-reversal symmetry, and spin-flip symmetry.}
Our result has accomplished a breakthrough in nonintegrability proofs by expanding their scope to spin-1 systems.
\end{abstract}

\maketitle

\section{Introduction}

Understanding equilibrium and nonequilibrium physics in quantum many-body systems is one of the most challenging and significant problems. 
Quantum integrable systems, 
whose energy spectrum and eigenstates can be obtained by using analytical methods such as the Bethe ansatz~\cite{Bethe1931},
have played a pivotal role in such studies
because they provide exact solutions for various quantities.
Their solvability is considered to be closely related to the existence of an infinite number of local conserved quantities.
For instance, in the algebraic Bethe ansatz~\cite{Sklyanin1979,Takhtadzhan1979,Baxter1982,Korepin1993,Takahashi1999}, 
the monodromy matrix used to generate the eigenstates also serves as a generating function for the local conserved quantities.

Despite their great contributions, some phenomena cannot be understood through integrable systems.
This is because the existence of an infinite number of local conserved quantities is incompatible with many empirical laws of macroscopic systems. 
A prominent example is thermalization. 
It has recently been recognized that many isolated quantum many-body systems tend 
to relax toward thermal equilibrium~\cite{DAlessio2016,Mori2018}. 
Such relaxation is usually explained by the eigenstate thermalization hypothesis~\cite{Deutsch1991,Srednicki1994,Rigol2008}, 
which states that every energy eigenstate is thermal, i.e., represents thermal equilibrium. 
In contrast, integrable systems require an infinite number of parameters to describe their stationary states, 
which are characterized by the generalized Gibbs ensemble~\cite{Rigol2007,Vidmar2016} 
and thus lie beyond the framework of conventional thermodynamics.
Furthermore, integrability can lead to deviations from empirical laws even at the level of linear response.
The Green-Kubo formula~\cite{Green1954,Kubo1957} may yield results 
that disagree with empirical macroscopic laws when applied to integrable systems.
The Mazur-Suzuki bound~\cite{Mazur1969,Suzuki1971,Sirker2020} clarifies this issue:
For the Green-Kubo formula to yield results consistent with empirical behavior, 
no local conserved quantity should overlap with the relevant observables, such as the magnetization (in the case of susceptibilities)~\cite{Chiba2020,Chiba2023}
or the current operators (in the case of transport coefficients)~\cite{Saito2003,Sirker2020}. 
These examples indicate that, 
in order to understand the empirical behavior of macroscopic systems, 
it is necessary to rule out the possibility of integrability, 
as integrable systems tend to exhibit anomalous behavior.

\change{Recent studies have revealed some anomalous behaviors regarding thermalization even in systems without any known integrable structure.
Major examples include the quantum many-body scar (QMBS) state~\cite{Bernien2017,Turner2018,Turner2018a}, 
which is a nonthermal energy eigenstate that appears in systems not known to be integrable.
The QMBS provides a violation of thermalization 
with an origin that appears distinct from integrability, 
indicating that the problem of thermalization has a richer structure. 
While numerical evidence strongly suggests that such systems do not possess conventional integrable structures~\cite{Turner2018,Moudgalya2018}, 
the precise relationship between the QMBS and integrability remains unclear. 
In fact, models with the QMBS are sometimes viewed as deformations of integrable systems~\cite{Turner2018a,Mark2020,Moudgalya2020,ODea2020}, 
raising the possibility that 
some hidden local conserved quantities may still play a role. 
Therefore, whether the QMBS is truly independent of 
local conservation laws
remains an open and fundamental question.} 

One of the most widely used models in the QMBS studies is the spin-1 Affleck-Kennedy-Lieb-Tasaki (AKLT) model~\cite{Affleck1987,Affleck1988}.
Nonthermal energy eigenstates in this model have been constructed exactly~\cite{Moudgalya2018,Moudgalya2018a,Mark2020},
and they are found to be equally spaced throughout the energy spectrum.
These states lead to perfect revivals in observables and fidelity for specific initial states~\cite{Mark2020}. 
Since such nonthermal eigenstates can be written down analytically
unlike typical thermal eigenstates, 
this model contributes to an exact understanding of the problem of thermalization. 
Nonetheless,
the nature of conserved quantities and the relation to integrability in this model are still not fully understood.

A broader perspective on the integrability can be gained by studying the spin-1 BLBQ model,
which interpolates between the AKLT model, the Heisenberg model~\cite{White1993}, 
and several integrable points~\cite{Uimin1970,Lai1974,Sutherland1975,Schmitt1996,Takhatajan1982,Babujian1982,Parkinson1987,Klumper1989,Barber1989}.
The BLBQ model exhibits various intriguing phenomena, such as the quantum phase transition~\cite{Mikeska2004,Rodriguez2011,Chiara2011,Quella2021,Rakov2022}, the Kibble-Zurek physics~\cite{Dhar2022}, 
and the emergence of quantum chaos~\cite{Santos2020}.
In particular, transport phenomena have been widely explored across different parameter regimes, both near and away from known integrable points~\cite{Ilievski2014,Prosen2015,Piroli2016,Dupont2020,Lima2023}.
When anisotropy is introduced, the transport behavior becomes even more diverse~\cite{Dupont2020}, showing dramatic changes among diffusive, ballistic, and superdiffusive regimes. 
In light of these backgrounds, a complete classification of whether the BLBQ model and its anisotropic extensions are integrable deserves further study to grasp the whole phase diagram of such phenomena.

Recently, 
a rigorous method for exhaustively identifying local conserved quantities was introduced by Shiraishi~\cite{Shiraishi2019}. 
This approach has been successfully applied to various quantum systems~\cite{Chiba2024,Park2025graph,Shiraishi2024,Yamaguchi2024},
\change{revealing that many models are \textit{nonintegrable} in a strict sense:
They possess no local conserved quantities beyond linear combinations of the Hamiltonian and on-site operators, such as the total magnetization.}
However, no such rigorous result has yet been obtained for the BLBQ model, 
despite its fundamental importance. 
In fact, no such result is available for any spin-1 system due to the complicated algebraic structure of the spin-1 operators compared to the spin-$1/2$ case, as pointed out in the pioneering work~\cite{Shiraishi2019}.

In this paper, 
we overcome these difficulties
and give a complete classification of the integrability and the nonintegrability of the spin-1 BLBQ model with a uniaxial anisotropic field. 
We have rigorously shown that all systems in this model are nonintegrable 
in the above sense
except for 
\change{those 
that have been solved by the Bethe ansatz method}%
~\cite{Parkinson1987,Klumper1989,Barber1989, Uimin1970,Lai1974,Sutherland1975,Schmitt1996,Takhatajan1982,Babujian1982}.
In particular, the AKLT model is shown to be nonintegrable, 
which indicates the independence of the QMBS and integrability. 
\change{We also completely classified the integrability of generic Hamiltonians possessing translational, $U(1)$, 
time-reversal, and spin-flip symmetries, including those with anisotropic interactions.}

The paper is organized as follows.
In Sec.~\ref{sec:setup}, 
we state the main results along with a rigorous definition of local conserved quantities. 
Before proceeding to the proof, 
we summarize the overview in Sec.~\ref{sec:strategy}, 
including the properties of the special operator basis that
we utilize in the next section.
In Sec.~\ref{sec:proof}, we give proofs of nonintegrability, 
except for two special zero-parameter cases and 
the case where nonintegrability appears thanks to the uniaxial anisotropy.
The proof of the former cases is provided in Appendix~\ref{sec:singular_case}, 
and that of the latter case is explained in Sec.~\ref{sec:necessary}. 
We also discuss a more general model including the BLBQ model in Sec.~\ref{sec:necessary}.
We conclude with a brief summary and outlook in Sec.~\ref{sec:conclusion}.

\section{Setup and Main Result}\label{sec:setup}

We consider the spin-1 bilinear-biquadratic chain with an anisotropic field on $N$ sites with the periodic boundary condition.
Hereafter, this model is simply referred to as the BLBQ model.
The BLBQ model is described by the following Hamiltonian:
\begin{align}
    H=\sumi \left(J_{1}\vec{S}_{i}\cdot\vec{S}_{i+1}+J_{2}(\vec{S}_{i}\cdot\vec{S}_{i+1})^2+D (S_i^z)^2 \right)~.
    \label{eq:H_BLBQ}
\end{align}
The first term represents the bilinear interaction, i.e., 
the Heisenberg interaction.
The second term describes the biquadratic interaction, which is specific to spin $S\geq 1$ systems.
The last term breaks $SU(2)$-symmetry, 
corresponding to single-spin uniaxial anisotropic field or quadratic Zeeman effect, which is also specific to $S \geq 1$.

We shall classify the above model into integrable and nonintegrable systems. 
\change{Here, we characterize \textit{nonintegrability} by the absence of nontrivial local conserved quantities (defined below) and 
{\it integrability} by possessing $\Omega(N)$ number of local conserved quantities~\cite{Caux2011,Mori2018,Shiraishi2019,Yamaguchi2024F}. 
Here, $\Omega(N)$ indicates that the quantity grows proportionally to $N$ or faster.}
In order to state our claim in a rigorous manner, 
we define below the precise meaning of local conserved quantity.
First, an operator is called a {\it $k$-local operator} if it acts on $k$ consecutive sites.
For instance, $S^x_{i} S^y_{i+1} S^z_{i+2}$ is a 3-local operator and $S^z_{i} S^x_{i+4}$ is a 5-local operator.
Next, an operator is called a {\it $k$-local quantity} if it can be written by a sum of $l$-local operators with $l\leq k$ and cannot be expressed by a sum of those with $l\leq k-1$.
Note that a $k$-local quantity is not necessarily a shift sum of $k$-local operators, e.g., both $\sumi S^z_{i}$ and $\sum (-1)^i S^z_{i}$ are $1$-local quantities.
A {\itshape $k$-local conserved quantity} is defined as a $k$-local quantity 
that commutes with the Hamiltonian.
We refer to $k$-local conserved quantities with 
$k\leq N/2$~\footnote{
We call $k$-local conserved quantities ``local'' ones even when $k=N/2$ 
because the corresponding $k$-local quantities in much larger systems also become conserved quantities, 
while it is not the case when $k>N/2$. 
For instance, 
we consider a spin chain with $N$ sites and nearest-neighbor interactions, 
and define a $3$-local quantity $Q^{[N]}=\sum_{i=1}^{N}S^x_{i} S^y_{i+1} S^z_{i+2}$. 
If it is the $3$-local conserved quantity for $N=6$,
then it is also a conserved quantity of a larger system $N>6$. 
On the other hand, 
although $H^2$ is an $(N_1/2+2)$-local conserved quantity in the nearest-neighbor-interacting systems on an $N=N_1$ chain, 
the corresponding $(N_1/2+2)$-local quantity 
(which is different from the square of the Hamiltonian for the large system)
is not conserved when $N>N_1$ in general. 
See also Sec.~VI~A in Ref.~\cite{Chiba2024}.
} as {\it local conserved quantities} and those with $k\geq 3$ as {\it nontrivial} ones.

The BLBQ model is known to be integrable~\cite{Parkinson1987,Klumper1989,Barber1989, Uimin1970,Lai1974,Sutherland1975,Schmitt1996,Takhatajan1982,Babujian1982}, i.e., possess an $\Omega(N)$ number of local conserved quantities, given any of the following three parameter sets:
\eq{
J_1 &= 0 \quad {\rm and}\quad D = 0~, \nx
J_1 &= J_2~, \nx
J_1 &= -J_2 \quad {\rm and}\quad  D = 0~.
\label{eq:except}
}
In contrast, we claim that the BLBQ model~\eqref{eq:H_BLBQ} is generically nonintegrable as long as none of the conditions~\eqref{eq:except} is satisfied. 
\begin{figure}
    \centering
    \includegraphics[width=1\linewidth]{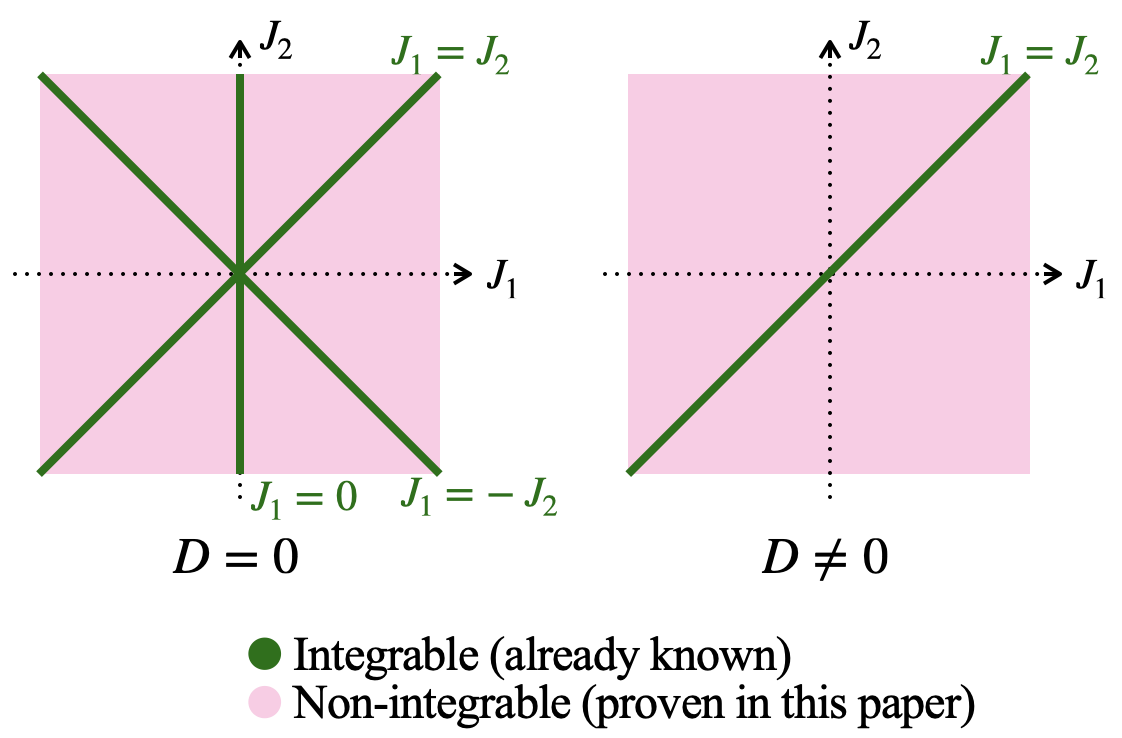}
    \caption{\label{fig:1}
    We perform a complete classification of the BLBQ model into integrable and nonintegrable systems.
    In the case without anisotropic field $D$, shown on the left, there are three known integrable systems, as indicated by three lines. 
    In the case with nonzero $D$, shown on the right, there is one known integrable system.
    We prove the nonintegrability of the complement of these integrable systems.
    }
\end{figure}
\begin{thm}[Main Result]\label{thm:main}
In the BLBQ model~\eqref{eq:H_BLBQ} not satisfying Eq.~\eqref{eq:except},  $k$-local conserved quantities with $3 \leq k \leq N/2$ are absent.
\end{thm}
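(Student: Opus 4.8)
The plan is to follow the Shiraishi-type strategy: assume a $k$-local conserved quantity $Q$ with $3\le k\le N/2$ exists, expand it in a convenient operator basis, and derive enough linear constraints from $[H,Q]=0$ to force the top-range ($k$-local) part of $Q$ to vanish, contradicting $k$-locality. Concretely, write $Q=\sum_{l\le k}\sum_i q_i^{(l)}$ where $q_i^{(l)}$ is an $l$-local operator supported on sites $i,\dots,i+l-1$, and organize the coefficients by their support pattern. The commutator $[H,Q]$ produces operators of locality up to $k+1$; the key is to track the $(k+1)$-local component and the $k$-local component of $[H,Q]=0$. Since $H$ is a sum of $2$-local terms (the bilinear and biquadratic pieces) plus a $1$-local anisotropy, the $(k+1)$-local part of $[H,Q]$ couples only the $k$-local part of $Q$ to the $2$-local terms of $H$, giving a closed recursion on the boundary coefficients of $q_i^{(k)}$. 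I would set up this recursion explicitly in the spin-1 operator basis and show it propagates a single boundary coefficient across the whole chain, so that periodicity (or the bound $k\le N/2$) plus the recursion forces all coefficients of $q^{(k)}$ into a one-parameter family; then the $k$-local equations rule out even that, unless one of the exceptional parameter sets in Eq.~\eqref{eq:except} holds.

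The first concrete step is to fix the operator basis. For spin-$1$ the natural $9$-dimensional on-site basis splits under the $U(1)$ charge (the $S^z$ grading) and under the relevant symmetries (time reversal, spin flip), and I would choose a basis of on-site operators organized by their $\Delta S^z$ value — roughly the operators the paper denotes by $E$'s and $F$'s, e.g.\ raising/lowering-type operators $\cep,\cfpp$, etc. Writing $H$ and $Q$ in this graded basis makes $[H,Q]=0$ decompose into independent equations in each total-$\Delta S^z$ sector, drastically reducing the number of coefficients to track. The second step is to extract, sector by sector, the constraints coming from operators of locality exactly $k+1$: these say that the ``doubling'' of a given $k$-local basis operator by the bilinear/biquadratic interaction at either end must cancel against the doubling of its neighbor, which is exactly the transfer-matrix-like recursion. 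The third step is to solve that recursion: generically its characteristic data (built from the ratios of $J_1,J_2,D$) has no nontrivial periodic solution, so $q^{(k)}\equiv 0$; the degenerate cases where a periodic solution does survive are precisely where the recursion becomes singular, and these must be shown to coincide with Eq.~\eqref{eq:except} (handled separately, the excerpt says, in the appendix and in Sec.~\ref{sec:necessary}). Finally, once $q^{(k)}=0$, the same argument applied to $q^{(k-1)}$ (now the top part) repeats, but here one uses $k-1\ge 2$ together with the already-known classification of $2$-local conserved quantities to terminate; since we only need to exclude $k\ge 3$, this descent is finite.

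The main obstacle — and the reason this had not been done before for any spin-$1$ chain, as the excerpt stresses — is the combinatorial/algebraic complexity of the spin-$1$ operator algebra: the product of two on-site spin-$1$ operators does not close on a small set the way Pauli matrices do, so the ``doubling'' map that generates the $(k+1)$-local terms is much bushier, and the recursion one derives is not a clean scalar relation but a small linear system in each $\Delta S^z$ sector whose coefficients are polynomials in $J_1,J_2,D$. Controlling this system — showing its solution space is trivial except on the codimension-one loci of Eq.~\eqref{eq:except}, and in particular identifying and disposing of the singular sub-cases — is where the real work lies. A secondary subtlety is the anisotropy term $D(S^z_i)^2$: it is only $1$-local, so it does not enter the top $(k+1)$-local recursion, but it re-enters at the $k$-local level and is exactly what breaks the remaining one-parameter family in the $D\ne0$ generic case, so the bookkeeping must carry $D$ through consistently rather than treating it perturbatively.
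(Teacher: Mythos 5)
Your outline reproduces the paper's strategy almost exactly: the graded $E$/$F$ basis, the two-step analysis (first the $(k{+}1)$-local outputs of $[Q,H]$, which reduce the $k$-local part of $Q$ to a one-parameter family of ``doubling'' operators; then the $k$-local outputs, which eliminate the last parameter), the observation that $D$ enters only at the second stage, and the separate treatment of degenerate coupling constants. One step, however, is stated in a way that would derail the execution. In your ``third step'' you place the exceptional set of Eq.~\eqref{eq:except} at the singularities of the Step-1 recursion (``the degenerate cases where a periodic solution does survive are precisely where the recursion becomes singular''). In fact these are two disjoint issues. The Step-1 recursion is regular and yields the identical one-parameter family for \emph{every} parameter choice with nonvanishing coupling constants --- including the integrable lines $J_1=\pm J_2$ and $J_1=0$; nothing in Step 1 distinguishes integrable from nonintegrable. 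The exceptional set emerges only in Step 2, where the surviving coefficient $q_k$ is multiplied by a polynomial that factors as $J_1(J_1-J_2)(J_1+J_2)$, supplemented by a second, $D$-dependent polynomial to dispose of $J_1\in\{0,-J_2\}$ with $D\neq0$. Conversely, the parameter values where the recursion genuinely degenerates --- a coupling constant vanishing, i.e.\ $J_2=0$ or $J_2=2J_1$ --- are \emph{not} integrable points; they are nonintegrable and require rerunning the whole argument with a restricted set of interaction terms (Appendix~\ref{sec:singular_case}).

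A second point your plan underestimates is what makes Step 1 tractable at all: in the basis~\eqref{eq:basis} the commutator of any two elements is proportional to a \emph{single} basis element, and every interaction term has the form $C_iC^\dagger_{i+1}$, so each $(k{+}1)$-local output is produced by at most two commutators and each equation $r_{\boldsymbol{B}^{k+1}_i}=0$ relates at most two coefficients of $Q$. This gives a clean scalar recursion, not the ``small linear system per $\Delta S^z$ sector'' you anticipate; the genuinely multi-term bookkeeping (via anticommutators) appears only in Step 2, where $Q$ and $H$ overlap on two sites. Minor remarks: the final descent to $q^{(k-1)}$ is unnecessary, since by the paper's definition a $k$-local quantity cannot have vanishing $k$-local part, so $q^{(k)}\equiv0$ is already the contradiction; and the periodicity you invoke is indeed used, but to show all doubling coefficients are mutually proportional, not to kill them.
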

This theorem provides a definitive answer that the BLBQ model~\eqref{eq:H_BLBQ} contains no further integrable systems (see Fig.~\ref{fig:1}). 
In other words, 
the systems possessing $\Omega(N)$ number of local conserved quantities satisfies one of the conditions~\eqref{eq:except}, 
which are solved by using the Bethe ansatz method~\cite{Parkinson1987,Klumper1989,Barber1989, Uimin1970,Lai1974,Sutherland1975,Schmitt1996,Takhatajan1982,Babujian1982}.
Our result rigorously shows that, in the BLBQ model, 
the set of integrable systems coincides with that of Yang--Baxter-integrable systems, 
which are classified in Ref.~\cite{kennedy1992solutions}.

Moreover, Theorem~\ref{thm:main} also manifests that there are no {\it partially integrable systems}, which possess finite number of nontrivial local conserved quantities.
That is, the BLBQ models are clearly classified into integrable systems with $\Omega(N)$ local conserved quantities and nonintegrable systems with no nontrivial ones.

As special cases of our results, the AKLT model ($J_2 = J_1/3$) and spin-1 Heisenberg chain ($J_2=0$) are proved to be nonintegrable.
This rigorously demonstrates that anomalous behaviors regarding thermalization in the AKLT model~\cite{Moudgalya2018,Moudgalya2018a} cannot be explained from integrability, as is widely expected from numerical evidence such as investigation~\cite{Moudgalya2018} of the level spacing statistics~\cite{Atas2013,Mehta2004}. 
Note that the nonintegrability of the PXP model, which has another type of QMBS, has been shown in Ref.~\cite{Park2025graph}.

We add to Theorem~\ref{thm:main} the analysis of $1$-local and $2$-local conserved quantities and obtain the following theorem.

\begin{thm}\label{thm:1and2}
In the BLBQ model~\eqref{eq:H_BLBQ} not satisfying Eq.~\eqref{eq:except}, $k$-local conserved quantities with $k\leq N/2$ are restricted to linear combinations of the following:
\begin{enumerate}[label={(\roman*)},ref={\roman*}]
    \item its own Hamiltonian: $H$;
    \item the total magnetization in the $z$ direction: $M_z = \sumi S_i^z$;
    \item the total magnetization in the $x$ and $y$ directions: $M_x$ and $M_y$, if $D=0$ holds; and
    \item the staggered quadratic spins:
\eq{
&\sumi (-1)^i(S_i^z)^2~,\nx 
&\sumi (-1)^i((S_i^x)^2-(S_i^y)^2)~,\nx 
&\sumi (-1)^i(S_i^x S_i^y+S_i^y S_i^x)~,
}
if $J_1=0$ holds and $N$ is even.
\end{enumerate}
\end{thm}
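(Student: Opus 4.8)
The plan is to build on Theorem~\ref{thm:main}: since it already rules out $k$-local conserved quantities with $3\le k\le N/2$, every local conserved quantity is automatically a $1$-local or $2$-local quantity, so what remains is to enumerate those. Because $[H,M_z]=0$, the adjoint action $[H,\cdot]$ preserves the grading of operators by $S^z$-charge, so I may assume $Q$ has a definite charge $m$ and write $Q=\sum_{i=1}^{N}\hat q_i$ with $\hat q_i$ a (possibly site-dependent) charge-$m$ operator supported on the bond $\{i,i+1\}$, absorbing single-site operators into the $\hat q_i$ and expanding each of them in the special operator basis of Sec.~\ref{sec:strategy}.

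Next I would impose $[H,Q]=0$ layer by layer. Since $H$ is $2$-local, $[H,Q]$ is at most $3$-local, and its genuinely $3$-local part yields, for every $i$, a bond-to-bond relation
\begin{align}
\bigl[h_i,\hat q_{i+1}\bigr]^{(3)}+\bigl[h_{i+1},\hat q_i\bigr]^{(3)}=0 ,
\end{align}
where $h_i$ is the bilinear-biquadratic local term on $\{i,i+1\}$ (the on-site anisotropy does not reach range $3$) and the superscript $(3)$ denotes the component acting nontrivially on all of $\{i,i+1,i+2\}$. This relation constrains only the genuinely two-site part of the $\hat q_i$; solving it in the special basis — essentially a low-$k$ instance of the analysis in Sec.~\ref{sec:proof}, and the point where excluding Eq.~\eqref{eq:except} supplies the required non-degeneracy — should force that part to be a constant multiple of the two-site part of $h_i$, so that $Q=cH+R$ with $R=\sum_i r_i$ a $1$-local quantity, each $r_i$ a single-site operator. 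It then remains to impose the $2$-local and $1$-local parts of $[H,R]=0$. The $2$-local part is $\sum_i[h_i,r_i+r_{i+1}]$, so $[h_i,r_i+r_{i+1}]=0$ for all $i$; for generic parameters this forces $r_i=\vec a\cdot\vec S_i$ with a site-independent $\vec a$, whereas when $J_1=0$ the pure biquadratic term additionally commutes with the bond differences of $(S_i^z)^2$, $(S_i^x)^2-(S_i^y)^2$ and $S_i^xS_i^y+S_i^yS_i^x$, admitting the alternating single-site quantities of (iv) — whose shift-sum closes on the periodic ring only for even $N$. The $1$-local part of the commutator is $D\sum_i[(S_i^z)^2,r_i]$; it vanishes automatically on the quadratic staggered pieces, but on the linear piece it reduces to $a_x=a_y=0$ unless $D=0$. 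This produces $M_z$ always and $M_x,M_y$ exactly when $D=0$, completing (ii) and (iii), and together with (i) and (iv) yields the full list. The two special zero-parameter points and the uniaxial-anisotropy case are handled by refinements analogous to those in Appx.~\ref{sec:singular_case} and Sec.~\ref{sec:necessary}.

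The main obstacle is the middle step: the two-site spin-$1$ operator space is $81$-dimensional, so the $3$-local and $2$-local consistency relations form a large linear system. What makes it tractable is the combination of the $U(1)$-charge grading, which block-diagonalizes the problem, with the special basis of Sec.~\ref{sec:strategy}, under which $\mathrm{ad}_{h_i}$ acts sparsely enough that the bond-to-bond recursion can be resolved in closed form; one must also keep track of the parameter case splits ($D=0$ versus $D\neq0$, $J_1=0$ versus $J_1\neq0$) and, on the $J_1=0$ branch, of the parity obstruction that confines the quantities in (iv) to even $N$.
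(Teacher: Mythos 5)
Your overall route coincides with the paper's: Theorem~\ref{thm:main} disposes of $3\le k\le N/2$; the Step-1 machinery applied at $k=2$ shows that the genuinely two-site part of $Q$ is a multiple of that of $H$ (this is exactly Subsec.~\ref{subsec45}); and the residual $1$-local quantity is pinned down by bond-to-bond recursions plus the periodicity of the ring (Subsec.~\ref{subsec46}). Your use of the $U(1)$ grading, the even-$N$ parity obstruction, and the elimination of $M_x,M_y$ via the $1$-local output of the anisotropy term all match the paper's treatment.

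There is, however, a concrete error in your $J_1=0$ branch. The pure biquadratic interaction commutes with the bond differences of \emph{five} independent quadratic single-site operators, not three: besides $(S_i^z)^2$, $(S_i^x)^2-(S_i^y)^2$ and $S_i^xS_i^y+S_i^yS_i^x$ you must also include $\{S_i^z,S_i^x\}$ and $\{S_i^z,S_i^y\}$ --- in the paper's basis, the staggered sums of $F_{\pm1}$ as well as of $F_0$ and $F_{\pm2}$, cf.\ Eqs.~\eqref{eq:stag_f0}--\eqref{eq:stag_fppmm}. Your enumeration omits the $F_{\pm1}$-type candidates, and your claim that the $1$-local condition $D\sum_i[(S_i^z)^2,r_i]=0$ ``vanishes automatically on the quadratic staggered pieces'' is false for them: from Table~\ref{tbl:CommutatorTable}, $[\fz,\fpm]=\pm3\epm\neq0$, so the staggered $\{S^z,S^x\}$- and $\{S^z,S^y\}$-type sums are eliminated \emph{only} because $D\neq0$ (which is guaranteed here since $J_1=0,\ D=0$ is excluded as integrable); they are genuine conserved quantities when $D=0$. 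As written, your argument either drops these candidates silently or, if the enumeration were completed, would wrongly conclude they survive the anisotropy check --- in which case item~(iv) of the theorem would not be an exhaustive list. The paper handles this explicitly by the analogue of Eq.~\eqref{eq:no_e+}. The fix is short, but it is a necessary step, not an automatic one.
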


\section{Proof strategy}\label{sec:strategy}
{%
\renewcommand{\em}{{E_{-1}}}
\subsection{Basis}
Before going to the outline of the proof, we introduce a basis of operators, which play a crucial role in our proof of nonintegrability.

Previous studies on nonintegrability proofs~\cite{Shiraishi2019,Chiba2024,Park2025graph,Shiraishi2024,Yamaguchi2024}, which deal with spin-1/2 chains, exploited Pauli matrices $\{ \sigma^x, \sigma^y, \sigma^z, I \}$ as a basis of $2 \times 2$ matrices.
Since the focus of this paper is on spin-1 systems, 
we need a basis of $3 \times 3$ matrices. 
There are several popular bases of $3 \times 3$ matrices, such as the Gell-Mann matrices~\cite{Gellmann1962} and the generalized Pauli matrices~\cite{Sylvester1882, Weyl1927}.
Here, however, we introduce an alternative basis $\{I,E_{0},E_{\pm 1},F_{0},F_{\pm 1},F_{\pm 2}\}$ as
\eq{
\label{eq:basis}
E_0 &= \begin{pmatrix}1&0&0\\0&0&0\\0&0&-1\end{pmatrix} \quad \left(=S^z\right)~,\nx
E_{+1} &= \begin{pmatrix}0&1&0\\0&0&1\\0&0&0\end{pmatrix} \quad \left(=\frac{S^x+\im S^y}{\sqrt{2}}\right)~,\nx
F_0 &= \begin{pmatrix}1&0&0\\0&-2&0\\0&0&1\end{pmatrix} \quad \left(=-(S^x)^2-(S^y)^2+2(S^z)^2\right)~,\nx
F_{+1} &= \begin{pmatrix}0&1&0\\0&0&-1\\0&0&0\end{pmatrix} \quad \left(=\left\{ S^z, \frac{S^x+\im S^y}{\sqrt{2}} \right\}\right)~,\nx
F_{+2} &= \begin{pmatrix}0&0&1\\0&0&0\\0&0&0\end{pmatrix} \quad \left(=\left(\frac{S^x+\im S^y}{\sqrt{2}} \right)^2\right)~,\nx
E_{-1} &= (E_{+1})\dg ~,\quad F_{-1} = (F_{+1})\dg ~,\quad F_{-2} = (F_{+2})\dg ~,
}
which are of the same form as the noncommutative version of spherical harmonic functions~\cite{Madore1992} up to constant factors.
It can be verified that the above matrices form a complete set, i.e.,
any $3 \times 3$ matrix can be expressed as a complex linear combination of these matrices.

\begin{table*}[t!]
\centering
 \caption{\label{tbl:CommutatorTable} The commutators $[a, b]$ where $a$ and $b$ are elements of our operator basis~\eqref{eq:basis}. 
  Each of these commutators is proportional to a single element of this operator basis 
  and satisfies Eq.~\eqref{eq:commutator_property}. 
 }
\def\arraystretch{1.5}
\begin{tabular}{|c||c|c|c|c|c|c|c|c|} \hline
   \diagbox[dir=NW]{a}{b} & $\ep$ & $\ez$ & $\em$ & $\fpp$ & $\fp$ & $\fz$ & $\fm$ & $\fmm$ \\ \hline\hline
    $\ep$&$0$&$-\ep$&$+\ez$&$0$&$-2\fpp$&$-3\fp$&$+\fz$&$+\fm$\\ \hline
    $\ez$&$+\ep$&$0$&$-\em$&$+2\fpp$&$+\fp$&$0$&$-\fm$&$-2\fmm$\\ \hline
    $\em$&$-\ez$&$+\em$&$0$&$-\fp$&$-\fz$&$+3\fm$&$+2\fmm$&$0$\\ \hline
    $\fpp$&$0$&$-2\fpp$&$+\fp$&$0$&$0$&$0$&$-\ep$&$+\ez$\\\hline
    $\fp$&$+2\fpp$&$-\fp$&$+\fz$&$0$&$0$&$-3\ep$&$+\ez$&$-\em$\\\hline
    $\fz$&$+3\fp$&$0$&$-3\fm$&$0$&$+3\ep$&$0$&$-3\em$&$0$\\\hline
    $\fm$&$-\fz$&$+\fm$&$-2\fmm$&$+\ep$&$-\ez$&$+3\em$&$0$&$0$\\\hline
    $\fmm$&$-\fm$&$+2\fmm$&$0$&$-\ez$&$+\em$&$0$&$0$&$0$\\\hline
  \end{tabular}
\end{table*}

This basis is more suitable than other bases for proving the nonintegrability of the BLBQ model 
because it satisfies the following properties:
(i) The commutator of any two base elements is proportional to some base element, unless they are commutative.
(ii) The BLBQ Hamiltonian is described in a relatively simple form.
\change{Note} that the Gell-Mann matrices do not satisfy the former and the generalized Pauli matrices do not satisfy the latter.
To briefly explain the former property, 
the commutation relations in the basis~\eqref{eq:basis} are described by
\eq{
[E_{m_1},E_{m_2}] &\propto E_{m_1+m_2} \nonumber \\
[E_{m_1},F_{m_2}] &\propto F_{m_1+m_2} \nonumber \\
[F_{m_1},F_{m_2}] &\propto E_{m_1+m_2}~,
\label{eq:commutator_property}}
where the right-hand side is zero if the symbol on it is not included in $\{I,E_{0},E_{\pm 1},F_{0},F_{\pm 1},F_{\pm 2}\}$.
Detailed commutation relations with constants of proportionality are displayed in Table~\ref{tbl:CommutatorTable}.

Using this basis, we rewrite the Hamiltonian~\eqref{eq:H_BLBQ} as
\eq{\label{eq:H_proof}
H&=\sumi \left(\sum_{m\in\{0,\pm 1\}}e_{m}E_{m,i}E_{-m,i+1} \right.
\nx
&\left. \qquad\qquad +\sum_{m\in\{0,\pm 1,\pm 2\}}f_{m}F_{m,i}F_{-m,i+1}+hF_{0,i} \right) ~,
}
where the coupling constants and field are
\eq{
e_{0}&=e_{\pm 1}=J_{1}-\frac{J_{2}}{2}~, \nx
    f_{0}&=\frac{J_{2}}{6}~,\quad f_{\pm 1}=\frac{J_{2}}{2}~,\quad f_{\pm 2}=J_{2}~,\nx
    h &= \frac{D}{3},
    \label{eq:couple_const_BLBQ}
}
with constant energy shift.
After proving the nonintegrability with this parameter set in Sec.~\ref{sec:proof}, we will treat in Sec.~\ref{sec:necessary} an extended model where coupling constants $\{e_m\}$ and $\{f_m\}$ take general values.

\subsection{Proof outline}
The idea of the proof of nonintegrability, introduced by Shiraishi's paper~\cite{Shiraishi2019}, is simple.
The nonintegrability 
(i.e.,  the absence of local conserved quantity) is proved by the following procedure:
First, we expand general local quantity $Q$ in a suitable basis and identify $Q$ with its expansion coefficient $\{q_\bullet\}$.
Then, we solve equations of $\{q_\bullet\}$ corresponding to conservation  condition $[Q,H]=0$ and prove the absence of (nontrivial) solutions.

As a basis of local quantities on spin-$1$ chain,
we take a subset of the basis of all quantities on spin-$1$ chain \eq{\bigotimes_{i=1}^N \{I_i, E_{0,i}, E_{\pm 1,i}, F_{0,i}, F_{\pm 1,i}, F_{\pm 2,i}\}\label{eq:chain_basis}}
with finitely restricted length of nontrivial consecutive support.
In particular, the basis of $k$-local quantity is $l$-local operators with $l\leq k$ in the basis~\eqref{eq:chain_basis}.
Any $k$-local quantity can be described by
\eq{
Q = \sum_{l=0}^k \sum_{\boldA^l_i} q_{\boldA^l_i} \boldA^l_i
\label{eq:Q}~,
}
where $\boldA^l_i$ represents an operator whose nontrivial consecutive support is $l$ and leftmost site is $i$.
Examples of $\boldA^3_i$ are $(\ez \ez \fz)_i := E_{0,i} E_{0,i+1} F_{0,i+2}$ and $(\fpp I \em)_i$.

Similarly, we can also expand $[Q,H]$ as
\eq{
[Q,H] = \sum_{l=0}^{k+1} \sum_{\boldA^l_i} r_{\boldB^l_i} \boldB^l_i~.
}
Here, we use the fact that the commutator of $k$-local quantity $Q$ and 2-local quantity $H$ is an at-most-$(k+1)$-local quantity.
Notice that each $r_{\boldB^l_i}$ is a linear combination of $\{q_{\boldA^l_i}\}$, and that the conservation condition $[Q,H]=0$ means $r_{\boldB^l_i}=0$ for all $\boldB^l_i$.
Therefore, we obtain a system of linear equations of $\{q_{\boldA^l_i}\}$.
The goal of nonintegrability proof is to prove the absence of nontrivial solutions to this system of equations.

We prove the absence of $k$-local conserved quantities with general $k \in [3,N/2]$ by the following two steps: 
In Step 1, we focus on the conditions$\{r_{\bkpi} = 0\}$ and derive that most of $\boldA^k_i$ have zero coefficients and that the remaining coefficients are proportional to each other.
In Step 2, we turn to the conditions $\{ q_{\boldB_j^k} = 0\}$ and derive that the remaining $\boldA^k_i$'s also have zero coefficients.
\change{This contradicts the assumption that $Q$ is a $k$-local quantity and proves the absence of $k$-local conserved quantities.}

\section{Proof}\label{sec:proof}
\begin{table*}[t!]
\centering
\def\arraystretch{1.5}
    \centering
    \caption{
    Table indexing the integrability/nonintegrability proofs of the BLBQ model.}
    \begin{tabular}{wc{5em}wc{5em}wc{5em}wc{10em}wc{10em}}\hline
        \multicolumn{3}{c}{Conditions} & Integrability & Proof \\ \hline
        \multirow{7}{*}{$J_2 \neq 0$} &\multicolumn{2}{c}{$J_1/J_2 \notin \{0,\pm 1, 1/2\}$} & Nonintegrable & Sec.~\ref{sec:proof} (here) \\ \cline{2-5}
        & \multicolumn{2}{c}{$J_1/J_2=1/2$} & Nonintegrable & Appendix~\ref{sec:singular_case}\\ \cline{2-5}
        & \multirow{2}{*}{$J_1=0$} &$D=0$ & Integrable & Refs.~\cite{Parkinson1987,Klumper1989,Barber1989}\\ \cline{3-5}
        & & $D\neq 0$ & Nonintegrable & Sec.~\ref{sec:necessary}\\ \cline{2-5}
        & \multicolumn{2}{c}{$J_1=J_2$} & Integrable & Refs.~\cite{Uimin1970,Lai1974,Sutherland1975,Schmitt1996}\\ \cline{2-5}
        & \multirow{2}{*}{$J_1=-J_2$} &$D=0$ & Integrable & Refs.~\cite{Takhatajan1982,Babujian1982} \\ \cline{3-5}
        & & $D\neq 0$ & Nonintegrable & Sec.~\ref{sec:necessary}\\ \hline
        \multirow{2}{*}{$J_2 = 0$} & \multicolumn{2}{c}{$J_1 = 0$} & Integrable & (Trivial) \\ \cline{2-5}
         & \multicolumn{2}{c}{$J_1 \neq 0$} & Nonintegrable & Appendix~\ref{sec:singular_case} \\ \hline
    \end{tabular}
    \label{tbl:where_proof}
\end{table*}

We prove the nonintegrability of the BLBQ  
[\eqref{eq:H_proof} and~\eqref{eq:couple_const_BLBQ}], 
except for known integrable systems, i.e., ones that satisfies Eq.~\eqref{eq:except}. The guide to the proof is given in Table~\ref{tbl:where_proof}. In this section, we deal with a generic case where $J_2 \neq 0$ and $J_1/J_2 \notin \{0,\pm1,1/2\}$ hold. 
In this case, all the coupling constants given in Eq.~\eqref{eq:couple_const_BLBQ} have nonzero values, as well as the nonintegrability holds regardless of the value of the magnetic field $D$.
The proof of the case where some coupling constants are zero is given in Appendix~\ref{sec:singular_case};
The proof of the case where integrability and nonintegrability switch with the value of $D$ is given in Sec.~\ref{sec:necessary}.

The structure of this section is as follows: In Secs.~\ref{subsec41} and \ref{subsec42}, we give a proof of the absence of 3-local conserved quantities, in order to grasp a picture of our proof.
In Secs.~\ref{subsec43} and \ref{subsec44}, we prove the absence of $k$-local conserved quantities for general $k \in [3,N/2]$, which completes the proof of Theorem~\ref{thm:main} for the generic case. 
In Secs.~\ref{subsec45} and \ref{subsec46}, we conduct an analysis of 2-local and 1-local conserved quantities, respectively, which corresponds to Theorem~\ref{thm:1and2}.

\subsection{Step 1 (analysis of $r_{{\bf B}^{k+1}_i} = 0$) for $k=3$ case}\label{subsec41}
Using the fact that coefficients of the 4-local operators appearing in the commutator are zero: $\{ r_{\Bkpi} = 0 \}$, we show that most of the coefficients $ q_{\Aki} $, 
which correspond to the operators with the largest support, 
are zero.
We also prove that the remaining coefficients are all proportional 
and therefore there is only 1 degree of freedom.

For example, consider the commutator between 
$\Aki = E_{0,i} E_{0,i+1} E_{+1,i+2}$ and $E_{-1,i+2} E_{+1,i+3}$, 
which are the terms in the conserved quantity $Q$ and the Hamiltonian $H$, respectively. 
Since these two operators share only the ($i+2$)th site, 
this commutator can be calculated as
\eq{
[E_{0,i} E_{0,i+1} E_{+1,i+2}, E_{-1,i+2} E_{+1,i+3}] \nx= + E_{0,i} E_{0,i+1} E_{0,i+2} E_{+1,i+3}~,\label{eq:example_3local}
}
where we used $[\ep,\em]=\ez$ displayed in Table~\ref{tbl:CommutatorTable}. 
We represent this relation as
\eq{
\bmat{
 & \ez & \ez & \ep \\
 &     &     & \em & \ep \\ \hline
+& \ez & \ez & \ez & \ep
}~,\label{eq:column_expression}
}
which we refer to as {\itshape a column expression of a commutator}. 
In a column expression, the first row represents a term of $Q$, 
the second row represents a term of the Hamiltonian $H$,
and the last row represents the commutator of these two terms.
The horizontal line depicts commutation operation. 
These three operators are placed to reflect the position of the sites 
on which they act. 
The column expression~\eqref{eq:column_expression} represents the same information 
as Eq.~\eqref{eq:example_3local}, 
except that the site number is omitted in Eq.~\eqref{eq:column_expression}.
Our aim is to obtain an equation of $\{q_\Aki\}$ that corresponds to $r_\Bkpi=0$.
Therefore, we need to search for other terms in $Q$ and $H$
whose commutators yield the output $\Bkpi = (\ez \ez \ez \ep)_i := E_{0,i} E_{0,i+1} E_{0,i+2} E_{+1,i+3}$. 
Fortunately, we find that no such terms exist except in Eq.~\eqref{eq:column_expression}.
This directly means that
\eq{
\cem q_{(\ez \ez \ep)_i} = + r_{(\ez \ez \ez \ep)_i} = 0\label{eq:coefficient_ZZE}
}
holds for all $i$.
Since we assume $\cem\neq0$, 
we know from Eq.~\eqref{eq:coefficient_ZZE} that the coefficient of $\Aki = (\ez \ez \ep)_i$ in $Q$ is zero for all $i$. 

In general, our task is finding out all the commutators that produce some fixed $\Bkpi$ 
and obtaining an equation of coefficients $\{q_\Aki\}$. 
The key observation is that for any $\Bkpi$, 
there are at most two commutators that generate it, 
consisting of a $3$-local term in $Q$ and a $2$-local term in $H$. 
The following two are only the commutators that may produce a 4-local operator $\Bkpi = (\Bone \Btwo \Bthree \Bfour)_i$:
\eq{
\bmat{
\Bone & \Btwo & ? \\
  &     & \Bfour^\dagger & \Bfour \\ \hline
\Bone & \Btwo & \Bthree & \Bfour
}\qquad
\bmat{
& ? & \Bthree & \Bfour\\
  \Bone & \Bone^\dagger     \\ \hline
\Bone & \Btwo & \Bthree & \Bfour
}~,\label{eq:two_column}
}
where we use the fact that 
all interaction terms in our Hamiltonian~\eqref{eq:H_proof}
are written as $C_i C^\dagger_{i+1}$ for some operator basis element $C$. 
Thus, 
at most one commutator corresponds to each of the left and right forms in Eq.~\eqref{eq:two_column}.

If only one commutator generates $\Bkpi$, 
we can immediately conclude that $\Aki$ has zero coefficients, as in Eq.~\eqref{eq:coefficient_ZZE}. 
The following proposition provides the sufficient conditions for such a case.
\begin{prop}
\label{prop:almost_dp_3}
Assume $Q$ to be a 3-local conserved quantity and include $\Aki = (\Aone \Atwo \Athree)_i$ with coefficient $q_\Aki$.
If there is no operator basis element $C$ satisfying $\Atwo \propto [\Aone\dg, C]$, 
then $q_\Aki=0$ holds for any $i$.
\end{prop}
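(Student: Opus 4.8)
The plan is to single out one $4$-local operator $\Bkpi$ occurring in $[Q,H]$ whose coefficient $r_{\Bkpi}$ is, up to a nonzero factor, exactly $q_{\Aki}$; since the conservation law $[Q,H]=0$ forces $r_{\Bkpi}=0$, this yields $q_{\Aki}=0$ for every $i$.

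First I would build $\Bkpi$ by extending $\Aki$ to the right. Since $\Aki=(\Aone\Atwo\Athree)_i$ is a genuine $3$-local operator, $\Aone$ and $\Athree$ are non-identity basis elements, and because $\Athree$ is not a multiple of the identity it is not central, so Table~\ref{tbl:CommutatorTable} provides a basis element $C$ with $[\Athree,C]\neq0$. Writing $[\Athree,C]=\lambda\Bthree$ with $\lambda\neq0$ and $\Bthree$ a non-identity basis element, and setting $\Bone:=\Aone$, $\Btwo:=\Atwo$, $\Bfour:=C^\dagger$, I take $\Bkpi:=(\Bone\ \Btwo\ \Bthree\ \Bfour)_i$. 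Because $\Bone\neq I$ and $\Bfour\neq I$, this is indeed a $4$-local operator with leftmost site $i$; and since we are in the generic regime, all coupling constants in Eq.~\eqref{eq:couple_const_BLBQ} are nonzero, so the interaction $C_{i+2}C^\dagger_{i+3}$ appears in $H$ with nonzero weight.

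Next I would determine all commutators of a term of $Q$ with a term of $H$ that produce $\Bkpi$. By Eq.~\eqref{eq:two_column} there are at most the ``left'' form — a $Q$-term $(\Bone\ \Btwo\ ?)_i$ against the $H$-term $C_{i+2}C^\dagger_{i+3}$, requiring $[?,C]\propto\Bthree$ — and the ``right'' form — a $Q$-term $(?\ \Bthree\ \Bfour)_{i+1}$ against the $H$-interaction $\Aone\otimes\Aone^\dagger$ on sites $i,i+1$, requiring $[\Aone^\dagger,?]\propto\Btwo=\Atwo$. In the left form the commutation rules~\eqref{eq:commutator_property} fix both the type ($E$ or $F$) and the charge of $?$, so $?$ is a unique basis element; since $\Athree$ satisfies $[\Athree,C]=\lambda\Bthree$, that element is $\Athree$, and the corresponding commutator $[\Aki,C_{i+2}C^\dagger_{i+3}]$ contributes a nonzero multiple of $q_{\Aki}$ to $r_{\Bkpi}$. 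In the right form, a basis element $?$ with $\Atwo\propto[\Aone^\dagger,?]$ is exactly what the proposition's hypothesis forbids, so this form contributes nothing. (If $\Atwo=I$ the hypothesis is automatic and the right form is still empty, since a nonzero commutator of basis elements is traceless and hence never proportional to $I$.) Thus $r_{\Bkpi}$ equals a nonzero constant times $q_{\Aki}$, and $r_{\Bkpi}=0$ gives $q_{\Aki}=0$.

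The bookkeeping — listing the two forms and reading off the contributions — is essentially handed to us by Eq.~\eqref{eq:two_column}, so the part needing genuine care in the write-up is the uniqueness claim for the ``$?$'' slot in each form: with the output $\Bkpi$ and the relevant $H$-term fixed, the left-form factor is forced to be $\Athree$, while the right-form factor cannot exist under the hypothesis. This is exactly where the special structure of the basis~\eqref{eq:basis} is indispensable: the relations $[E,E]\propto E$, $[E,F]\propto F$, $[F,F]\propto E$ together with additive charge labels are precisely what pin down the type and charge of the unknown factor from the remaining data.
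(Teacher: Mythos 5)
Your proposal is correct and follows essentially the same route as the paper's proof: fix a noncommuting partner $C$ for $\Athree$, consider the $4$-local output $\Aone\Atwo[\Athree,C]C^\dagger$, observe via Eq.~\eqref{eq:two_column} that only the ``left'' and ``right'' forms could contribute, and note that the hypothesis kills the right form so that $r_{\Bkpi}=0$ reduces to a single nonzero multiple of $q_{\Aki}$. Your added remarks on the uniqueness of the ``$?$'' slot (type and charge being pinned down by Eq.~\eqref{eq:commutator_property}) and on the $\Atwo=I$ case are consistent with, and slightly more explicit than, the paper's write-up.
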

This proposition means that 
a candidate $3$-local conserved quantity consists only of terms 
with the form of $(\Aone [\Aone\dg, C] \Athree)_i$ 
and $k(\leq2)$-local terms. 
\begin{proof}
We fix the site $i$ and omit writing it in the following.
According to Table~\ref{tbl:CommutatorTable}, 
for any element $\Athree$ of the operator basis, 
there is at least one element of an operator basis $D$ that does not commute with $\Athree$. 
As discussed above, 
there are at most two commutators that provide $\Bkp \propto \Aone \Atwo [\Athree,\tilde{C}] {\tilde{C}}^\dagger$. 
These commutators can be written in the following form:
\eq{
\def\arraystretch{1.3}
\bmat{
 & \Aone & \Atwo & \Athree \\
 &     &     & \tilde{C} & {\tilde{C}}^\dagger \\ \hline
 & \Aone & \Atwo & [\Athree,{\tilde{C}}] & {\tilde{C}}^\dagger
}\qquad
\bmat{
 &         & ?       & \Bthree & {\tilde{C}}^\dagger \\
 & \Aone & \Aone^\dagger \\ \hline
 & \Aone & \Atwo & \Bthree & {\tilde{C}}^\dagger
}~.
\def\arraystretch{1}
}
By the assumption of the proposition, however, there is no operator corresponding to the right form. 
Therefore, we have the conclusion: $q_{(\Aone \Atwo \Athree)_i} = 0$.
\end{proof}

We introduce {\itshape a column expression of an operator} as 
$A_i([B , C])_{i+1}D_{i+2} = \bmatc{\hline A & B &\\& C&D \\\hline}{|ccc|}_{~i}$. 
For example, we have $\bmatc{\hline \fz \\ \em \\\hline}{|c|} = -3\fm$
and $\bmatc{\hline \fz & \fz &\\ & \em & \ep \\\hline}{|ccc|} = -3\fz \fm \ep$. 
We also use the same symbol for the $k\geq4$-local operators. 
Operators of this form coincide with the elements of the local quantity basis, up to constant factors.
For notational simplicity, we will also treat these operators as elements of the local quantity basis. 
Here, the expansion coefficients are scaled so that the following relation holds:
\eq{
q_{\tilde{\boldA}^k_i} \tilde{\boldA}^k_i = q_\aki \aki~. 
}
For example, we have $q_{(-3 \fz \fm \ep)_i} := -\frac{1}{3} q_{(\fz \fm \ep)_i}$. 
By using this expression, 
the conclusion of Proposition~\ref{prop:almost_dp_3} can be understood 
that the candidates for $\Ak$ with nonzero coefficients are reduced to the following form:
\eq{
\Ak =
\bmatc{
\hline
\Aone & \Aone \dg& \\
& C & \Athree\dg \\ \hline
}{|ccc|}~.\label{eq:almost doubling_3}
}
The following proposition restricts the candidates for $\Ak$ with nonzero coefficients to the case $C = \Athree$.
\begin{prop}
Assume $Q$ to be a $3$-local conserved quantity and include $\Aki$ described by Eq.~\eqref{eq:almost doubling_3}.
Unless 
$C=\Athree$, then $q_\Aki=0$ holds. 
\label{prop:dp_3}
\end{prop}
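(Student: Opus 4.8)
\emph{Approach.} The plan is to reuse the engine of the proof of Prop.~\ref{prop:almost_dp_3}: pick a $4$-local operator $\boldB^4$ to which $\Aki$ contributes, write the conservation equation $r_{\boldB^4_i}=0$, invoke the fact (Eq.~\eqref{eq:two_column}) that $\boldB^4$ receives at most two contributions from commutators of a $3$-local term of $Q$ with a $2$-local term of $H$, and choose $\boldB^4$ so that the \emph{other} contribution is forced to vanish by Prop.~\ref{prop:almost_dp_3}. Then $r_{\boldB^4_i}=0$ collapses to $(\text{nonzero constant})\,q_{\Aki}=0$, i.e.\ $q_{\Aki}=0$. Throughout, I would use that in the regime of this section every coupling $e_m,f_m$ of Eq.~\eqref{eq:couple_const_BLBQ} is nonzero, so $G_j(G\dg)_{j+1}$ is proportional to a term of $H$ for every nonidentity basis element $G$.

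\emph{Main case.} Writing $\Aki=(\Aone\,[\Aone\dg,C]\,\Athree\dg)_i$ as in Eq.~\eqref{eq:almost doubling_3}, I would commute it with the interaction term $G_{i+2}(G\dg)_{i+3}$ of $H$, where $G$ is any nonidentity basis element with $[\Athree\dg,G]\neq0$. This produces $\boldB^4=(\Aone\,[\Aone\dg,C]\,[\Athree\dg,G]\,G\dg)_i$. In Eq.~\eqref{eq:two_column}, the left-form contribution to this $\boldB^4$ is exactly $\Aki$; the right-form contribution comes from a $3$-local term of $Q$ on sites $i+1,i+2,i+3$ whose first operator $X$ must satisfy $[X,\Aone\dg]\propto[\Aone\dg,C]$, and since the weight and the type ($E$ or $F$) of a nonzero commutator $[X,\Aone\dg]$ determine $X$ uniquely among the eight nonidentity basis elements, necessarily $X=C$; so the other contribution is from $(C\,[\Athree\dg,G]\,G\dg)_{i+1}$. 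By Prop.~\ref{prop:almost_dp_3}, the coefficient of that term vanishes unless $[\Athree\dg,G]\propto[C\dg,C'']$ for some basis element $C''$, i.e.\ unless $[\Athree\dg,G]$ is a nonzero commutator of $C\dg$. As $G$ runs over the basis, $[\Athree\dg,G]$ runs over \emph{all} nonzero commutators of $\Athree\dg$, so $q_{\Aki}=0$ as soon as $\Athree\dg$ admits a nonzero commutator that is not a commutator of $C\dg$. Inspecting Table~\ref{tbl:CommutatorTable}, this settles every pair with $C\neq\Athree$ except the five pairs $(\Athree,C)\in\{(F_{-2},E_{-1}),(F_{-2},F_{-1}),(F_{+2},E_{+1}),(F_{+2},F_{+1}),(F_0,E_0)\}$, all of which have $\Athree\in\{F_0,F_{\pm2}\}$.

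\emph{Residual pairs.} For each of these five, I would iterate the reduction once more: by Prop.~\ref{prop:almost_dp_3} the leftover term $(C\,[\Athree\dg,G]\,G\dg)_{i+1}$ is itself of the form of Eq.~\eqref{eq:almost doubling_3} with first operator $C$, and commuting it with a suitably chosen interaction term of $H$ yields a second $4$-local operator for which, again, one of the at-most-two contributions fails to have the form of Eq.~\eqref{eq:almost doubling_3}. A short check against Table~\ref{tbl:CommutatorTable} shows such a choice exists for each of these five $(\Athree,C)$; the corresponding conservation equation then forces the coefficient of $(C\,[\Athree\dg,G]\,G\dg)_{i+1}$ to zero, and substituting back into the first equation gives $q_{\Aki}=0$. (Symmetrically one may commute $\Aki$ with an interaction term on sites $i-1,i$ and run the left--right mirror of the argument, using the mirror of Prop.~\ref{prop:almost_dp_3}, which holds by the same proof.) When $\Athree$ is Hermitian ($\Athree\in\{E_0,F_0\}$) the self-commutator $[\Athree\dg,\Athree]$ vanishes and a few of the extensions degenerate; these sub-cases are disposed of by the same kind of short computation.

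\emph{Main obstacle.} The only genuine work is this finite case analysis for the residual pairs: verifying from Table~\ref{tbl:CommutatorTable} that in each one the chained relation really does terminate at a term excluded by Prop.~\ref{prop:almost_dp_3}, together with the handful of Hermitian degeneracies. Everything else is a routine reuse of the principle that each $\boldB^4$ receives at most two contributions, which already powered the proof of Prop.~\ref{prop:almost_dp_3}.
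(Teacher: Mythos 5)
Your proposal is correct, and it runs on the same engine as the paper's proof: fix an output $\boldB^4$ obtained by commuting $\Aki$ with an interaction term on sites $i+2,i+3$, use the at-most-two-contributions structure of Eq.~\eqref{eq:two_column}, and kill the competing right-form contribution via Prop.~\ref{prop:almost_dp_3}. What differs is the case decomposition. The paper first settles $\Athree=E_0$ outright (relating $q_{\Aki}$ to the coefficients of $\fz\fpp\fmm$ or $CCC\dg$, both excluded by Prop.~\ref{prop:almost_dp_3}) and then funnels every other $\Athree$ into that case by one or two chaining steps. You instead give a uniform one-shot criterion --- $q_{\Aki}=0$ whenever some nonzero commutator $[\Athree\dg,G]$ lies outside the image of $\mathrm{ad}_{C\dg}$ --- and your list of the five exceptional containment pairs $(\Athree,C)\in\{(\fmm,E_{-1}),(\fmm,\fm),(\fpp,\ep),(\fpp,\fp),(\fz,\ez)\}$ is exactly right (I recomputed the images of all eight adjoint maps from Table~\ref{tbl:CommutatorTable}). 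Your identification of the unique right-form contributor $X=C$ via weight and $E/F$-type is also sound. The residual iteration closes as claimed: for instance, for $(\fz,\ez)$ the choice $G=\ep$ sends the leftover term to $\ez\fp E_{-1}$, whose pair $(\ep,\fp)$ is non-residual, and analogous choices ($G=\fm$ for the $\fpp$ cases, $G=\ep$ for the $\fmm$ cases) work for the other four, each landing on a non-residual pair so that no further iteration is needed. The only soft spot is that these five finite verifications are asserted rather than displayed; they do go through, and your organization is arguably more systematic than the paper's, at the price of having to enumerate the containment pairs explicitly.
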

\begin{proof}
First, we prove the proposition for the case of $\Athree = E_0$. 
That is, we show that a coefficient of $\Ak = \bmatc{
\hline
\Aone & \Aone \dg& \\
& C & \ez \\ \hline
}{|ccc|}$ is zero if $C \neq E_0$. 
For the case of $C = F_0$, 
we consider an output $\Bkp = \bmatc{\hline \Aone & \Aone\dg & & \\ & \fz & \fpp & \fmm \\\hline}{|cccc|}_{~i}$ 
and obtain the following two commutators:
\eq{
\bmatc{
\cline{2-4}
 & \Ic{\Aone} & \Aone\dg & \cI{} \\
 & \Ic{     } & \fz & \cI{\ez} \\\cline{2-4}
 &     &     & \fpp & \fmm \\ \hline \\[-0.8em]
 \cline{2-5}
+2& \Ic{\Aone} & \Aone\dg &          & \cI{} \\
 & \Ic{}      & \fz    & \fpp & \cI{\fmm} \\
\cline{2-5}
}{ccccc} \quad
\bmatc{
\cline{3-5}
 &         & \Ic{\fz}   & \ez & \cI{} \\
 &         & \Ic{}   & \fpp & \cI{\fmm} \\ \cline{3-5}
 & \Aone & \Aone^\dagger \\ \hline \\[-0.8em]
\cline{2-5}
-& \Ic{\Aone} & \Aone\dg &          & \cI{} \\
 & \Ic{}      & \fz    & \fpp & \cI{\fmm} \\
\cline{2-5}
}{ccccc}.
\label{eq:two_comm}
}
Thus, by the condition $r_\Bkpi=0$, 
we have 
\eq{
&+2 q~\bmatc{
\hline
\Aone & \Aone \dg& \\
& \fz & \ez \\ \hline
}{|ccc|}_{~i}
-
q~\bmatc{
\hline
\fz & \ez & \\
& \fpp & \fmm \\ \hline
}{|ccc|}_{~i+1} \nx
&= r~\bmatc{\hline \Aone & \Aone\dg & & \\ & \fz & \fpp & \fmm \\\hline}{|cccc|}_{~i} = 0~.
}
Therefore, we find that 
the coefficient of 
$\Ak = \bmatc{
\hline
\Aone & \Aone \dg& \\
& \fz & \ez \\ \hline
}{|ccc|}$
is proportional to the coefficient of 
$\bmatc{
\hline
\fz & \ez & \\
& \fpp & \fmm \\ \hline
}{|ccc|}$:
\eq{
q~\bmatc{
\hline
\Aone & \Aone \dg& \\
& \fz & \ez \\ \hline
}{|ccc|}_{~i}
\propto
q~\bmatc{
\hline
\fz & \ez & \\
& \fpp & \fmm \\ \hline
}{|ccc|}_{~i+1}~,
\label{eq:not_doubling_3}
}
where the proportionality constant is nonzero,  
which is determined by the coupling constants of the Hamiltonian and the constant factor in the commutation relation 
(see Table~\ref{tbl:CommutatorTable}). 
On the other hand, 
the coefficient of 
$\bmatc{
\hline
\fz & \ez & \\
& \fpp & \fmm \\ \hline
}{|ccc|} \propto \fz  \fpp  \fmm$
is zero by Proposition~\ref{prop:almost_dp_3}. 
Therefore, we have shown that 
the coefficient $q_\Aki$ of 
$\Ak = \bmatc{
\hline
\Aone & \Aone \dg& \\
& \fz & \ez \\ \hline
}{|ccc|}$
is zero. 

Similarly, 
for the case of $C \neq \ez,\fz$, 
we have 
\eq{
q~\bmatc{
\hline
\Aone & \Aone \dg& \\
& C & \ez \\ \hline
}{|ccc|}_{~i}
\propto
q~\bmatc{
\hline
C & \ez & \\
& C & C\dg \\ \hline
}{|ccc|}_{~i+1}
= 0~.
}
Here, the equality follows from Proposition~\ref{prop:almost_dp_3} 
because $\Ak = \bmatc{
\hline
C & \ez & \\
& C & C\dg \\ \hline
}{|ccc|} \propto CCC\dg$ cannot be written in the form of Eq.~\eqref{eq:almost doubling_3}, i.e., 
$\Ak\not\propto \bmatc{
\hline
C& C\dg & \\ 
 & \tilde{C} & C\dg\\ \hline
}{|ccc|}$ for any $\tilde{C}$. Absence of such a $\tilde{C}$ can be confirmed in Table~\ref{tbl:CommutatorTable}. 
This completes the proof for the case of $\Athree = E_0$. 
In other words, we obtain
\eq{ q_{(\Aone \Atwo \ez)_i} = 0 \quad \text{except for }\Atwo \propto [\Aone\dg\,\ez]~.
 \label{eq:almost_dp_z_3}
 }

For general $\Athree$, 
we reduce the problem to the case of $\Athree = E_0$. 
If $\Athree \neq \ez, \fz$, 
by considering column relations similar to Eq.~\eqref{eq:two_comm},
we have
\eq{
q~\bmatc{
\hline
\Aone & \Aone \dg& \\
& C & \Athree\dg \\ \hline
}{|ccc|}_{~i}
\propto
q~\bmatc{
\hline
C & \Athree\dg & \\
& \ez & \ez \\ \hline
}{|ccc|}_{~i+1}~.
\label{eq:prf_of_doubling1}}
Because of $\bmatc{
\hline
C & \Athree\dg & \\
& \ez & \ez \\ \hline
}{|ccc|}
\propto C\Athree\dg\ez$ and Eq.~\eqref{eq:almost_dp_z_3}, 
the right hand side of Eq.~\eqref{eq:prf_of_doubling1} is zero unless $C=\Athree$. 

For the case of $\Athree = \fz$, we have
\eq{
q~\bmatc{
\hline
\Aone & \Aone \dg& \\
& C & \fz \\ \hline
}{|ccc|}_{~i}
\propto
q~\bmatc{
\hline
C & \fz & \\
& \ep & \em \\ \hline
}{|ccc|}_{~i+1}~.
}
From Proposition~\ref{prop:almost_dp_3}, for $q~\bmatc{
\hline
C & \fz & \\
& \ep & \em \\ \hline
}{|ccc|}_{~i+1}
$ to be nonzero, 
it is necessary that 
$
\bmatc{
\hline
C & \fz & \\
& \ep & \em \\ \hline
}{|ccc|}
=
\bmatc{
\hline
C & C\dg & \\
& \tilde{C} & \em \\ \hline
}{|ccc|}
$ holds for some $\tilde{C}$. 
By the proof of this proposition for the case of $\Athree=\em$, 
$q~\bmatc{
\hline
C & \fz & \\
& \ep & \em \\ \hline
}{|ccc|}_{~i+1}
$ is zero unless $\tilde{C}=\ep$. 
This condition $\tilde{C}=\ep$ is equivalent to $C (=C\dg) = \fz=\Athree$. 
This concludes the proof of Proposition~\ref{prop:dp_3}.  
\end{proof}
We have seen that 
the $3$-local operators that may have nonzero coefficients are restricted to the following form:
$
\Ak = \bmatc{
\hline
\Cone & \Cone\dg & \\
& \Ctwo & \Ctwo\dg \\ \hline
}{|ccc|}
$, 
which we call a {\it doubling operator}. 
Here $\Cone$ and $\Ctwo$ are arbitrary elements of the operator basis that are noncommutative.

As the last proposition of Step 1, 
we state that the coefficients of the three-support operators $\Aki$ have only 1 degree of freedom. 
\begin{prop}
Asuume $Q$ to be a $3$-local conserved quantity. 
The coefficients of all doubling operators are proportional to each other. 
Precisely, their coefficients can be written as
\eq{
q~\bmatc{
\hline
\Cone & \Cone\dg & \\
& \Ctwo & \Ctwo\dg \\ \hline
}{|ccc|}_{~i} = q_{k=3} c_{(1)} c_{(2)}~,
\label{eq:dp_factor_3}
}
by using some constant $q_{k=3}$ independent of $i$. 
Here, we denote the coupling constant of the interaction term $C_{(n),j} C_{(n),j+1}\dg$ as $c_{(n)}$.
\label{prop:dp_factor_3}
\end{prop}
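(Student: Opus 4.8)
The plan is to show that any two doubling operators' coefficients are linked by a chain of proportionality relations of the type already derived in the proof of Prop.~\ref{prop:dp_3}, and to track the proportionality constants so that the product form~\eqref{eq:dp_factor_3} emerges. The basic mechanism is this: given a doubling operator $\bmatc{\hline \Cone & \Cone\dg & \\ & \Ctwo & \Ctwo\dg \\ \hline}{|ccc|}_{~i}$, one picks an appropriate $4$-local output $\Bkp$ whose only two generating commutators involve this operator and a \emph{shifted} doubling operator $\bmatc{\hline \Ctwo & \Ctwo\dg & \\ & \Cthree & \Cthree\dg \\ \hline}{|ccc|}_{~i+1}$; the condition $r_{\Bkpi}=0$ then forces $q$ of the first to be a nonzero multiple of $q$ of the second. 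Iterating, I would establish that the coefficient of a doubling operator depends only on its \emph{left} factor $\Cone$ up to a fixed nonzero constant, i.e.\ $q\,\bmatc{\hline \Cone & \Cone\dg & \\ & \Ctwo & \Ctwo\dg \\ \hline}{|ccc|}_{~i} = \lambda_{\Cone}\, c_{(2)}$ for some $\lambda_{\Cone}$ independent of $i$ and of $\Ctwo$, where the factor $c_{(2)}$ appears because the commutator in the left column of Eq.~\eqref{eq:two_column} carries precisely the coupling constant of the $\Ctwo\Ctwo\dg$ interaction term.

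Next I would close the loop on the left factor. By the same construction applied to outputs of the form $\bmatc{\hline ? & \Bthree & \Bfour\\ \Cone & \Cone\dg \\ \hline}{|cccc|}$ (the right column of Eq.~\eqref{eq:two_column}), the coefficient of a doubling operator is also proportional, with the coupling constant $c_{(1)}$ of the $\Cone\Cone\dg$ term, to the coefficient of a doubling operator whose \emph{right} factor equals $\Cone$. Combining the ``left'' and ``right'' relations, every $\lambda_{\Cone}$ is tied to every other $\lambda_{\Cone'}$, so there is a single free constant; absorbing the coupling constants appropriately gives $q\,\bmatc{\hline \Cone & \Cone\dg & \\ & \Ctwo & \Ctwo\dg \\ \hline}{|ccc|}_{~i} = q_{k=3}\, c_{(1)} c_{(2)}$ as claimed. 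Translation invariance is not assumed a priori for $Q$, so I must also check that the proportionality constants are genuinely $i$-independent — this follows because the Hamiltonian is translation invariant, so the system of equations $r_{\Bkpi}=0$ has the same form at every $i$, and the chain of relations never mixes different $i$'s in a way that could introduce $i$-dependence beyond an overall shift.

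The main obstacle I anticipate is \textbf{connectivity of the graph of doubling operators}: I need to verify that, using only admissible commutator relations from Table~\ref{tbl:CommutatorTable}, one can actually pass between \emph{any} two doubling operators $(\Cone,\Ctwo)$ and $(\Cone',\Ctwo')$. Because $\Cone$ and $\Ctwo$ must be noncommuting, not every pair is a valid doubling operator, and the ``step'' relations only connect $(\Cone,\Ctwo)\leftrightarrow(\Ctwo,\Cthree)$ or $(\Cone,\Ctwo)\leftrightarrow(\Cfour,\Cone)$; one has to rule out the possibility that the admissibility constraint disconnects the graph into several components, each with its own free constant. I expect this to be a finite but slightly delicate case check over the eight basis elements, exploiting the $U(1)$-grading structure $[E_m,\cdot]$, $[F_m,\cdot]$ encoded in Eq.~\eqref{eq:commutator_property} — e.g.\ using $F_0$ or $E_0$ as a ``hub'' element that connects to many others. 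A secondary subtlety is confirming each time that the chosen output $\Bkp$ really has \emph{no third} generating commutator (so that the two-term relation is exact), which again reduces to reading off Table~\ref{tbl:CommutatorTable}.
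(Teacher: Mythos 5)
Your proposal follows essentially the same route as the paper's proof: two-term relations $c_{(3)}\,q_{(\Cone,\Ctwo)_i}=c_{(1)}\,q_{(\Ctwo,\Cthree)_{i+1}}$ obtained from $4$-local outputs with exactly two generating commutators, chained over overlapping doubling operators to tie all coefficients to a single constant and produce the product form $q_{k=3}c_{(1)}c_{(2)}$. The two caveats you flag are both benign: the "no third commutator" point is already settled by the earlier observation that any $4$-local output admits at most the two forms of Eq.~\eqref{eq:two_column}, and the connectivity of the noncommutativity graph on the eight basis elements is a quick finite check (e.g.\ $E_{\pm1}$ serve as hubs), which the paper likewise leaves implicit.
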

\begin{proof}
By similar arguments as before, 
the coefficients of doubling operators are proportional to each other as
\eq{
 q~\bmatc{
\hline
\Cone & \Cone\dg & \\
& \Ctwo & \Ctwo\dg \\ \hline
}{|ccc|}_{~i} 
&\propto q~\bmatc{
\hline
\Ctwo & \Ctwo\dg & \\
& \Cthree & \Cthree\dg \\ \hline
}{|ccc|}_{~i+1} \nx
&\propto q~\bmatc{
\hline
\Cthree & \Cthree\dg & \\
& \Cfour & \Cfour\dg \\ \hline
}{|ccc|}_{~i+2} \nx
&\propto \cdots~,
}
where the sequence $\{\Cthree, \Cfour, \dots\}$ is arbitrary 
as long as $C_{(n-1)}\dg$ and $C_{(n)}$ are noncommutative.
By taking $\{C_{(n)}\}$ appropriately,
we can show that the coefficients of any two doubling operators are proportional.

Next, we determine the proportionality constants of the coefficients.  
We consider two commutators with the following general forms:
\eq{
\bmatc{
\cline{2-4}
 & \Ic{\Cone} & \Cone\dg & \cI{} \\
 & \Ic{     } & \Ctwo & \cI{\Ctwo\dg} \\\cline{2-4}
 &     &     & \Cthree & \Cthree\dg \\ \hline \\[-0.8em] \cline{2-5}
+& \Ic{\Cone} & \Cone\dg &          & \cI{} \\
 & \Ic{}      & \Ctwo    & \Ctwo\dg & \cI{} \\
 & \Ic{}      &          & \Cthree  & \cI{\Cthree\dg}\\ \cline{2-5}
}{ccccc} \qquad
\bmatc{
\cline{3-5}
 &       & \Ic{\Ctwo} & \Ctwo\dg & \cI{} \\
 &       & \Ic{}      & \Cthree  & \cI{\Cthree\dg} \\\cline{3-5}
 & \Cone & \Cone\dg    &  &  \\ \hline \\[-0.8em] \cline{2-5}
-& \Ic{\Cone} & \Cone\dg &          & \cI{} \\
 & \Ic{}      & \Ctwo    & \Ctwo\dg & \cI{} \\
 & \Ic{}      &          & \Cthree  & \cI{\Cthree\dg}\\ \cline{2-5}
}{ccccc}~,
}
where we define 
\eq{
\bmatc{\hline \Cone & \Cone\dg & & \\ & \Ctwo & \Ctwo\dg & \\ & & \Cthree &\Cthree\dg \\ \hline}{|cccc|} 
:= \Cone \bmatc{\hline \Cone\dg \\ \Ctwo \\\hline}{|c|}~\bmatc{\hline \Ctwo\dg \\ \Cthree \\\hline}{|c|} \Cthree\dg~,\label{eq:general_column_expression}
} 
as a generalization of the column expression of operators. 
The above two forms yield
\eq{
c_{(3)} q~\bmatc{
\hline
\Cone & \Cone\dg & \\
& \Ctwo & \Ctwo\dg \\ \hline
}{|ccc|}_{~i}
-c_{(1)} q~\bmatc{
\hline
\Ctwo & \Ctwo\dg & \\
& \Cthree & \Cthree\dg \\ \hline
}{|ccc|}_{~i+1} = 0 ~, 
}
which is equivalent to Eq.~\eqref{eq:dp_factor_3}. 
\end{proof}

\subsection{Step 2 (analysis of $r_{{\bf B}^{k}_i} = 0$) for $k=3$ case}\label{subsec42}
So far, we have only focused on $4$-local terms of $[Q,H]$.
In this step, by focusing on $3$-local outputs, 
we eliminate the last 1 degree of freedom $q_{k=3}$ and show that all $q_{\Aki} $ are zero. 
In fact, in the case $k=3$, 
this process is completed by focusing only on $r_{(\ez \ez \fz)_i} = 0$. 
The set of commutators that can yield $\Bk = \ez \ez \fz$ are listed as follows:
\begin{widetext}
\eq{
\bmat{
 & ? & \fz \\
\ez   & \ez  \\ \hline
\ez & \ez & \fz
}\qquad
\bmat{
\ez & ? & \\
    & \fz & \fz \\ \hline
\ez & \ez & \fz
}\quad \left| \quad
\bmat{
? & \ez & \fz \\
\fz \\ \hline
\ez & \ez & \fz
}\qquad
\bmat{
\ez & ? & \fz \\
&\fz \\ \hline
\ez & \ez & \fz
}\qquad
\bmat{
\ez & \ez & ? \\
&&\fz \\ \hline
\ez & \ez & \fz
}\quad \right| \quad
\bmat{
? & ? & \fz \\
     ? & ? \\ \hline
\ez & \ez & \fz
}\qquad
\bmat{
\ez & ? & ? \\
    & ? & ? \\ \hline
\ez & \ez & \fz
}~.
\label{eq:zzw_q_3}
}
\end{widetext}
The left two in Eq.~\eqref{eq:zzw_q_3} are commutators of $2$-local terms in $Q$ and the interaction terms in $H$, generating $\ez \ez \fz$.
Such a form does not exist in fact, 
because $[C,\ez]$ and $[C,\fz]$ are not proportional to $\ez$ for any element $C$ in the operator basis. 
Next, the middle three forms in Eq.~\eqref{eq:zzw_q_3} represent 
commutators of $3$-local terms in $Q$ and on-site term $\fz$ in $H$. 
No such a form exists either, 
because $[C,\fz]$ are not proportional to $\ez$ or $\fz$ for any element $C$ in the operator basis. 
Therefore, all that yields nontrivial terms are 
the remaining right two forms in Eq.~\eqref{eq:zzw_q_3}, 
that is, commutators
of $3$-local terms in $Q$ and the interaction terms in $H$. 
These operators share two sites and 
therefore have a more complex commutator than before.
For example, we consider the commutator between 
$E_{+1,i} F_{-1,i+1} F_{-1,i+2}$ and $E_{-1,i} E_{+1,i+1}$. 
By using 
the anticommutator Table~\ref{tbl:AnticommutatorTable}
in Appendix~\ref{sec:multiplication_table}, 
we have
\eq{
&[E_{+1,i} F_{-1,i+1}, E_{-1,i} E_{+1,i+1}] \nx
&= \frac{1}{2}E_{0,i}\{F_{-1,i+1},E_{+1,i+1}\} 
- \frac{1}{2}\{E_{-1,i},E_{+1,i}\}F_{0,i+1} \nx
&=\frac{1}{2}E_{0,i}E_{0,i+1}
-\frac{1}{2}
\left(\frac{4}{3} I_i - \frac{1}{3}F_{0,i} \right)F_{0,i+1}\nx
&= \frac{1}{2} E_{0,i} E_{0,i+1} + \frac{1}{6} F_{0,i} F_{0,i+1} - \frac{2}{3} F_{0,i+1}~,
\label{eq:efee_s_3}
}
\change{where $\{\cdot,\cdot\}$ is the anticommutator. 
Here, we used the following equality:
\eq{
&[A\otimes B,C\otimes D]\nx
&\quad=\frac{1}{2}[A,C]\otimes\{B,D\}+\frac{1}{2}\{A,C\}\otimes [B,D].
}}
As can be seen from these equations, a commutator of operators that share more than one site may not necessarily be written with a single operator in our basis.
We use a double line in the column expression of the commutator, 
meaning that we only focus on one of the output operators (here, $\Bki = E_{0,i} E_{0,i+1} F_{0,i+2}$) as
\eq{
\bmatc{
&\ep & \fm & \fz \\
&\em & \ep & \\ \hline \hline
+1/2&\ez & \ez & \fz
}{cccc}~.}
We also combine this symbol with the column expression of an operator as
\eq{
\bmatc{
\cline{2-4}
&\Ic{\ep} & \em & \cI{} \\
&\Ic{}    & \fz & \cI{\fz} \\ \cline{2-4}
&\em & \ep & \\ \hline \hline
+3/2&\ez & \ez & \fz
}{cccc}~.
\label{eq:efee_g_3}
}

In the following, 
we list all contributions to $\ez \ez \fz$ that can be written in the two forms on the right of Eq.~\eqref{eq:zzw_q_3}. 
Noting that $3$-local terms $\Ak$ in $Q$ can be written in the form of doubling operator, 
these forms are either of the following:
\eq{
\bmat{
\hline
\Ic{C} & C\dg & \cI{} \\
\Ic{} & \fz & \cI{\fz} \\\hline
     D & D\dg \\ \hline\hline
\ez & \ez & \fz
}\qquad
\bmat{
\hline
\Ic{\ez} & \ez & \cI{} \\
\Ic{} & C & \cI{C\dg} \\\hline
    & D & D\dg \\ \hline\hline
\ez & \ez & \fz
}~,
\label{eq:cd_3}
}
where $C$ and $D$ are some elements of the operator basis. 
Analogous to Eq.~\eqref{eq:commutator_property}, 
the products of two operators 
$E_{m_1}E_{m_2}, E_{m_1}F_{m_2}, F_{m_1}E_{m_2}$, and $F_{m_1}F_{m_2}$ can be written by linear combinations of $E_{m_1+m_2}$ and $F_{m_1+m_2}$ 
where we consider $m_1,m_2\in\{0,\pm1,\pm2\}$~\footnote{
These relations can be seen from the fact that subscript $m$ represents the eigenvalue with respect to the adjoint action by $\ez$: $[\ez,\bullet]$.
}.
Thus, the candidate pairs of $C$ and $D$ that generate $\ez \ez \fz$ are 
$(C,D) = (E_{\pm 1} E_{\mp 1}), (E_{\pm 1} F_{\mp 1}), (F_{\pm 1} E_{\mp 1}), (F_{\pm 1} F_{\mp 1})$, and $ (F_{\pm 2} F_{\mp 2 })$. 
Following Eq.~\eqref{eq:efee_g_3}, 
we enumerate the contributions to $\ez \ez \fz$ of all these commutators as
\begin{widetext}
 \!\!\!\!\!\!\!\!
 \eq{
 \bmatc{
 \cline{2-4}
 &\Ic{\ep} & \em & \cI{} \\
 &\Ic{}    & \fz & \cI{\fz} \\ \cline{2-4}
 &\em & \ep & \\ \hline \hline
 +3/2&\ez & \ez & \fz
 }{cccc}
 \quad
 \bmatc{
 \cline{2-4}
 &\Ic{\ep} & \em & \cI{} \\
 &\Ic{}    & \fz & \cI{\fz} \\ \cline{2-4}
 &\fm & \fp & \\ \hline \hline
 -3/2&\ez & \ez & \fz
 }{cccc}
 \quad
 \bmatc{
 \cline{2-4}
 &\Ic{\fp} & \fm & \cI{} \\
 &\Ic{}    & \fz & \cI{\fz} \\ \cline{2-4}
 &\em & \ep & \\ \hline \hline
 -3/2&\ez & \ez & \fz
 }{cccc}
 \quad
 \bmatc{
 \cline{2-4}
 &\Ic{\fp} & \fm & \cI{} \\
 &\Ic{}    & \fz & \cI{\fz} \\ \cline{2-4}
 &\fm & \fp & \\ \hline \hline
 +3/2&\ez & \ez & \fz
 }{cccc}
 \quad
 \bmatc{
 \cline{2-4}
 &\Ic{\fpp} & \fmm & \cI{} \\
 &\Ic{}    & \fz & \cI{\fz} \\ \cline{2-4}
 &\fmm & \fpp & \\ \hline \hline
 0&\ez & \ez & \fz
 }{cccc} \nx \nx
 \!\!\!\!\!\!\!\!
 \bmatc{
 \cline{2-4}
 &\Ic{\em} & \ep & \cI{} \\
 &\Ic{}    & \fz & \cI{\fz} \\ \cline{2-4}
 &\ep & \em & \\ \hline \hline
 +3/2&\ez & \ez & \fz
 }{cccc}
 \quad
 \bmatc{
 \cline{2-4}
 &\Ic{\em} & \ep & \cI{} \\
 &\Ic{}    & \fz & \cI{\fz} \\ \cline{2-4}
 &\fp & \fm & \\ \hline \hline
 -3/2&\ez & \ez & \fz
 }{cccc}
 \quad
 \bmatc{
 \cline{2-4}
 &\Ic{\fm} & \fp & \cI{} \\
 &\Ic{}    & \fz & \cI{\fz} \\ \cline{2-4}
 &\ep & \em & \\ \hline \hline
 -3/2&\ez & \ez & \fz
 }{cccc}
 \quad
 \bmatc{
 \cline{2-4}
 &\Ic{\fm} & \fp & \cI{} \\
 &\Ic{}    & \fz & \cI{\fz} \\ \cline{2-4}
 &\fp & \fm & \\ \hline \hline
 +3/2&\ez & \ez & \fz
 }{cccc}
 \quad
 \bmatc{
 \cline{2-4}
 &\Ic{\fmm} & \fpp & \cI{} \\
 &\Ic{}    & \fz & \cI{\fz} \\ \cline{2-4}
 &\fpp & \fmm & \\ \hline \hline
 0&\ez & \ez & \fz
 }{cccc}~,\label{eq:zzw_3_from_left}
 }
 \eq{
 \!\!\!\!\!\!\!\!
 \bmatc{
 \cline{2-4}
 &\Ic{\ez} & \ez & \cI{} \\
 &\Ic{}    & \ep & \cI{\em} \\ \cline{2-4}
 &  & \em & \ep \\ \hline \hline
 -1/6&\ez & \ez & \fz
 }{cccc}
 \quad
 \bmatc{
 \cline{2-4}
 &\Ic{\ez} & \ez & \cI{} \\
 &\Ic{}    & \ep & \cI{\em} \\ \cline{2-4}
 &  & \fm & \fp \\ \hline \hline
 -1/2&\ez & \ez & \fz
 }{cccc}
 \quad
 \bmatc{
 \cline{2-4}
 &\Ic{\ez} & \ez & \cI{} \\
 &\Ic{}    & \fp & \cI{\fm} \\ \cline{2-4}
 &  & \em & \ep \\ \hline \hline
 -1/2&\ez & \ez & \fz
 }{cccc}
 \quad
 \bmatc{
 \cline{2-4}
 &\Ic{\ez} & \ez & \cI{} \\
 &\Ic{}    & \fp & \cI{\fm} \\ \cline{2-4}
 &  & \fm & \fp \\ \hline \hline
 -1/6&\ez & \ez & \fz
 }{cccc}
 \quad
 \bmatc{
 \cline{2-4}
 &\Ic{\ez} & \ez & \cI{} \\
 &\Ic{}    & \fpp & \cI{\fmm} \\ \cline{2-4}
 &  & \fmm & \fpp \\ \hline \hline
 +1/3&\ez & \ez & \fz
 }{cccc} \nx \nx
 \!\!\!\!\!\!\!\!
 \bmatc{
 \cline{2-4}
 &\Ic{\ez} & \ez & \cI{} \\
 &\Ic{}    & \em & \cI{\ep} \\ \cline{2-4}
 &  & \ep & \em \\ \hline \hline
 -1/6&\ez & \ez & \fz
 }{cccc}
 \quad
 \bmatc{
 \cline{2-4}
 &\Ic{\ez} & \ez & \cI{} \\
 &\Ic{}    & \em & \cI{\ep} \\ \cline{2-4}
 &  & \fp & \fm \\ \hline \hline
 -1/2&\ez & \ez & \fz
 }{cccc}
 \quad
 \bmatc{
 \cline{2-4}
 &\Ic{\ez} & \ez & \cI{} \\
 &\Ic{}    & \fm & \cI{\fp} \\ \cline{2-4}
 &  & \ep & \em \\ \hline \hline
 -1/2&\ez & \ez & \fz
 }{cccc}
 \quad
 \bmatc{
 \cline{2-4}
 &\Ic{\ez} & \ez & \cI{} \\
 &\Ic{}    & \fm & \cI{\fp} \\ \cline{2-4}
 &  & \fp & \fm \\ \hline \hline
 -1/6&\ez & \ez & \fz
 }{cccc}
 \quad
 \bmatc{
 \cline{2-4}
 &\Ic{\ez} & \ez & \cI{} \\
 &\Ic{}    & \fmm & \cI{\fpp} \\ \cline{2-4}
 &  & \fpp & \fmm \\ \hline \hline
+1/3&\ez & \ez & \fz
}{cccc}~.\label{eq:zzw_3_from_right} 
}
Therefore, applying the results of Proposition~\ref{prop:dp_factor_3} yields the following equation:
\eq{
0 = r_{(\ez \ez \fz)_i} = q_{k=3} \Bigl( \cfz (&+3/2 \cep \cem -3/2 \cep \cfm -3/2 \cfp \cem +3/2 \cfp \cfm +0 \cfpp \cfmm \nx
                                      &+3/2 \cem \cep -3/2 \cem \cfp -3/2 \cfm \cep +3/2 \cfm \cfp +0 \cfmm \cfpp) \nx
                                +\cez (&-1/6 \cep \cem -1/2 \cep \cfm -1/2 \cfp \cem -1/6 \cfp \cfm +1/3 \cfpp \cfmm \nx
                                      &-1/6 \cem \cep -1/2 \cem \cfp -1/2 \cfm \cep -1/6 \cfm \cfp +1/3 \cfmm \cfpp)\Bigr)~.
\label{eq:condition_3}}
By substituting Eq.~\eqref{eq:couple_const_BLBQ}, 
we obtain
\eq{
0 &= q_{k=3} ( -1/3 (J_1-J_2/2)^3 - 1/2 (J_1-J_2/2)^2 J_2 + 1/12 (J_1-J_2/2)J_2^2 + 1/8 J_2^3 ) \nx
&= -\frac{q_{k=3}}{3} J_1 (J_1 - J_2) (J_1 + J_2)~,
}
\end{widetext}
which means that $q_{k=3}=0$ holds unless $J_1 \neq 0, \pm J_2$. 
Therefore, 
together with exception handling in Appendix~\ref{sec:singular_case}  and Sec.~\ref{sec:necessary},
the BLBQ model~\eqref{eq:H_BLBQ} is shown to possess no $3$-local conserved quantity 
except for known integrable systems. 
This completes the proof of Theorem~\ref{thm:main} for $k=3$.

\subsection{Step 1 (analysis of $r_{{\bf B}^{k+1}_i} = 0$) for $3 \leq k \leq N/2$}\label{subsec43}
We give a proof of the absence of general $k$-local conserved quantities ($3 \leq k \leq N/2$). 
We note that the argument for $k=3$ is also essential for general $k$. 
In particular, 
Step~1 in the proof for general $k$ is parallel to that for $k=3$ (see Sec.~\ref{subsec41}). 
In the following, we use a generalization of the column expression of operators to multiple rows [e.g., for three rows, see Eq.~\eqref{eq:general_column_expression}].
\begin{prop}\label{prop:almost_dp_k}
Assume $Q$ to be a $k$-local conserved quantity and include $\Aki$ with coefficient $q_\Aki$.
If there is no operator basis elements $\{C_{(n)}\}$ satisfying
\eq{
\ak \propto 
\bmatc{
\hline
\Cone & \Cone\dg &          &        &       &          &\\
      & \Ctwo    & \Ctwo\dg &        &       &          &\\
      &          &          & \ddots &       &          &\\
      &          &          &        & \Ckmm & \Ckmm\dg &\\
      &          &          &        &       & \Ckm     & A_{(k)}\\\hline
}{|ccccccc|}~,
\label{eq:almost_dp_k}
}
then $q_\aki=0$ holds for any $i$. 
\end{prop}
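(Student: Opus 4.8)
The plan is to imitate the structure of the proof of Prop.~\ref{prop:almost_dp_3}, but now iterating the column-expression argument through all $k-1$ ``joints'' of the operator instead of just one. Fix the site $i$ and suppress it. The key structural fact, inherited from the $k=3$ case, is that every interaction term of $H$ in Eq.~\eqref{eq:H_proof} has the form $C_j C^\dagger_{j+1}$, so that any commutator of a $k$-local term of $Q$ with an interaction term of $H$ that produces a $(k+1)$-local output $\boldB^{k+1}$ must take one of the two forms in Eq.~\eqref{eq:two_column} (extended to length $k+1$): either the interaction sits at the right end, acting on sites $i+k-1$ and $i+k$, with $Q$'s term occupying sites $i,\dots,i+k-1$; or the interaction sits at the left end, acting on sites $i-1$ and $i$, with $Q$'s term on sites $i,\dots,i+k-1$. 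Each form contributes at most one commutator for a given output. This means the equation $r_{\boldB^{k+1}_j}=0$ couples at most two coefficients of $k$-support operators in $Q$.

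The next step is to run this two-term relation recursively to peel off the joints from the right. First I would show, exactly as in the $k=3$ argument leading to Eq.~\eqref{eq:almost_dp_z_3}, that for any $k$-support term $(\Aone\cdots A_{(k)})$ appearing in $Q$ with nonzero coefficient, the last joint must be a ``doubling'' joint: namely there is an operator basis element $C_{(k-1)}$ with $A_{(k-1)}\propto[C_{(k-1)}^\dagger,\;?]$, or more precisely the term must be writable with $A_{(k-1)}=C_{(k-2)}^\dagger$-compatible structure. Concretely: pick an output $\boldB^{k+1}$ obtained by letting a right-end interaction $\tilde C\tilde C^\dagger$ act on $A_{(k)}$; the left form of Eq.~\eqref{eq:two_column} gives the commutator from our term, and the right form requires a term of $Q$ of the shape $(\Aone\cdots A_{(k-2)}\,?\,\tilde C^\dagger)$ with its $(k-1)$st site equal to $[\,?\,,\tilde C]^{-1}$-preimage. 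Chasing which $\tilde C$ can be chosen, together with the commutator Table~\ref{tbl:CommutatorTable} (every basis element has a non-commuting partner), forces the coefficient to vanish unless the $(k-1)$st and $k$th sites form a doubling pair $C_{(k-1)} C_{(k-1)}^\dagger$. Having fixed the last joint, one induces downward: the same relation, now with the right end ``frozen'' as $C_{(k-1)}C_{(k-1)}^\dagger$, is applied at site $i+k-2$, forcing the $(k-2)$nd joint to double as well, and so on down to the first joint. This is precisely the content of the recursion $q\,[\cdots]_{i}\propto q\,[\cdots]_{i+1}\propto\cdots$ already used in Prop.~\ref{prop:dp_factor_3}, run here on a candidate term that is \emph{not} of the fully-doubled form~\eqref{eq:almost_dp_k}; at the end of the chain one reaches a term of strictly shorter support or a term that cannot be doubled, whose coefficient vanishes by the induction hypothesis on $k$ (the statement for $k-1$, with the $k=3$ base case already proved in Subsec.~\ref{subsec41}).

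I expect the main obstacle to be bookkeeping rather than conceptual: one must verify that at \emph{every} joint the ``right form'' of Eq.~\eqref{eq:two_column} genuinely fails to supply a competing commutator unless that joint is a doubling pair, and this requires checking that for each basis element $A_{(n)}$ and each admissible incoming operator there is no basis element $\tilde C$ with the wrong structure producing the same output — a finite but somewhat tedious case analysis against Table~\ref{tbl:CommutatorTable} and the product relations (products $E_{m_1}E_{m_2}$ etc.\ lying in $\mathrm{span}\{E_{m_1+m_2},F_{m_1+m_2}\}$). The subtle point is handling the cases $A_{(n)}\in\{E_0,F_0\}$ separately, just as $\Athree=E_0$ and $\Athree=F_0$ were treated apart in Prop.~\ref{prop:dp_3}, because these are the basis elements whose adjoint action has a kernel containing other basis elements, so the ``no competing term'' argument there must instead be reduced to the already-established $A_{(n)}\neq E_0,F_0$ cases. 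Once the recursion is set up carefully and the induction on $k$ is invoked, the conclusion $q_{\ak_i}=0$ for all $i$ follows.
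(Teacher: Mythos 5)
Your first paragraph is correct and matches the paper's setup: since every interaction term has the form $C_jC\dg_{j+1}$, each $(k+1)$-local output is produced by at most two commutators --- the $k$-support term of $Q$ sitting at sites $i,\dots,i+k-1$ with the interaction appended on the right, or sitting at sites $i+1,\dots,i+k$ with the interaction on the left --- so each $r_{\bkpi}=0$ is a two-term relation. The trouble starts in your second paragraph. Appending a right-end interaction $\tilde C\tilde C\dg$ to $\ak$ does \emph{not} control the last joint: the competing (right-form) commutator comes from a $k$-support term supported on sites $i+1,\dots,i+k$, of the shape $(X,\Athree,\dots,A_{(k-1)},[A_{(k)},\tilde C],\tilde C\dg)$ with $[X,\Aone\dg]\propto\Atwo$, not from a term of the shape $(\Aone\cdots A_{(k-2)}\,?\,\tilde C\dg)$ as you write. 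What this output therefore controls is the \emph{first} joint: if $\Atwo\not\propto[\Aone\dg,C]$ for every $C$, the right form is empty, a single commutator survives, and $q_\aki=0$. That is the base case, and it is why the paper peels from the left rather than the right. Indeed the target form \eqref{eq:almost_dp_k} is nested from the left --- whether joint $n$ is a ``doubling joint'' depends on $C_{(n-1)}$, which is only determined after processing joints $1,\dots,n-1$ --- so ``fixing the last joint first'' is not even a well-posed opening move, and the conclusion you attribute to it (that sites $k-1$ and $k$ form a pair $\Ckm\Ckm\dg$) is the content of the \emph{next} proposition, Prop.~\ref{prop:dp_k}; here $A_{(k)}$ is left unconstrained.

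The recursion also does not terminate the way you describe. The paper's chain relates $q_\aki$ (first non-doubling joint at position $n$) to the coefficient of a $k$-support term of the \emph{same} conserved quantity $Q$, starting at site $i+1$, whose non-doubling joint sits at position $n-1$ (doubling pairs are appended on the right and stripped from the left); after $n-1$ shifts one reaches a term whose first joint is non-doubling, which is killed by the base case above. No term of strictly shorter support ever appears, and no induction on $k$ is used --- nor could it be: the statement for $(k-1)$-local conserved quantities says nothing about the $k$-support terms of a $k$-local conserved quantity, so invoking ``the statement for $k-1$'' is an invalid step. Finally, the separate treatment of $A_{(n)}\in\{E_0,F_0\}$ that you anticipate is needed for Prop.~\ref{prop:dp_k}, not here; for this proposition the only fact required of the basis is that every element has some noncommuting partner in Table~\ref{tbl:CommutatorTable}.
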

\begin{proof}
The proof is similar to that of Proposition~\ref{prop:almost_dp_3}. 
First, we consider $\ak = \Aone \Atwo \cdots A_{(k-1)} A_{(k)}$ whose
$\Atwo$ is not proportional to $[\Aone \dg, C]$ for any $C$. 
For $D$ noncommutive with $A_{(k)}$, 
$\ak$ is the only operator 
whose commutator with the Hamiltonian gives $\bkp = \Aone \Atwo \cdots A_{(k-1)} [\boldA_{(k)},D] D\dg$: 
\eq{
\bmat{
\Aone & \Atwo & \cdots & A_{(k-1)} & A_{(k)} \\
      &       &        &            & D & D\dg \\ \hline
\Aone & \Atwo & \cdots & A_{(k-1)} & [A_{(k)},D] & D\dg
}
\label{eq:aaaad_k}
}
\footnote{
Here, we used the fact that $k\leq N/2$. 
Notice that for $k>N/2$, the list of $\{\boldA^k_j\}$ contributing to $\bki$ becomes more complicated.
For a detail discussion, see Sec.~VI A in Ref.~\cite{Chiba2024}.
}.
Thus, the coefficient of this $\ak$ is zero. 

\change{Next, we consider 
\eq{
\ak \propto 
\bmatc{
\hline
\Cone & \Cone\dg &          &        &         &        &         \\
     & \Ctwo  & \Athree & A_{(4)} & \cdots & A_{(k-1)} & A_{(k)} \\\hline
}{|ccccccc|}~,
}
where $\Athree$ is not proportional to 
$[\Ctwo^\dagger, C]$
for any $C$. 
For $D$ noncommutive with $A_{(k)}$, 
we analyze all the commutators that generate the following $\boldB^{k+1}$:
\eq{
\boldB^{k+1}=
\bmatc{
\hline
\Cone & \Cone\dg &        &         &        &           & &\\
      & \Ctwo  & \Athree & A_{(4)} & \cdots & A_{(k-1)} & A_{(k)}& \\
      &          &        &         &        &           & D & D\dg \\\hline
}{|cccccccc|}~.
}
We have the following two commutators:
\eq{
\bmatc{
\cline{1-7}
\Ic{\Cone} & \Cone\dg &        &         &        &      &  \cI{}   &\\
\Ic{}    & \Ctwo  & \Athree & A_{(4)} & \cdots & A_{(k-1)} & \cI{A_{(k)}} &\\\cline{1-7}
&&& &     &     & D & D\dg \\ \hline \\[-0.8em]
\cline{1-8}
\Ic{\Cone} & \Cone\dg &        &         &        &           & &\cI{}\\
\Ic{}     & \Ctwo  & \Athree & A_{(4)} & \cdots & A_{(k-1)} & A_{(k)}& \cI{}\\
\Ic{}      &          &        &         &        &           & D & \cI{D\dg}\\
\cline{1-8}
}{cccccccc}~
}
and
\eq{
\bmatc{
\cline{2-8}
& \Ic{\Ctwo}  & \Athree & A_{(4)} & \cdots & A_{(k-1)} & A_{(k)} & \cI{}\\
& \Ic{} &  &        &         &        &   D   &  \cI{D\dg}   \\
\cline{2-8}
\Cone & \Cone\dg && &     &     &  &  \\ \hline \\[-0.8em]
\cline{1-8}
\Ic{\Cone} & \Cone\dg &        &         &        &           & &\cI{}\\
\Ic{}     & \Ctwo  & \Athree & A_{(4)} & \cdots & A_{(k-1)} & A_{(k)}& \cI{}\\
\Ic{}      &          &        &         &        &           & D & \cI{D\dg}\\
\cline{1-8}
}{cccccccc}~.
}
Since $r_{\boldB^{k+1}_i}=0$, we have
\eq{
&q~\bmatc{
\hline
\Cone & \Cone\dg &          &        &         &        &         \\
     & \Ctwo  & \Athree & A_{(4)} & \cdots & A_{(k-1)} & A_{(k)} \\\hline
}{|ccccccc|}_{~i}\nx
&=\frac{c_{(1)}}{d}
q~\bmatc{
\hline
 \Ctwo  & \Athree & A_{(4)} & \cdots & A_{(k-1)} & A_{(k)} &\\
           &        &         &        &           & D & D\dg \\\hline
}{|ccccccc|}_{~i+1}\nx
&=0~, 
}
where we use the fact that we proved in the first part. }

A similar argument allows us to prove the general case by induction.
For example, we consider 
\eq{
\ak \propto 
\bmatc{
\hline
\Cone & \Cone\dg &          &        &         &        &           & \\
      & \Ctwo    & \Ctwo\dg &        &         &        &           & \\
      &          & \Cthree  & \Afour & A_{(5)} & \cdots & A_{(k-1)} & A_{(k)} \\\hline
}{|cccccccc|}~,
}
where $\Afour$ is not proportional to 
$[\Cthree^\dagger, C]$
for any $C$. 
Then, we have
\eq{
&q~\bmatc{
\hline
\Cone & \Cone\dg &          &        &         &        &           & \\
      & \Ctwo    & \Ctwo\dg &        &         &        &           & \\
      &          & \Cthree  & \Afour & A_{(5)} & \cdots & A_{(k-1)} & A_{(k)} \\\hline
}{|cccccccc|}_{~i}\nx
&\propto
q~\bmatc{
\hline
\Ctwo & \Ctwo\dg &        &         &        &           & &\\
      & \Cthree  & \Afour & A_{(5)} & \cdots & A_{(k-1)} & A_{(k)}& \\
      &          &        &         &        &           & \tilde{C}_{(1)} & \tilde{C}_{(1)}\dg \\\hline
}{|cccccccc|}_{~i+1}\nx
&\propto
q~\bmatc{
\hline
       \Cthree  & \Afour & A_{(5)} & \cdots & A_{(k-1)} & A_{(k)}& &\\
                &        &         &        &           & \tilde{C}_{(1)} & \tilde{C}_{(1)}\dg &\\
                &        &         &        &           &                 & \tilde{C}_{(2)}    & \tilde{C}_{(2)}\dg \\\hline
}{|cccccccc|}_{~i+2}\nx
&=0~, 
}
as discussed above. 
Therefore, we have proven that $q_\aki=0$ for any $i$ and $\ak$  
not of the form of Eq.~\eqref{eq:almost_dp_k}.
\end{proof}

\begin{prop}
Assume $Q$ to be a $k$-local conserved quantity. 
$q_\aki=0$ holds for any $i$ unless $\ak$ is a doubling operator, that is,
\eq{
\ak = 
\bmatc{
\hline
\Cone & \Cone\dg &          &        &       &          &\\
      & \Ctwo    & \Ctwo\dg &        &       &          &\\
      &          &          & \ddots &       &          &\\
      &          &          &        & \Ckmm & \Ckmm\dg &\\
      &          &          &        &       & \Ckm     &\Ckm\dg \\\hline
}{|ccccccc|}~.
\label{eq:dp_k}}
\label{prop:dp_k}
\end{prop}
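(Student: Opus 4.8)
The plan is to follow the proof of Prop.~\ref{prop:dp_3} almost verbatim, treating the already-doubled staircase $\Cone,\dots,\Ckmm$ on the leftmost $k-2$ sites as passive ballast. By Prop.~\ref{prop:almost_dp_k} we may assume $\ak$ has the almost-doubling form~\eqref{eq:almost_dp_k}, so all that remains is to show $q_\aki=0$ whenever the final site fails to match, i.e.\ $A_{(k)}\neq\Ckm\dg$.

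The engine is a one-step rightward ``slide''. Given such an $\ak$, choose an interaction $LL\dg$ of the Hamiltonian with $[A_{(k)},L]\neq0$, append it at the right end of $\aki$, and read off $r_{\bkpi}=0$ for the output
\eq{
\bkpi=
\bmatc{
\hline
\Cone & \Cone\dg & & & & & &\\
 & \Ctwo & \Ctwo\dg & & & & &\\
 & & & \ddots & & & &\\
 & & & & \Ckmm & \Ckmm\dg & &\\
 & & & & & \Ckm & A_{(k)} &\\
 & & & & & & L & L\dg\\\hline
}{|cccccccc|}~.
}
Every operator appearing here overlaps each Hamiltonian term in at most one site, so, as in the proof of Prop.~\ref{prop:almost_dp_k} (this is where $k\le N/2$ enters; cf.\ the footnote after Eq.~\eqref{eq:aaaad_k}), only two commutators contribute: the one from $\aki$ with the interaction $LL\dg$, and the one from the $k$-local term
\eq{
\tilde\ak=
\bmatc{
\hline
\Ctwo & \Ctwo\dg & & & & &\\
 & & \ddots & & & &\\
 & & & \Ckmm & \Ckmm\dg & &\\
 & & & & \Ckm & A_{(k)} &\\
 & & & & & L & L\dg\\\hline
}{|ccccccc|}
}
with the interaction $\Cone\Cone\dg$. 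Hence $q_\aki$ is proportional, with a nonzero factor fixed by the couplings and Table~\ref{tbl:CommutatorTable}, to $q_{\tilde\ak_{i+1}}$. In $\tilde\ak$ the leftmost $k-3$ pairs are still doubling, the $(k-2)$-th pair is $(\Ckm,A_{(k)})$, and the last pair is the appended $(L,L\dg)$; since no row or column of Table~\ref{tbl:CommutatorTable} contains a repeated nonzero entry, the reconstruction of the $C$'s of~\eqref{eq:almost_dp_k} from the right end of $\tilde\ak$ is forced step by step, so $\tilde\ak$ has the form~\eqref{eq:almost_dp_k} only if some operator basis element $Y$ obeys $[\Ckm\dg,Y]\propto[A_{(k)},L]$, and it is furthermore fully doubling only if in addition $A_{(k)}=\Ckm\dg$. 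The whole question therefore collapses to the last three sites, exactly as for $k=3$.

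It remains to run the case split of Prop.~\ref{prop:dp_3}. (i) If $A_{(k)}=E_0$, take $L=F_{\pm2}$ when $\Ckm=F_0$ and $L=\Ckm$ when $\Ckm\notin\{E_0,F_0\}$; in both cases $[A_{(k)},L]\propto L$ while no $Y$ satisfies $[\Ckm\dg,Y]\propto L$ (the same Table~\ref{tbl:CommutatorTable} check as for $k=3$), so $\tilde\ak$ is not of the form~\eqref{eq:almost_dp_k} and $q_{\tilde\ak_{i+1}}=0$ by Prop.~\ref{prop:almost_dp_k}; hence $q_\aki=0$ unless $\Ckm=E_0$. (ii) If $A_{(k)}\notin\{E_0,F_0\}$, take $L=E_0$; the slide then lands on a $\tilde\ak$ whose last site is $E_0$, and this $\tilde\ak$ is not fully doubling when $A_{(k)}\neq\Ckm\dg$, so $q_{\tilde\ak_{i+1}}=0$ (by Prop.~\ref{prop:almost_dp_k} if $\tilde\ak$ lacks the form~\eqref{eq:almost_dp_k}, and by case (i) applied to $\tilde\ak$ otherwise), whence $q_\aki=0$. (iii) If $A_{(k)}=F_0$, take $L=E_{\pm1}$; the slide lands on a $\tilde\ak$ with last site $E_{\mp1}$ and case (ii) applies in the same manner, forcing $\Ckm=F_0$. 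Combining (i)--(iii) with Prop.~\ref{prop:almost_dp_k} yields $q_\aki=0$ unless $A_{(k)}=\Ckm\dg$, i.e.\ unless $\ak$ is the doubling operator~\eqref{eq:dp_k}.

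I expect the main difficulty to be organizational rather than computational. The recursion in (ii)--(iii) must be supplied with the \emph{strengthened} statement already proved (that an almost-doubling operator with a prescribed last site vanishes unless it is fully doubling), not merely Prop.~\ref{prop:almost_dp_k}; and one must keep careful track of which pairs of the long staircase the slide transmits unchanged, the leftmost pair being absorbed into the interaction $\Cone\Cone\dg$. No commutator computation beyond the $k=3$ ones is required, because in every slide only the last three sites of $\tilde\ak$ decide whether the doubling structure survives; and the ``at most two contributing commutators'' bound is exactly what fails once $k>N/2$, which is why the range $3\le k\le N/2$ is sharp.
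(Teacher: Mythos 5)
Your proof is correct and follows essentially the same route as the paper's: after invoking Prop.~\ref{prop:almost_dp_k}, it kills the non-doubling candidates by a rightward slide built from a single appended interaction and then runs the same case split on $A_{(k)}$ (first $E_0$ with sub-cases on $C_{(k-1)}$, then $A_{(k)}\neq E_0,F_0$ reduced to the $E_0$ case, then $A_{(k)}=F_0$ reduced to the $E_{\mp1}$ case) that the paper uses in Props.~\ref{prop:dp_3} and~\ref{prop:dp_k}. The only cosmetic difference is that you slide once and detect the failure of the almost-doubling form at position $k-1$ of the shifted operator, whereas the paper slides $k-2$ times so that the defective pair lands at the left end and the failure is detected at position $2$; both variants rest on the same entries of Table~\ref{tbl:CommutatorTable}.
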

\begin{proof}
The proof is similar to that of Proposition~\ref{prop:dp_3}. 
First, if $A_{(k)} = \ez$ and $\Ckm = \fz$, then we have 
\eq{
&q~\bmatc{
\hline
\Cone & \Cone\dg &          &        &       &          &\\
      & \Ctwo    & \Ctwo\dg &        &       &          &\\
      &          &          & \ddots &       &          &\\
      &          &          &        & \Ckmm & \Ckmm\dg &\\
      &          &          &        &       & \fz     &\ez \\\hline
}{|ccccccc|}_{~i}\nx
&\propto \cdots\nx
&\propto
q~\bmatc{
\hline
\fz   & \ez      &          &        \\
      & \fpp     & \fmm     &     \\   
      &          &          & \ddots \\\hline
}{|cccc|}_{~i+k-2} \nx
&=0~. \label{eq:not_doubling_k_1}
}
by a similar discussion in Eq.~\eqref{eq:not_doubling_3}. 

If $A_{(k)} = \ez$ and $\Ckm \neq \ez,\fz$ hold, 
we obtain the following:
\eq{
&q~\bmatc{
\hline
\Cone & \Cone\dg &          &        &       &          &\\
      & \Ctwo    & \Ctwo\dg &        &       &          &\\
      &          &          & \ddots &       &          &\\
      &          &          &        & \Ckmm & \Ckmm\dg &\\
      &          &          &        &       & \Ckm     &\ez \\\hline
}{|ccccccc|}_{~i}\nx
&\propto \cdots\nx
&\propto
q~\bmatc{
\hline
\Ckm  & \ez  &          &        \\
      & \Ckm     & \Ckm\dg     &     \\   
      &          &          & \ddots \\\hline
}{|cccc|}_{~i+k-2} \nx
&=0~.\label{eq:not_doubling_k_2}
}
Therefore, the proof for the case $A_{(k)}= \ez$ is complete. 
In the case of $A_{(k)} \neq \ez$, 
the coefficient can be shown to be zero 
by reducing the discussion to Eqs.~\eqref{eq:not_doubling_k_1} and~\eqref{eq:not_doubling_k_2}, 
as in the latter part of the proof of Proposition~\ref{prop:dp_3}. 
\end{proof}

\begin{prop}
Assume $Q$ to be a $k$-local conserved quantity. 
\eq{
&q~\bmatc{
\hline
\Cone & \Cone\dg &          &        &       &          &\\
      & \Ctwo    & \Ctwo\dg &        &       &          &\\
      &          &          & \ddots &       &          &\\
      &          &          &        & \Ckmm & \Ckmm\dg &\\
      &          &          &        &       & \Ckm     &\Ckm\dg \\\hline
}{|ccccccc|}_{~i} \nx
&= q_k c_{(1)} c_{(2)} \cdots c_{(k-2)} c_{(k-1)}~,
\label{eq:dp_factor_k}
}
by using some constant $q_k$. 
Here, we denote the coupling constant of the interaction term $C_{(n),j} C_{(n),j+1}\dg$ as $c_{(n)}$. 
\label{prop:dp_factor_k}
\end{prop}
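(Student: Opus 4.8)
The plan is to carry the argument of Prop.~\ref{prop:dp_factor_3} over to doubling operators of arbitrary support $k$. By Prop.~\ref{prop:dp_k}, the only $k$-support operators that appear in $Q$ with nonzero coefficient are the doubling operators~\eqref{eq:dp_k}, each labelled by a sequence $(\Cone,\dots,\Ckm)$ of operator basis elements whose consecutive neighbours $C_{(n)}\dg$ and $C_{(n+1)}$ are noncommutative. Write $c_{(n)}$ for the coupling constant of the interaction term $C_{(n),j}C_{(n),j+1}\dg$ and set $\lambda(\Cone,\dots,\Ckm;i):=q_\aki/(c_{(1)}c_{(2)}\cdots c_{(k-1)})$, where $\ak$ is the doubling operator~\eqref{eq:dp_k} with these labels based at site $i$. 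In the regime of this section all $c_{(n)}$ are nonzero, so $\lambda$ is well defined; the goal is to show it does not depend on the labels or on $i$, with common value $q_k$.

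First I would establish a two-term recursion for $\lambda$. Fix a sequence $(\Cone,\dots,C_{(k)})$ with consecutive pairs noncommutative and consider the $(k+1)$-support doubling operator (labelled by $k$ elements, in the sense of Eq.~\eqref{eq:dp_k}) that it defines, viewed as an output $\bkpi$. Exactly as in the $k=3$ case of Eq.~\eqref{eq:two_column}, because every interaction of $H$ has the form $C_jC_{j+1}\dg$, because an overlap on a single site produces a single basis operator, because the on-site term $hF_0$ cannot raise the support from $k$ to $k+1$, and because Prop.~\ref{prop:dp_k} kills every non-doubling $k$-support term of $Q$, only two commutators feed into $\bkpi$: the doubling operator $(\Cone,\dots,\Ckm)$ at site $i$ commuted with the interaction $C_{(k),i+k-1}C_{(k),i+k}\dg$, and the doubling operator $(\Ctwo,\dots,C_{(k)})$ at site $i+1$ commuted with $C_{(1),i}C_{(1),i+1}\dg$. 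Reading off $r_{\bkpi}=0$ with the column-expression scaling gives
\eq{
c_{(k)}\,q_{(\Cone,\dots,\Ckm)_i} - c_{(1)}\,q_{(\Ctwo,\dots,C_{(k)})_{i+1}} = 0~,
}
which is precisely $\lambda(\Cone,\dots,\Ckm;i)=\lambda(\Ctwo,\dots,C_{(k)};i+1)$: the number $\lambda$ is invariant under the \emph{slide} that drops the leftmost label, shifts the base site by one, and appends any $C_{(k)}$ noncommutative with $\Ckm$ (the mirror choice of output gives the reverse slide).

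Next I would use this invariance to remove the last degree of freedom. The noncommuting graph on the eight nontrivial basis elements is connected --- $E_0$ fails to commute with every generator other than $F_0$, and $F_0$ does not commute with $E_{+1}$ --- and it contains odd cycles (e.g.\ $E_0,E_{+1},E_{-1}$), so between any two prescribed generators there are walks of every sufficiently large length. Because $k\le N/2$, there is enough room on the ring to run such a walk once all the way around: $N$ successive slides return to the original base site while letting the label sequence be steered to any admissible target. Composing the corresponding instances of the recursion shows that $\lambda(\Cone,\dots,\Ckm;i)$ takes the same value for every admissible sequence and every $i$; calling it $q_k$ gives Eq.~\eqref{eq:dp_factor_k}, leaving $q_k$ as the single remaining unknown among the $k$-support coefficients.

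The step I expect to be the main obstacle is the accounting in the first part: one must verify that a doubling-shaped $\bkpi$ really does receive contributions from exactly those two commutators and from nothing else. This rests on the structure of the interactions, on the support counting for on-site versus two-site terms of $H$, on Prop.~\ref{prop:dp_k}, and on the fact (visible in Table~\ref{tbl:CommutatorTable}) that for a fixed second argument the adjoint action $[\,\cdot\,,C]$ separates the basis elements, so that in each case the $Q$-term is reconstructed uniquely; one must also confirm that after the scaling conventions the two surviving coefficients enter weighted precisely by $c_{(k)}$ and $c_{(1)}$. The connectivity step is then routine once Table~\ref{tbl:CommutatorTable} has been inspected.
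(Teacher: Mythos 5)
Your proposal is correct and follows essentially the same route as the paper: the two-commutator accounting for a $(k+1)$-support doubling output yields $c_{(k)}q_{(C_{(1)},\dots,C_{(k-1)})_i}=c_{(1)}q_{(C_{(2)},\dots,C_{(k)})_{i+1}}$, and the periodic boundary condition chains these slide relations around the ring to identify all normalized coefficients with a single constant $q_k$, exactly as in the paper's reduction to the argument of Prop.~\ref{prop:dp_factor_3}. One small correction: the adjacency relevant for consecutive labels of a doubling operator is $[C_{(n)}^\dagger,C_{(n+1)}]\neq0$, not $[C_{(n)},C_{(n+1)}]\neq0$, and in that graph your example odd cycle $E_0,E_{+1},E_{-1}$ fails because $[E_{+1}^\dagger,E_{-1}]=[E_{-1},E_{-1}]=0$; the conclusion nonetheless survives, since the correct graph is still connected and admits odd closed walks (e.g.\ the self-loop at $E_{+1}$ coming from $[E_{-1},E_{+1}]=-E_0\neq0$).
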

\begin{proof}
Using the periodic boundary condition, 
we can show that the coefficients of any two doubling operators are proportional, as is the case with Proposition~\ref{prop:dp_factor_3}. 
In addition, 
Eq.~\eqref{eq:dp_factor_k} is consistent with the equations available from $\{ r_{\bki} = 0\}$.
\end{proof}

\subsection{Step 2 (analysis of $r_{{\bf B}^{k}_i} = 0$) for $3 \leq k \leq N/2$}\label{subsec44}
In this step, 
the only remaining degree of freedom $q_k$ is eliminated 
by using the fact that each $k$-local term $\bki$ of the commutator $[Q,H]$ has zero coefficient.

We introduce a symbol representing the common operator $\ez$ acting on successive sites as 
\eq{
&\em \ezo{k-2} \ep \nx
&:= \em \underbrace{\ez \ez \cdots \ez \ez}_{k-2} \ep \nx
&= \bmatc{
\hline
\em & \ep &     &        &     &     &     \\
    & \em & \ep &        &     &     &     \\
    &     &     & \ddots &     &     &     \\
    &     &     &        & \em & \ep &     \\
    &     &     &        &     & \em & \ep \\ \hline
}{|ccccccc|}~.
}
Using this symbol, we state the $k$-local terms $\boldB^k_j$'s we use. 
Unlike the case of $k=3$, we focus on multiple $\boldB^k_j$'s for $k > 3$. 
We eliminate the degree of freedom $q_k$ by linking the equations obtained from them. 
Specifically, 
for integer $m \in [0,k-3]$, 
we use $\boldB^k_{i-m}$, expressed as follows:
\eq{
\boldB^k_{i-m} &= \begin{cases}
\bmatc{
\hline
\ez & \ez & \fz &           &  \\
    &     & \em & \ezo{k-4} & \ep \\\hline
}{|ccccc|}_{~i} \qquad  (m=0) \\ \\
\bmatc{
\hline
\em & \ezo{m-1} & \ep &     &     &             & \\
    &           & \ez & \ez & \fz &             & \\
    &           &     &     & \em & \ezo{k-m-4} & \ep \\\hline
}{|ccccccc|}_{~i-m}  \\ \qquad \qquad \qquad \qquad \qquad \qquad \qquad 
(1\leq m \leq k-4) \\ \\
\bmatc{
\hline
\em & \ezo{k-4} & \ep &     &     \\
    &           & \ez & \ez & \fz \\\hline
}{|ccccc|}_{~i-k+3}  \quad  (m=k-3)~.
\end{cases}
\label{eq:multiple_outputs}
}
In the following, we enumerate the commutators contributing to each $\boldB^k_{i-m}$ 
and obtain the equations $\{q_{\aki}\}$ and $\{q_{\akmi}\}$.

First, for $m=0$ case, we take
\eq{
\boldB^k_i &= 
\bmatc{
\hline
\ez & \ez & \fz &           &  \\
    &     & \em & \ezo{k-4} & \ep \\\hline
}{|ccccc|}_{~i}
}
and analyze all the commutators that generate this $\bki$.
All the commutators that produce this $\bki$ are classified into the following three cases, as in Eq.~\eqref{eq:zzw_q_3}:
(i) commutators of $\boldA^{k-1}_j$ of $Q$ and interaction terms of $H$ that share a single site, 
(ii) those of $\boldA^k_j$ and magnetic field terms that share a single site, 
and (iii) those of $\boldA^k_j$ and interaction terms that share two sites.

There is only one contribution in case (i), represented as
\eq{
\bmat{
\cline{2-6}
&\Ic{\ez} & \ez & \fz &           &\cI{}\\
&\Ic{}    &     & \em & \ezo{k-5} &\cI{\ep}\\ \cline{2-6}
&         &     &     &           & \em     & \ep \\ \hline \\[-0.8em] \cline{2-7}
+&\Ic{\ez}& \ez & \fz &           &         & \cI{} \\
&\Ic{}    &     & \em & \ezo{k-5} & \ez     &\cI{\ep}\\ \cline{2-7}
}~.\label{eq:output_from_k-1}
}

Next, no pair of $\boldA^k_j$ and a magnetic field term may contribute as in the case (ii),
because, for any doubling operator $\boldA^k_i$, 
commutators with $\fz$ in all positions do not generate this $\bki$.

Finally, we consider commutators of $\boldA^k_i$ and interaction terms. 
We divide the cases according to the relative position of the interaction term to $\boldA^k_i$. 
If the interaction is located at the first and second sites from the left of $\boldA^k_i$, the commutator is described as follows:
\eq{
\bmat{
\hline
\Ic{C}   &C\dg &     &           &  \cI{}\\
\Ic{}    & \fz & \fz &           &  \cI{}\\
\Ic{}    &     & \em & \ezo{k-4} &\cI{\ep}\\ \cline{1-5}
D        &D\dg &     &           &        \\ \hline \hline \\[-0.8em] \hline
\Ic{\ez} & \ez & \fz &           & \cI{} \\
\Ic{}    &     & \em & \ezo{k-4} &\cI{\ep}\\ \hline
}~,\label{eq:cd_k_left}
}
up to constant factors. 
Here, $C$, $D$, and the constant multipliers in the commutator 
are the same as in Eq.~\eqref{eq:zzw_3_from_left}:
\eq{
\begin{array}{l}
(C,D,{\rm factor})\\
= (\ep,\em,+3/2),(\ep,\fm,-3/2),\\
\quad (\fp,\em,-3/2),(\fp,\fm,+3/2),(\fpp,\fmm,0),\\
\quad (\em,\ep,+3/2),(\em,\fp,-3/2),\\
\quad (\fm,\ep,-3/2),(\fm,\fp,+3/2),(\fmm,\fpp,0)~.
\end{array}\nonumber}
If the interaction is located at the second and third sites from the left of $\boldA^k_j$, 
we have the following relations up to constant factors:
\eq{
\bmat{
\hline
\Ic{\ez} &\ez  &     &           &  \cI{}\\
\Ic{}    &C    &C\dg &           &  \cI{}\\
\Ic{}    &     & \em & \ezo{k-4} &\cI{\ep}\\ \cline{1-5}
         &D    &D\dg &           &        \\ \hline \hline \\[-0.8em] \hline
\Ic{\ez} & \ez & \fz &           & \cI{} \\
\Ic{}    &     & \em & \ezo{k-4} &\cI{\ep}\\ \hline
}~,\label{eq:cd_k_right}
}
with
\eq{
\begin{array}{l}
(C,D,{\rm factor})\\
= (\ep,\em,0),(\ep,\fm,0),\\
\quad (\fp,\em,-1/3),(\fp,\fm,-1/3),(\fpp,\fmm,0),\\
\quad (\em,\ep,-1/6),(\em,\fp,-1/6),\\
\quad (\fm,\ep,-1/2),(\fm,\fp,+1/6),(\fmm,\fpp,+1/3)~.
\end{array}\nonumber}
In fact, the relative position of $\boldA^k_i$ and the interaction term cannot be other than these cases.
The reasons for this are as follows:
There is no doubling operator that can be written in the form $\ez \ez \cdots$. 
Thus, either the first or second site from the left must change by commuting with the interaction term. 

By considering all the contributions described 
in Eqs.~\eqref{eq:output_from_k-1}-\eqref{eq:cd_k_right}, 
we obtain the following equations:
\begin{widetext}
\eq{
0 &= r~\bmatc{
\hline
    \ez & \ez & \fz &           &  \\
    &     & \em & \ezo{k-4} & \ep \\\hline
}{|ccccc|}_{~i} \nx
&= 
+\cem q~\bmatc{
\hline
\ez & \ez & \fz &           &  \\
    &     & \em & \ezo{k-5} & \ep \\\hline
}{|ccccc|}_{~i} \nx
&\qquad \begin{alignedat}{6}
            + q_k e_{-1}^{k-3} 
            \Bigl( &\cfz (+3/2 &&\cep \cem -3/2 &&\cep \cfm &&-3/2 \cfp \cem &&+3/2 \cfp \cfm &&+\quad0 \cfpp \cfmm \\
                   &\quad +3/2 &&\cem \cep -3/2 &&\cem \cfp &&-3/2 \cfm \cep &&+3/2 \cfm \cfp &&+\quad0 \cfmm \cfpp) \\
              +&\cez (+\quad0&&\cep \cem +\quad0 &&\cep \cfm &&-1/3 \cfp \cem &&-1/3 \cfp \cfm &&+\quad0 \cfpp \cfmm \\
                &\quad  -1/6 &&\cem \cep -1/6 &&\cem \cfp &&-1/2 \cfm \cep &&+1/6 \cfm \cfp &&+1/3 \cfmm \cfpp)\Bigr)~.
\end{alignedat}\label{eq:condition_k_left}}

Next, for $1 \leq m \leq k-4$, we take
\eq{
\boldB^k_{i-m} &= 
\bmatc{
\hline
\em & \ezo{m-1} & \ep &     &     &             & \\
    &           & \ez & \ez & \fz &             & \\
    &           &     &     & \em & \ezo{k-m-4} & \ep \\\hline
}{|ccccccc|}_{~i-m}~.
}
All the commutators generating this $\boldB^k_{i-m}$ are
\eq{
\bmat{
\cline{3-9}
&         &\Ic{\em} & \ezo{m-2} & \ep &     &     &             & \cI{}\\
&         &\Ic{}    &           & \ez & \ez & \fz &             & \cI{}\\
&         &\Ic{}    &           &     &     & \em & \ezo{k-m-4} & \cI{\ep}  \\ \cline{3-9}
& \em     &\ep      &           &     &     &     &             &           \\ \hline \\[-0.8em] \cline{2-9}
-&\Ic{\em}& \ez     & \ezo{m-2} & \ep &     &     &             & \cI{}\\
& \Ic{}   &         &           & \ez & \ez & \fz &             & \cI{}\\
& \Ic{}   &         &           &     &     & \em & \ezo{k-m-4} &\cI{\ep} \\ \cline{2-9}
}\qquad
\bmat{
\cline{2-8}
&\Ic{\em} & \ezo{m-1} & \ep &     &     &             & \cI{}\\
&\Ic{}    &           & \ez & \ez & \fz &             & \cI{}\\
&\Ic{}    &           &     &     & \em & \ezo{k-m-5} & \cI{\ep}  \\ \cline{2-8}
&         &           &     &     &     &             & \em &\ep \\ \hline \\[-0.8em] \cline{2-9}
+&\Ic{\em}& \ezo{m-1} & \ep &     &     &             &     & \cI{}\\
& \Ic{}   &           & \ez & \ez & \fz &             &     & \cI{}\\
& \Ic{}   &           &     &     & \em & \ezo{k-m-5} & \ez & \cI{\ep} \\ \cline{2-9}
}~,
} 
\eq{
\bmat{
\hline
\Ic{\em} & \ezo{m-1} & \ep &     &     &             & \cI{}\\
\Ic{}    &           & C   &C\dg &     &             & \cI{}\\
\Ic{}    &           &     & \fz & \fz &             & \cI{}\\
\Ic{}    &           &     &     & \em & \ezo{k-m-4} & \cI{\ep}  \\ \hline
         &           & D   &D\dg &     &             &  \\ \hline \hline \\[-0.8em] \hline
\Ic{\em} & \ezo{m-1} & \ep &     &     &             & \cI{}\\
\Ic{}    &           & \ez & \ez & \fz &             & \cI{}\\
\Ic{}    &           &     &     & \em & \ezo{k-m-4} & \cI{\ep} \\ \hline
}\qquad
{\rm with }\,\begin{array}{l}
(C,D,{\rm factor})\\
= (\ep,\em,0),(\ep,\fm,0),\\
\quad (\fp,\em,-3),(\fp,\fm,-3),(\fpp,\fmm,0),\\
\quad (\em,\ep,+3/2),(\em,\fp,+3/2),\\
\quad (\fm,\ep,-3/2),(\fm,\fp,+9/2),(\fmm,\fpp,0)~,
\end{array}\label{eq:cd_middle_left}
}
\eq{
\bmat{
\hline
\Ic{\em} & \ezo{m-1} & \ep &     &     &             & \cI{}\\
\Ic{}    &           & \ez & \ez &     &             & \cI{}\\
\Ic{}    &           &     & C   &C\dg &             & \cI{}\\
\Ic{}    &           &     &     & \em & \ezo{k-m-4} & \cI{\ep}  \\ \hline
         &           &     & D   &D\dg &             &  \\ \hline \hline \\[-0.8em] \hline
\Ic{\em} & \ezo{m-1} & \ep &     &     &             & \cI{}\\
\Ic{}    &           & \ez & \ez & \fz &             & \cI{}\\
\Ic{}    &           &     &     & \em & \ezo{k-m-4} & \cI{\ep} \\ \hline
}\qquad
{\rm with }\,\begin{array}{l}
(C,D,{\rm factor})\\
= (\ep,\em,0),(\ep,\fm,0),\\
\quad (\fp,\em,-1/3),(\fp,\fm,-1/3),(\fpp,\fmm,0),\\
\quad (\em,\ep,-1/6),(\em,\fp,-1/6),\\
\quad (\fm,\ep,-1/2),(\fm,\fp,+1/6),(\fmm,\fpp,+1/3)~,
\end{array}\label{eq:cd_middle_right}
}
which read
\eq{
0 &= r~\bmatc{
\hline
\em & \ezo{m-1} & \ep &     &     &             & \\
    &           & \ez & \ez & \fz &             & \\
    &           &     &     & \em & \ezo{k-m-4} & \ep \\\hline
}{|ccccccc|}_{~i-m} \nx
&= -\cem q~\bmatc{
\hline
\Ic{\em} & \ezo{m-2} & \ep &     &     &             & \cI{}\\
\Ic{}    &           & \ez & \ez & \fz &             & \cI{}\\
\Ic{}    &           &     &     & \em & \ezo{k-m-4} & \cI{\ep}  \\ \hline}{ccccccc}_{~i-m}
+\cem q~\bmatc{
\hline
\Ic{\em} & \ezo{m-1} & \ep &     &     &             & \cI{}\\
\Ic{}    &           & \ez & \ez & \fz &             & \cI{}\\
\Ic{}    &           &     &     & \em & \ezo{k-m-5} & \cI{\ep}  \\ \hline}{ccccccc}_{~i-m+1}
 \nx
&\qquad \begin{alignedat}{7}
            + q_k e_{-1}^{k-3} 
            \Bigl( &\cfz (+\quad0 &&\cep \cem +\quad0   &&\cep \cfm -\quad3 &&\cfp \cem -\quad3 &&\cfp \cfm +\quad0 &&\cfpp \cfmm \\
              &\quad  +3/2 &&\cem \cep +3/2 &&\cem \cfp -3/2 &&\cfm \cep +9/2 &&\cfm \cfp +\quad0 &&\cfmm \cfpp) \\
              +&\cez (+\quad0   &&\cep \cem +\quad0   &&\cep \cfm -1/3 &&\cfp \cem -1/3 &&\cfp \cfm +\quad0 &&\cfpp \cfmm \\
              &\quad -1/6 &&\cem \cep -1/6 &&\cem \cfp -1/2 &&\cfm \cep +1/6 &&\cfm \cfp +1/3 &&\cfmm \cfpp)\Bigr)~.
\end{alignedat}\label{eq:condition_k_middle}}

Finally, for $m = k-3$, we consider
\eq{
\boldB^k_{i-k+3} &= 
\bmatc{
\hline
\em & \ezo{k-4} & \ep &     &     \\
    &           & \ez & \ez & \fz \\\hline
}{|ccccc|}_{~i-k+3}~.
}
All the commutators generating this $\boldB^k_{i-k+3}$ are
\eq{
\bmat{
\cline{3-7}
&         &\Ic{\em} & \ezo{k-4} & \ep &     & \cI{} \\
&         &\Ic{}    &           & \ez & \ez & \cI{\fz}    \\ \cline{3-7}
&\em      & \ep     &           &           &        \\ \hline \\[-0.8em] \cline{2-7}
-&\Ic{\em}& \ez     & \ezo{k-4} & \ep &     & \cI{} \\
&\Ic{}    &         &           & \ez & \ez & \cI{\fz}\\ \cline{2-7}
}
~,
}
\eq{
\bmat{
\hline
\Ic{\em} & \ezo{k-4} & \ep &     & \cI{} \\
\Ic{}    &           & C   &C\dg & \cI{} \\
\Ic{}    &           &     & \fz & \cI{\fz} \\ \hline
         &           & D   &D\dg &        \\ \hline \hline \\[-0.8em] \hline
\Ic{\em} & \ezo{k-4} & \ep &     & \cI{} \\
\Ic{}    &           & \ez & \ez & \cI{\fz} \\ \hline
}\qquad
\begin{array}{l}
(C,D,{\rm factor})\\
= (\ep,\em,0),(\ep,\fm,0),\\
\quad (\fp,\em,-3),(\fp,\fm,-3),(\fpp,\fmm,0),\\
\quad (\em,\ep,+3/2),(\em,\fp,+3/2),\\
\quad (\fm,\ep,-3/2),(\fm,\fp,+9/2),(\fmm,\fpp,0)~,
\end{array}
}
\eq{
\bmat{
\hline
\Ic{\em} & \ezo{k-4} & \ep &     & \cI{} \\
\Ic{}    &           & \ez & \ez & \cI{} \\
\Ic{}    &           &     & C   & \cI{C\dg} \\ \hline
         &           &     & D   & D\dg       \\ \hline \hline \\[-0.8em] \hline
\Ic{\em} & \ezo{k-4} & \ep &     & \cI{} \\
\Ic{}    &           & \ez & \ez & \cI{\fz} \\ \hline
}\qquad
\begin{array}{l}
(C,D,{\rm factor})\\
= (\ep,\em,-1/6),(\ep,\fm,-1/2),\\
\quad (\fp,\em,-1/2),(\fp,\fm,-1/6),(\fpp,\fmm,+1/3),\\
\quad (\em,\ep,-1/6),(\em,\fp,-1/2),\\
\quad (\fm,\ep,-1/2),(\fm,\fp,-1/6),(\fmm,\fpp,+1/3)~.
\end{array}
}
It should be noted that the factors in the second and last commutators
correspond to Eq.~\eqref{eq:cd_middle_left} and Eq.~\eqref{eq:zzw_3_from_right}, respectively. 
These three forms give the following equation:
\eq{
0 &= r~\bmatc{
\hline
\em & \ezo{k-4} & \ep &     &     \\
    &           & \ez & \ez & \fz \\\hline
}{|ccccc|}_{~i-k+3} \nx
&= -\cem q~\bmatc{
\hline
\em & \ezo{k-5} & \ep &     &  \\
    &           & \ez & \ez & \fz  \\ \hline}{|ccccc|}_{~i-k+4}
 \nx
&\qquad \begin{alignedat}{6}
            + q_k e_{-1}^{k-3} 
            \Bigl( &\cfz (+\quad0   &&\cep \cem +\quad0 &&\cep \cfm -\quad3 &&\cfp \cem -\quad3 &&\cfp \cfm +\quad0 &&\cfpp \cfmm \\
              & \quad  +3/2 &&\cem \cep +3/2 &&\cem \cfp -3/2 &&\cfm \cep +9/2 &&\cfm \cfp +\quad0 &&\cfmm \cfpp) \\
              +&\cez (-1/6 &&\cep \cem -1/2 &&\cep \cfm -1/2 &&\cfp \cem -1/6 &&\cfp \cfm +1/3 &&\cfpp \cfmm \\
              & \quad   -1/6 &&\cem \cep -1/2 &&\cem \cfp -1/2 &&\cfm \cep -1/6 &&\cfm \cfp +1/3 &&\cfmm \cfpp)\Bigr)~.
\end{alignedat}\label{eq:condition_k_right}}
By adding together Eqs.~\eqref{eq:condition_k_left}, \eqref{eq:condition_k_middle}, and~\eqref{eq:condition_k_right}, 
$\{ q_{\boldA^{k-1}_j}\}$ can be eliminated, 
and we obtain an equation with only $\{ q_{\boldA^{k-1}_j}\}$ as follows:
\eq{
0 = q_k (k-1) e_{-1}^{k-3} \Bigl( \cfz &( +3/2 \cep \cem -3/2 \cep \cfm -3/2 \cfp \cem +3/2 \cfp \cfm ) \nx
                             +\cez &( -1/6 \cep \cem -1/2 \cep \cfm -1/2 \cfp \cem -1/6 \cfp \cfm + 1/3 \cfpp \cfmm) \Bigr)~,
\label{eq:necessary_condition_k}}
which is equivalent to Eq.~\eqref{eq:condition_3} under the assumption $\cem\neq0$. 
Therefore, 
together with exception handling in Appendix~\ref{sec:singular_case} and Sec.~\ref{sec:necessary}, 
the BLBQ model~\eqref{eq:H_BLBQ} has shown to have no $k$-local conserved quantity ($3\leq k\leq N/2$) 
except for known integrable systems. 
\end{widetext}

\subsection{$k=2$ case}\label{subsec45}
The argument in Step 1 holds exactly as in the general $k$ case (whereas the argument in Step 2 does not). 
As a result, 
the coefficients of the terms in the 2-local conserved quantities are zero 
except for the doubling operators $CC\dg$. 
Furthermore, the coefficients of the doubling operators can be expressed as
\eq{
q_{(C C\dg)_i} = q_{k=2} c~, 
}
by using a certain constant $q_{k=2}$. 
This means that 2-local operators of any 2-local conserved quantity are identical to the Hamiltonian. 
In other words, all 2-local conserved quantities are written as the sum of Hamiltonian and 1-local conserved quantities.
Therefore, there are no 2-local conserved quantities independent of the Hamiltonian. 

\subsection{$k=1$ case}\label{subsec46}
First, we consider the coefficients $q_{E_{m,i}}$. 
All the commutators generating $\boldB^2_i = (E_{+1} E_{-1})_i$ are as follows:
\eq{
\bmat{
 &\ez &     \\
 &\ep & \em \\ \hline
+&\ep & \em
}\qquad
\bmat{
 &    & \ez \\
 &\ep & \em \\ \hline
-&\ep & \em
}~,\label{eq:condition_absence_of_E0}
}
which read
\eq{
\cep q_{E_{0,i}} - \cep q_{E_{0,i+1}}=0
\quad\mbox{for all } i~.\label{eq:dummy_qe0i}}
Similarly, we consider all the commutators generating $\boldB^2_i = (E_{0} E_{\pm1})_i$ as
\eq{
\bmat{
 &E_{\pm1} &     \\
 &E_{\mp1} & E_{\pm1} \\ \hline
\pm&\ez & E_{\pm1}
}\qquad
\bmat{
 &    & E_{\pm1} \\
 &\ez & \ez \\ \hline
\mp&\ez & E_{\pm1}
}~,\label{eq:condition_absence_of_E+}
}
which read
\eq{
\cep q_{E_{\pm1,i}} = \cez q_{E_{\pm1,i+1}}
\quad\mbox{for all } i~.\label{eq:dummy_qe+i}
}
By substituting Eq.~\eqref{eq:couple_const_BLBQ} for these equations, 
we obtain 
\eq{
q_{E_{0,i}} &= q_\ez 
~,\nx
q_{E_{\pm1,i}} &= q_{E_{\pm1}} \quad \mathrm{independent \, of \,}i~.
}
When $D=0$, these equations correspond to the following conserved quantities:
\eq{
&\sumi E_{0,i} \qquad \left(= \sumi S^z_i\right) ~,\label{eq:total_z}\\
&\sumi E_{\pm 1,i} \qquad \left(= \frac{1}{\sqrt{2}} \sumi S^x_i \pm \frac{\im}{\sqrt{2}} \sumi S^y_i \right)~,\label{eq:total_pm}
}
that is, the total uniform magnetization. 
Meanwhile, 
if $D\neq0$, the first one, i.e., 
the total magnetization in the $z$ direction~\eqref{eq:total_z}, is conserved, 
but those in other directions~\eqref{eq:total_pm} are not. 
In fact, $q_{E_{\pm1}}=0$ when $D\neq0$ holds. 
This is shown by considering a $1$-local output $F_{\pm 1,i}$, 
which is generated only by 
\eq{
\bmat{
  & E_{\pm1} \\
  & \fz \\ \hline
\mp3 & F_{\pm1}
}~.\label{eq:no_e+}
}
Now that we have exhausted all independent conserved quantities 
for which $q_{E_{m,i}}\neq0$ for some $i$, 
we can assume $q_{E_{m,i}}=0$ for all $i$ in the following. 

Next, 
we show that there are no other $1$-local conserved quantities in the BLBQ model 
except for the case with $J_1\in\{0,J_2\}$. 
We consider all the commutators generating $\boldB^2_i = (F_{+1}E_{-1})_i$ as
\eq{
\bmat{
 &\fz &     \\
 &\ep & \em \\ \hline
3&\fp & \em
}\qquad
\bmat{
 &    & \fz \\
 &\fp & \fm \\ \hline
-3&\fp & \em
}~,
}
which read
\eq{
\cep q_{F_{0,i}} = \cfp q_{F_{0,i+1}}
\quad\mbox{for all } i~.
\label{eq:dummy_qf0i}
}
Here, if there exists a $1$-local conserved quantity with nonzero coefficient of $F_{0,i}$, 
then 
\eq{|\cep| = |\cfp|\label{eq:dummy_cepcfp}} 
must be satisfied. 
This is because the following holds:
\eq{\label{eq:abs_because}
\cep^N q_{F_{0,i}}
&= \cep^{N-1}\cfp  q_{F_{0,i+1}} \nx
&= \cep^{N-2}\cfp^2 q_{F_{0,i+2}} \nx
&= \cdots \nx
&= \cfp^N q_{F_{0,i}}~,
}
by the periodic boundary condition. 
Under Eq.~\eqref{eq:couple_const_BLBQ},
Eq.~\eqref{eq:dummy_cepcfp} implies
\eq{
J_1 - J_2/2 = \pm J_2/2~, 
\label{eq:condition_1-local}}
which is equivalent to $J_1\in\{0,J_2\}$. 
Thus, if $J_1\notin\{0,J_2\}$, 
$q_{F_{0,i}} = 0$ holds for all $i$. 

Similarly, we consider all the commutators generating $\boldB^2_i=F_{0,i}E_{\pm1,i+1}$ as
\eq{
\bmat{
 &F_{\pm1} &     \\
 &E_{\mp1} & E_{\pm1} \\ \hline
\pm&\fz & E_{\pm1}
}\qquad
\bmat{
 &    & F_{\pm1} \\
 &\fz & \fz \\ \hline
\mp3&\fz & E_{\pm1}
}~,
}
which read
\eq{
e_{\mp1} q_{F_{\pm1,i}} = 3 \cfz q_{F_{\pm1,i+1}}~.\label{eq:dummy_qfpmi}
}
Furthermore, we consider all the commutators generating $\boldB^2_i=F_{\pm1,i}E_{\pm1,i+1}$ as
\eq{
\bmat{
 &F_{\pm2}&     \\
 &E_{\mp1} & E_{\pm1} \\ \hline
\pm&F_{\pm1} & E_{\pm1}
}\qquad
\bmat{
 &    & F_{\pm2}\\
 &F_{\pm1} & F_{\mp1} \\ \hline
\mp&F_{\pm1} & E_{\pm1}
}~,
}
which read
\eq{
e_{\mp1} q_{F_{\pm2,i}} = f_{\pm1} q_{F_{\mp2,i+1}}~.\label{eq:dummy_qfppmmi}
}
By similar arguments to Eq.~\eqref{eq:abs_because}, Eqs.~\eqref{eq:dummy_qfpmi} and~\eqref{eq:dummy_qfppmmi} yield $q_{F_{\pm1,i}} =q_{F_{\pm2,i}}= 0$ and $q_{F_{+2,i}} = 0$, respectively, under the assumption $J_1 \notin \{0,J_2\}$. 
As a result, 
we show that the all of 1-local conserved quantities of the BLBQ model are Eq.~\eqref{eq:total_z} [and Eq.~\eqref{eq:total_pm} for $D=0$], 
except for the case with $J_1\in\{0,J_2\}$. 

Finally, we consider the model with $J_1\in\{0,J_2\}$. 
By Eqs.~\eqref{eq:dummy_qf0i}, \eqref{eq:dummy_qfpmi}, and~\eqref{eq:dummy_qfppmmi}, 
we have the following candidate for additional $1$-local conserved quantities: 
\eq{
&\sumi F_{0,i}\ \left(=3\sum_{i=1}^N (S_{i}^z)^2-2N\right) ~,\label{eq:total_f0}\\
&\sumi F_{\pm1,i}\ \left(= \sum_{i=1}^N\frac{\{S^z_i,S^x_i\}}{\sqrt{2}}
\pm \im\sum_{i=1}^N \frac{\{S^z_i,S^y_i\}}{\sqrt{2}}\right) ~,\label{eq:total_fpm}\\
&\sumi F_{\pm 2,i}\ \left(= \sum_{i=1}^N\frac{(S^x_i-S^y_i)^2}{2}
\pm \im\sum_{i=1}^N \frac{\{S^x_i,S^y_i\}}{2} \right)~,\label{eq:total_fppmm}
}
for $J_1=J_2$, and
\eq{
&\sumi (-1)^i F_{0,i}
~,\label{eq:stag_f0}\\
&\sumi (-1)^i F_{\pm1,i}
~,\label{eq:stag_fpm}\\
&\sumi (-1)^i F_{\pm 2,i}
~,\label{eq:stag_fppmm}
}
for $J_1=0$ and even $N$. 
We note that there is no additional $1$-local conserved quantity for $J_1=0$ if $N$ is odd 
by Eqs.~\eqref{eq:dummy_qf0i}, \eqref{eq:dummy_qfpmi} and~\eqref{eq:dummy_qfppmmi}. 

In the case of $J_1=J_2$ or both $J_1=0$ and even $N$,
all of them are actual conserved quantities if $D=0$. 
The first one [Eq.~\eqref{eq:total_f0} for $J_1=J_2$ and Eq.~\eqref{eq:stag_f0} for $J_1=0$] 
and the third one [Eq.~\eqref{eq:total_fppmm} for $J_1=J_2$ and Eq.~\eqref{eq:stag_fppmm} for $J_1=0$ and even $N$] 
are also conserved even for $D\neq0$. 
On the other hand, the coefficients of $F_{\pm1,i}$ vanish for $D\neq0$ by a similar argument in Eq.~\eqref{eq:no_e+}. 

Now, we summarize the above conclusions for the case of a nonintegrable system ($D\neq0$, $J_1=0$, and even $N$). 
In this case, there are three 1-local conserved quantities other than the total magnetization, 
which are in Eqs.~\eqref{eq:stag_f0} and~\eqref{eq:stag_fppmm}. 
Rewriting these quantities in spin basis and 
further expressing those as three independent Hermitian observables, 
we obtain
\eq{
&\sumi (-1)^i(S_i^z)^2~,\nx 
&\sumi (-1)^i((S_i^x)^2-(S_i^y)^2)~,\nx 
&\sumi (-1)^i(S_i^x S_i^y+S_i^y S_i^x)~.
}
This completes a proof of Theorem~\ref{thm:1and2}.

\section{
Nonintegrability of systems with anisotropic interactions
}\label{sec:necessary}
In Step~2 of the proof in Sec.~\ref{sec:proof}, 
we only considered one family of outputs $\boldB^k_{i-m}$ [Eq.~\eqref{eq:multiple_outputs}]. 
By using other outputs, we can obtain other equation for the coefficients. 
For example, we take a family of outputs $\boldB^k$ as the following 
type of operators:
\eq{
\bmatc{
\hline
 \ddots & & & & & \\
 \fpp & \fmm &     &     &             & \\
           & E_{+1,i} & \ep & \fmm &             & \\
           &     &     & \tilde{A} & \tilde{A}^{\dg} &  \\
   &   &   &   &   \ddots & \\
    \hline
}{|cccccc|}~.
\label{eq:multiple_outputs_b}
}
The fact that all coefficients of those are zero yields the following 
equation:
\begin{widetext}
    \eq{
    q_k\Bigl((\cep-\cfp)(\cfpp(\cez+3\cfz+6h)-(\cep^2+6\cep\cfp+\cfp^2))-\cfpp^2(\cem-\cfm) \Bigr) =0~.
    \label{eq:coeff_codition_3}}
\end{widetext}
Substituting Eq.~\eqref{eq:couple_const_BLBQ}, 
Eq.~\eqref{eq:coeff_codition_3} reduces to
\eq{-q_k(J_1-J_2)(J_1^2+J_1 J_2-6h J_2)=0~,}
which gives the rest of the proof of nonintegrability, i.e., 
for the model with $J_1\in\{0,-J_2\}$ and $D\neq0$. 

Up to this point, we restricted the discussion to the BLBQ model~\eqref{eq:H_BLBQ}. 
However, the exactly same proof is valid as long as 
all coupling constants are nonzero. 
As a more general model to which our discussion is applicable, 
we introduce the following model:
\begin{align}
    H&=\sum_{j=1}^{N}\Bigl(\sum_{m\in\{0,\pm 1\}}e_{m}E_{m,j}E_{-m,j+1}\notag\\
    &\hspace{40pt}+\sum_{m\in\{0,\pm 1,\pm 2\}}f_{m}F_{m,j}F_{-m,j+1}+hF_{0,j}\Bigr)~,
    \label{eq:generalizedH}
\end{align}
where $h\in\mathbb{R}$ and $e_m, f_m \in \mathbb{C}\setminus\{0\}$.
Here, we impose Hermiticity: 
Except for the restriction of coefficients to nonzero values, this extended model~\eqref{eq:generalizedH} can be characterized as a general model having the following three symmetries: the translation~$T$, the spin rotation around $z$ axis by arbitrary degree~$R_{z}(\theta)$, and the time reversal~$\Theta$. Each of them transforms the spin operators as
\begin{align}
    T\vec{S}_{j}T^{-1}&=\vec{S}_{j+1}\nx
    R_{z}(\theta)S_{j}^zR_{z}^{-1}(\theta)&=S_{j}^z\nx
    R_{z}(\theta)S_{j}^xR_{z}^{-1}(\theta)&=\cos\theta S_{j}^x+\sin\theta S_{j}^y\nx
    R_{z}(\theta)S_{j}^yR_{z}^{-1}(\theta)&=\cos\theta S_{j}^y-\sin\theta S_{j}^x\nx
    \Theta\vec{S}_{j}\Theta^{-1}&=-\vec{S}_{j}~.
\end{align}
Note that, through the time reversal $\Theta$, $E_{m}$ and $F_{m}$ change their sign of $m$, as in $\Theta E_{m} \Theta^{-1}=- E_{-m}$ and $\Theta F_{m} \Theta^{-1}= F_{-m}$, 
because of the antiunitarity of $\Theta$.

This model includes not only the BLBQ model~\eqref{eq:H_BLBQ}, 
but also models with anisotropic interactions such as 
the Fateev--Zamolodchikov model~\cite{Zamolodchikov1980}:
\begin{align}
    H_{\mathrm{FZ}}(\Delta)&=\sum_{j=1}^{N}\Bigl(\vec{S}_{j}\cdot\vec{S}_{j+1}-(\vec{S}_{j}\cdot\vec{S}_{j+1})^2\notag\\
     +&2(\Delta^2-1)\bigl(S_{j}^zS_{j+1}^z+2(S_{j}^z)^2-(S_{j}^z)^2(S_{j+1}^z)^2\bigr)\notag\\
    &\quad -2(\Delta-1)\{S_{j}^xS_{j+1}^x+S_{j}^yS_{j+1}^y,S_{j}^zS_{j+1}^z\}\Bigr)~.
    \label{eq:H_ZF}
\end{align}
This model is known to be integrable for any $\Delta\in\mathbb{R}$, 
and 
the isotropic case, which is given by $\Delta =1$, corresponds to the Takhtajan--Babujian model [$J_1=-J_2,\ D=0$ in Eq.~\eqref{eq:H_BLBQ}]. 
The Hamiltonian~\eqref{eq:H_ZF} 
corresponds to the extended model~\eqref{eq:generalizedH} with the following parameters:
\begin{align}
    &e_{0}=-\frac{1}{2}+2\Delta^2,\quad e_{\pm 1}=\frac{1}{2}+\Delta~,\notag\\
    &f_{0}=\frac{1}{18}-\frac{2\Delta^2}{9},\quad f_{\pm 1}=\frac{1}{2}-\Delta,\quad f_{\pm 2}=-1,\notag\\
    &h=-\frac{4}{9}+\frac{4\Delta^2}{9}~.
    \label{eq:H_ZF_parameters}
\end{align}
\change{This model satisfies $e_m, f_m \in \mathbb{R}$, 
which is equivalent to imposing an additional symmetry under the spin-$z$ flip $R_{x}(\pi)$, 
which transform the spin operators as
\begin{align}
    R_{x}(\pi)S_{j}^xR_{x}^{-1}(\pi)&=S_{j}^x~,\\
    R_{x}(\pi)S_{j}^yR_{x}^{-1}(\pi)&=-S_{j}^y~,\\
    R_{x}(\pi)S_{j}^zR_{x}^{-1}(\pi)&=-S_{j}^z~.
\end{align}
Some models with real $e_m$ and $f_m$ possess exactly solvable ground states~\cite{Klumper1993,Bartel2003}. 
In particular, it can be regarded as a generalization of the AKLT model.}

\change{By continuing the analysis of Step~2, we obtain the following result 
for the extended model with spin-$z$ flip symmetry.
\begin{thm}\label{thm:extended}
    In the extended model~\eqref{eq:generalizedH} with nonzero real coupling constants,  
    there are no 
    $k$-local conserved quantities with $3 \leq k \leq N/2$, 
    except for the following specific cases.
    The tuples below represent $(e_0,e_+,f_0,f_+,h)$, 
    where the proportionality constant is fixed by setting $\cfpp=1$:
\begin{enumerate}
    \item\label{enum:extended_Case1} $(\frac{\epsilon_1}{2},\frac{\epsilon_2}{2},\frac{\epsilon_1+2\epsilon_3}{18},\frac{\epsilon_2}{2},h_0)$
    \item $(-\frac{1}{2},-\frac{\epsilon_1}{2},\frac{1}{6},\frac{\epsilon_1}{2},0)$
    \item $(-1,\frac{\epsilon_1}{2},\frac{1}{3},-\frac{\epsilon_1}{2},0)$
    \item $(-\frac{\varphi+1}{2},\frac{\sqrt{\varphi}\epsilon_1}{2},\frac{1+3\varphi}{18},-\frac{\sqrt{\varphi}\epsilon_1}{2},\frac{\varphi-3}{18\varphi})$
    \item $(-\frac{1}{2},-\frac{\sqrt{2}\epsilon_1}{4},\frac{5}{18},\frac{\sqrt{2}\epsilon_1}{4},\frac{1}{36})$
    \item\label{enum:extended_Case6} $(\frac{1-4\Delta^2}{2},\frac{\epsilon_1-2\Delta}{2},\frac{4\Delta^2-1}{18},\frac{\epsilon_1+2\Delta}{2},\frac{4(1-\Delta^2)}{9})$
\end{enumerate}
    Here, $\epsilon_i\in\{\pm1\}$, $h_0\in\mathbb{R}$, 
    and $\Delta\in\mathbb{R}\setminus\{\pm\frac{1}{2}\}$ are arbitrary parameters, and $\varphi=(1+\sqrt{5})/2$ denotes the golden ratio.
\end{thm}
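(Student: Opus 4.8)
The plan is to continue the Step-2 analysis begun in Sec.~\ref{sec:necessary}; the starting point is that Step~1 of Sec.~\ref{sec:proof} is model-independent. Propositions~\ref{prop:almost_dp_k}, \ref{prop:dp_k}, and~\ref{prop:dp_factor_k} use only that every coupling constant is nonzero, together with the commutator structure~\eqref{eq:commutator_property} and $k\le N/2$; hence they apply verbatim to the extended Hamiltonian~\eqref{eq:generalizedH} with nonzero (here, nonzero real) couplings, and for any candidate $k$-local conserved quantity with $3\le k\le N/2$ every operator of full support $k$ must be a doubling operator whose coefficient is fixed, through Eq.~\eqref{eq:dp_factor_k}, by a single scalar $q_k$. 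Since a genuine $k$-local quantity cannot have all these coefficients vanish, the theorem reduces to showing that the conditions $\{r_{\boldB^k_i}=0\}$ force $q_k=0$ unless the couplings lie on one of the six listed families.

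For Step~2 I would assemble a sufficiently large collection of $k$-local outputs $\boldB^k$: the two families~\eqref{eq:multiple_outputs} and~\eqref{eq:multiple_outputs_b} used so far are not enough on the five-parameter space $(\cez,\cep,\cfz,\cfp,h)$ that remains after the normalization $\cfpp=1$. Following the template of~\eqref{eq:multiple_outputs_b}, I would build further outputs from low-weight seed blocks --- e.g., $E_{\pm1}F_0E_{\mp1}$, $F_{\pm1}E_{\pm1}F_{\mp1}$, $F_{\pm2}E_{\pm1}F_{\mp2}$ and their time-reversed and reflected partners --- each padded by a run of $E_0$'s so that the $k$-dependence factors out cleanly. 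For every such family, the same telescoping cancellation of the $q_{k-1}$-level coefficients that turned Eqs.~\eqref{eq:condition_k_left}--\eqref{eq:condition_k_right} into Eq.~\eqref{eq:necessary_condition_k} (and that produced Eq.~\eqref{eq:coeff_codition_3}) yields a scalar relation $q_k\,P_a(\cez,\cep,\cfz,\cfp,h)=0$ with $P_a$ polynomial in the couplings. Conservation then forces $q_k=0$ outside the common zero locus $V=\{P_a=0\}$.

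The main obstacle --- and the substantive content of the theorem --- is to show that $V$, intersected with the region of nonzero real couplings and with the normalization $\cfpp=1$, is exactly the union of the six stated families (recall that the extended model~\eqref{eq:generalizedH} with nonzero real couplings is precisely the class of spin-1 chains invariant under translation, $R_z(\theta)$, time reversal $\Theta$, and the spin-$z$ flip $R_x(\pi)$). This is an elimination computation: one computes a Gr\"obner basis of the ideal $(P_a)\subset\mathbb{Q}[\cez,\cep,\cfz,\cfp,h]$, factors it into irreducible components, and matches each with one of the families --- I expect these to be: one branch reproducing the Fateev--Zamolodchikov parametrization~\eqref{eq:H_ZF_parameters} (up to the overall sign fixed by $\cfpp=1$, with $\Delta=\pm\tfrac12$ excluded precisely because a coupling would then vanish), which is Case~\ref{enum:extended_Case6}; the integrable loci $J_1\in\{0,J_2,-J_2\}$ of the BLBQ submodel~\eqref{eq:H_BLBQ}, which reappear among Cases~\ref{enum:extended_Case1}, 2, and~\ref{enum:extended_Case6}; and two isolated points, the golden-ratio point (Case~4) and the $\sqrt2$ point (Case~5), with the discrete parameters $\epsilon_i$ labeling distinct branches. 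One must also verify the converse --- that on each of the six families every $P_a$ vanishes identically --- so that the list is complete and not spuriously enlarged; this is direct substitution once the families are known. I expect the genuine difficulty to lie in (i) choosing enough output families that $V$ is cut down to exactly these components and no more, and (ii) carrying the polynomial elimination to completion and recognizing its irreducible pieces, which is where computer algebra becomes indispensable.
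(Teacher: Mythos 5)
Your proposal follows essentially the same route as the paper: carry over the Step-1 reduction to a single scalar $q_k$ (valid for any nonzero couplings), generate a sufficiently large family of $k$-local outputs to obtain a system of polynomial necessary conditions $q_k P_a=0$ (the paper's Eqs.~\eqref{eq:necessary_integrable_EzEzFz}--\eqref{eq:necessary_integrable_CpCmEz_2}, simplified under the reality assumption), and then solve that system to show its real nonzero solution set is exactly the six listed families. The only difference is presentational --- you make the elimination/Gr\"obner-basis step explicit where the paper simply says ``solving these equations'' --- so the approach matches.
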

\begin{proof}[Proof (outline)]
    For example, from Eq.~\eqref{eq:coeff_codition_3}, we find that either $\cep=\cfp$ or
    \begin{equation}
        \cfpp(\cez+3\cfz+6h)-(\cep^2+6\cep\cfp+\cfp^2)-\cfpp^2=0~.
    \end{equation}
    When $\cep=\cfp$, Eq.~\eqref{eq:necessary_condition_k} implies $2|\cep|=|\cfpp|$. 
    Setting $\cfpp=1$ yields $\cep=\frac{\epsilon_2}{2}$.
    In general, by examining outputs other than those in Eqs.~\eqref{eq:multiple_outputs} and \eqref{eq:multiple_outputs_b}, 
    we obtain additional polynomial relations among the coefficients.
    Solving these equations, 
    we find that the only admissible solutions are those listed above.
\end{proof}
For explicit lists of the resulting polynomials and for the cases without spin-$z$ symmetry, see Appendix~\ref{sec:list_of_eq}.}

\change{From Eq.~\eqref{eq:H_ZF_parameters}, it follows that 
case~\ref{enum:extended_Case6} with $\epsilon_1=-1$ corresponds to the Fateev--Zamolodchikov model, which is solved by the Bethe ansatz~\cite{Mezincescu1990}.
In fact, all these cases~\ref{enum:extended_Case1}--\ref{enum:extended_Case6} are known~\footnote{
Precisely speaking, the first solution appears in Refs.~\cite{idzumi1994,BibikovNuramatov2016} only when $h=\frac{2\epsilon_1-2\epsilon_3}{9}$.
However, since the corresponding $R$-matrix commutes with $F_{0}\otimes I+I\otimes F_0$, adding a term proportional to $\sum_{i=1}^NF_{0,i}$ does not affect integrability.
} 
to be derived from the Yang--Baxter equation~\cite{idzumi1994,BibikovNuramatov2016}, 
which implies integrablity of these models~\cite{Luscher1976}.
Therefore, Theorem~\ref{thm:extended} demonstrates that, 
within the extended model~\eqref{eq:generalizedH} with nonzero real coupling constants, 
all models are nonintegrable except for these known integrable cases~\ref{enum:extended_Case1}--\ref{enum:extended_Case6}.}

\section{Conclusion and Discussion}\label{sec:conclusion}
We have proved that except for the known integrable systems, 
all systems in the BLBQ model with an anisotropic field are nonintegrable, 
by showing the absence of $k$-local conserved quantities with $3\leq k\leq N/2$. 
In particular, by proving that the AKLT model is nonintegrable, 
the existence of nonthermal energy eigenstates in that model is not caused by integrability.
We have also demonstrated the nonintegrability of systems with anisotropic interactions beyond the BLBQ model.
Our results provide the first proof of nonintegrability in spin-$1$ systems, 
which was considered to be a challenging extension of nonintegrability proofs~\cite{Shiraishi2019}. 
        
The operator basis~\eqref{eq:basis} we used is based on spherical harmonic functions. 
Such an operator basis is similarly constructed for the $S\geq 3/2$ spins~\cite{Madore1992}. 
Although some commutators of two basis elements yield linear combinations of multiple basis elements in higher spins, 
we believe that our proof of nonintegrability in spin-$1$ systems 
provides useful insights into the possibility of that in spin $S\geq 3/2$ systems. 

\textit{Note added.}
During the completion of this manuscript, we became aware of an independent related work by \mbox{Park} and \mbox{Lee}~\cite{Park2025BLBQ}.

\begin{acknowledgments}
The authors thank Naoto Shiraishi for fruitful discussions.
We also appreciate Hosho Katsura and Nicholas O'Dea for their feedback on the manuscript.
A.H. was supported by KAKENHI
Grant No. JP25KJ0833 from the Japan Society for the
Promotion of Science (JSPS) and 
FoPM, a WINGS Program, the University of Tokyo. 
A.H. also acknowledges support from JSR Fellowship, 
the University of Tokyo.
M.Y. was supported by KAKENHI Grant No. JP25KJ0815 from JSPS and FMSP, a WINGS Program, The University of Tokyo.
Y.C. was supported by the Special Postdoctoral Researchers Program at RIKEN and JST ERATO Grant No.~JPMJER2302, Japan.
\end{acknowledgments}

\appendix

\section{For $J_2=0$ and $J_2=2J_1$}\label{sec:singular_case}
In the main text, we omitted the case $J_2=0,2J_1$, 
where some coupling constants are zero. 
In this section, 
we discuss the local conserved quantities in these two parameter regions 
and complete the proof of our results.

Before the proof, we need to clarify some terminology. 
So far, we have used the two words, 
elements of the operator basis and interaction terms in the same sense, 
but in the case of $J_2=0,2J_1$ the two are different concepts. 
That is, among the elements of the operator basis, 
we refer to only those with nonzero coupling constants as interaction terms.
In addition, among operators of the form in Eq.~\eqref{eq:dp_k}, 
we call the doubling operator only those 
for which all $C_{(n)}$ are interaction terms. 

\subsection{Step 1 (analysis of $r_{{\bf B}^{k+1}_i} = 0$)}\label{subsec_appx_step1}
Let $k$ be an integer between $2$ and $N/2$. 
First, we show that the $k$-local term of a $k$-local conserved quantity is a doubling operator 
in two steps, as in the main text. 
\begin{prop}\label{prop:almost_dp_singular}
Assume $Q$ to be a $k$-local conserved quantity of the BLBQ Hamiltonian~\eqref{eq:H_BLBQ} with $J_2=0$ or $J_2=2J_1$.
$q_\aki=0$ holds for any $i$ unless 
there are interaction terms $\{C_{(n)}\}$ and $A_{(k)}$ satisfying
\eq{
\ak \propto 
\bmatc{
\hline
\Cone & \Cone\dg &          &        &       &          &\\
      & \Ctwo    & \Ctwo\dg &        &       &          &\\
      &          &          & \ddots &       &          &\\
      &          &          &        & \Ckmm & \Ckmm\dg &\\
      &          &          &        &       & \Ckm     & A_{(k)}\\\hline
}{|ccccccc|}~.
\label{eq:almost_dp_singular}
}
\end{prop}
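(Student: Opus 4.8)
The plan is to replay the proof of Prop.~\ref{prop:almost_dp_k} while keeping careful track of which operator-basis elements are \emph{interaction terms}, i.e.\ carry a nonzero coupling in the Hamiltonian~\eqref{eq:couple_const_BLBQ}. For $J_2=0$ these are $\{E_0,E_{\pm1}\}$ (together with the on-site field $hF_0$), and for $J_2=2J_1$ one has $e_m=0$, so they are $\{F_0,F_{\pm1},F_{\pm2}\}$; in both regimes the set is closed under the dagger. Two facts, read off Table~\ref{tbl:CommutatorTable}, drive the argument: (a) for every non-identity basis element $X$ there is an interaction term $D$ with $[X,D]\neq0$ (a finite check, carried out separately in the two regimes), and (b) for any basis element $C$ the map $Z\mapsto[C,Z]$ is injective up to scalars on the elements not commuting with $C$ --- this is automatic, since the subscript of $[C,Z]$ is the sum of the subscripts of $C$ and $Z$ by~\eqref{eq:commutator_property} and distinct basis elements are never proportional.

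First I would carry out two ``boundary reductions.'' Fix a $k$-local term $\boldA^k=\Aone\Atwo\cdots A_{(k)}$ of $Q$ with $\Aone,A_{(k)}\neq I$. Choose by (a) an interaction term $D$ with $[A_{(k)},D]\neq0$ and consider the $(k+1)$-local output $\boldB^{k+1}_i=\Aone\cdots A_{(k-1)}[A_{(k)},D]D\dg$. By (b) its only left-form source is $\boldA^k$ itself, while a right-form source would have to commute an interaction whose left factor equals the site-$i$ element $\Aone$ of the output, i.e.\ $\Aone\Aone\dg$, with a term of $Q$; if $\Aone$ is not an interaction term this is not a term of $H$ and no such source exists, so $r_{\boldB^{k+1}_i}=0$ forces $q_{\boldA^k_i}=0$. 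Symmetrically, letting an interaction act on the two leftmost sites of $\boldA^k$ yields an output whose only left-form source would require $A_{(k)}\dg$ (equivalently $A_{(k)}$) to be an interaction term; hence $q_{\boldA^k_i}=0$ unless $A_{(k)}$ is an interaction term as well. From now on we may assume $\Aone=\Cone$ and $A_{(k)}$ are interaction terms.

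Next I would propagate the doubling pattern from left to right, as in Prop.~\ref{prop:almost_dp_k}. Since $\Cone$ is an interaction term, the output $\boldB^{k+1}_i$ above now admits, besides $\boldA^k$, at most one further (right-form) source, obtained by commuting with $\Cone\Cone\dg$; this source is nonzero only when $\Atwo\propto[\Cone\dg,\Ctwo]$ for some basis element $\Ctwo$ (unique by (b)), and then it equals the shifted term $\bigl(\Ctwo\,\Athree\cdots A_{(k-1)}[A_{(k)},D]D\dg\bigr)_{i+1}$. Thus $r_{\boldB^{k+1}_i}=0$ gives either $q_{\boldA^k_i}=0$, when no such $\Ctwo$ exists, or $q_{\boldA^k_i}$ proportional to the coefficient of that shifted term with a nonzero constant fixed by the couplings and Table~\ref{tbl:CommutatorTable}. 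Applying the boundary reduction just proved to the shifted term forces its leading factor $\Ctwo$ to be an interaction term; re-running the step one site further forces $\Athree\propto[\Ctwo\dg,\Cthree]$ with $\Cthree$ an interaction term, and so on. After at most $k-2$ iterations one concludes that either $q_{\boldA^k_i}=0$ or $\boldA^k$ has exactly the form Eq.~\eqref{eq:almost_dp_singular} with all the $C_{(n)}$ and $A_{(k)}$ interaction terms; formally this is an induction on the number of leading factors already known to match the pattern, with base case supplied by the boundary reduction.

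The main obstacle I anticipate is combinatorial rather than conceptual: one must verify, separately for $J_2=0$ and $J_2=2J_1$, that the restricted interaction set is still rich enough for (a) at every basis element, and that for an operator genuinely of the form Eq.~\eqref{eq:almost_dp_singular} the auxiliary right-end interaction $D$ can always be chosen so that the chain of proportionalities routes through admissible operators rather than stalling at one that the reductions would annihilate. These are finite checks against Table~\ref{tbl:CommutatorTable}; with them in hand, the rest is a faithful transcription of the general-$k$ argument of Subsec.~\ref{subsec43}.
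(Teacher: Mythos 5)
Your proof is correct and follows essentially the same route as the paper's: replay the inductive argument of Prop.~\ref{prop:almost_dp_k}, using the two facts that every basis element fails to commute with some interaction term and that a right-form source exists only when the output's leading factor corresponds to an actual interaction term of $H$. Your additional right-boundary reduction showing that $A_{(k)}$ must itself be an interaction term is a harmless (and, under a literal reading of the statement, welcome) strengthening that the paper's proof leaves implicit.
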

\begin{proof}
The proof is similar to that of Proposition~\ref{prop:almost_dp_k}. 
We note that in both cases $J_2=0$ and $J_2=2J_1$, 
for any element $A$ of the operator basis 
some interaction term does not commute with $A$. 

First, we consider the case where $\ak = \Aone \Atwo \cdots A_{(k-1)} A_{(k)}$ and 
$\Atwo$ is not proportional to $[\Aone \dg, C]$ for any $C$. 
For some interaction term $D$ that does not commute with $A_{(k)}$, 
$\ak$ is the only operator 
whose commutator with the Hamiltonian gives $\bkp = \Aone \Atwo \cdots A_{(k-1)} [\boldA_{(k)},D] D\dg$: 
\eq{
\bmat{
\Aone & \Atwo & \cdots & A_{(k-1)} & A_{(k)} \\
      &       &        &            & D & D\dg \\ \hline
\Aone & \Atwo & \cdots & A_{(k-1)} & [A_{(k)},D] & D\dg
}~. 
}
Thus, $q_\Aki=0$. 
If $\Aone$ is not an interaction term, then $q_\Aki=0$ as well.

For the general case, the same inductive argument as for Proposition~\ref{prop:almost_dp_k} leads to the conclusion.
\end{proof}

Next, we show that the coefficient of $\aki$ is zero 
unless $\Ckm^{\dg} = A_{(k)}$.
\begin{prop}
Assume $Q$ to be a $k$-local conserved quantity of the BLBQ Hamiltonian~\eqref{eq:H_BLBQ} with $J_2=0$ or $J_2=2J_1$.
$q_\aki=0$ holds unless $\ak$ is a doubling operator, that is,
\eq{
\ak = 
\bmatc{
\hline
\Cone & \Cone\dg &          &        &       &          &\\
      & \Ctwo    & \Ctwo\dg &        &       &          &\\
      &          &          & \ddots &       &          &\\
      &          &          &        & \Ckmm & \Ckmm\dg &\\
      &          &          &        &       & \Ckm     &\Ckm\dg \\\hline
}{|ccccccc|}~.
}
\label{prop:dp_singular}
\end{prop}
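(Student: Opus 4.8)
The plan is to reproduce the argument of Prop.~\ref{prop:dp_3} (and of its $k$-local generalization Prop.~\ref{prop:dp_k}) with one essential modification: every basis element that appears in the role of an interaction term in a column relation must now carry a nonzero coupling constant. By Prop.~\ref{prop:almost_dp_singular} we already know that any $\ak$ with $q_{\aki}\neq0$ has the almost-doubling shape of Eq.~\eqref{eq:almost_dp_singular}, with $\Cone,\dots,\Ckm$ and $A_{(k)}$ all interaction terms; what is left to prove is that the last pair also doubles, i.e.\ $A_{(k)}=\Ckm\dg$.

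First I would fix such an $\ak$ with $A_{(k)}\neq\Ckm\dg$ and pick an interaction term $D$ not commuting with $A_{(k)}$; such a $D$ exists in both singular cases, since already the interaction terms fail to commute with every basis element. Commuting the rightmost site of $\ak$ with $D_{j}D\dg_{j+1}$ produces an output $\bkp$ which, by the uniqueness in Eq.~\eqref{eq:two_column}, is generated by at most one other commutator, namely one in which an interaction term acts on the leftmost sites of a shifted operator. Setting the coefficient of $\bkp$ to zero then relates $q_{\aki}$, with a nonzero proportionality constant, to the coefficient of an operator with a shorter genuine-doubling tail, and, iterating exactly as in Eqs.~\eqref{eq:not_doubling_k_1}--\eqref{eq:not_doubling_k_2}, ultimately to the coefficient of an operator $\propto C C C\dg$ that cannot be written in the form of Eq.~\eqref{eq:almost_dp_singular}; the latter coefficient vanishes by Prop.~\ref{prop:almost_dp_singular}, so $q_{\aki}=0$. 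As in Prop.~\ref{prop:dp_3} it is convenient to first run this for one convenient value of $A_{(k)}$ (namely $A_{(k)}=\ez$ when $J_2=0$, and $A_{(k)}=\fz$ when $J_2=2J_1$) and then reduce the remaining values to that case.

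The new bookkeeping is which interaction terms are available, so I would split the two parameter regions. For $J_2=0$ one has $e_0=e_{\pm1}=J_1\neq0$ and $f_m=0$, so the interaction terms are exactly $\{\ez,E_{\pm1}\}$; this set is closed under taking (nonzero) commutators, as one reads off Table~\ref{tbl:CommutatorTable}, hence every candidate doubling operator is built purely from $E$'s and all the chains above can be realized using $E$-type interactions alone. For $J_2=2J_1$ one has $e_0=e_{\pm1}=0$ and $f_0,f_{\pm1},f_{\pm2}\neq0$, so the interaction terms are $\{\fz,F_{\pm1},F_{\pm2}\}$; here I would check from Table~\ref{tbl:CommutatorTable} that these still admit enough nonvanishing commutators to continue every chain. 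In both regions $J_1\neq0$ (the point $J_1=J_2=0$ being trivial), which is exactly what guarantees the relevant couplings are nonzero.

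The main obstacle is the $J_2=2J_1$ region: there the $E$-type operators are \emph{not} interaction terms, so the chains that in Prop.~\ref{prop:dp_3} passed through $E$-type interactions have to be re-routed through the $F$-type interactions, and one must verify exhaustively --- over the finitely many basis elements, using Table~\ref{tbl:CommutatorTable} --- that every re-routed chain (i) never invokes a vanishing coupling and (ii) terminates at an operator genuinely outside the almost-doubling family, so that Prop.~\ref{prop:almost_dp_singular} can be applied. This case check is routine but must be carried out for each admissible $A_{(k)}$; once it is, $q_{\aki}=0$ follows for every $\ak$ that is not a doubling operator.
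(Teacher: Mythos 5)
Your plan follows essentially the same route as the paper's proof: invoke Prop.~\ref{prop:almost_dp_singular}, then chain proportionalities through shifted operators using only the interaction terms available in each parameter region ($E$-type for $J_2=0$, $F$-type for $J_2=2J_1$) until the chain terminates at an operator of the form $CCC\dg$ that violates the almost-doubling condition. The only cosmetic difference is that the paper organizes the $J_2=2J_1$ case analysis by the value of $\Ckm\in\{\fz,F_{\pm1},F_{\pm2}\}$ rather than by $A_{(k)}$, but the mechanism and the table checks you defer to are exactly those carried out there.
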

\begin{proof}
In the case $J_2=0$, the proof is exactly the same as Proposition~\ref{prop:dp_k}. 

We prove for the case $J_2=2J_1$. 
For notational simplicity, we show only for $k=3$, 
but the extension to general $k$ is straightforward as in the main text.
First, we prove for the case of $\Ctwo = \fz$. 
That is, we show that a coefficient of $\Ak = \bmatc{
\hline
\Cone & \Cone \dg& \\
& \fz & F_{m} \\ \hline
}{|ccc|}$ is zero if $ m\neq 0$. 
In fact, we have 
\eq{
q~\bmatc{
\hline
\Cone & \Cone \dg& \\
& \fz & F_{m} \\ \hline
}{|ccc|}_{~i}
\propto
q~\bmatc{
\hline
\fz & F_m & \\
& F_{-m} & F_m \\ \hline
}{|ccc|}_{~i+1}
= 0~,\label{eq:prf_of_doubling_A1}
}
where we applied Proposition~\ref{prop:almost_dp_singular} to 
$\Ak = \bmatc{
\hline
\fz & F_m & \\
& F_{-m} & F_m \\ \hline
}{|ccc|} \propto \fz\ez F_m$
in the right equation. 
Similarly, we can prove for the case of $\Ctwo = F_{\pm2}$ 
by the following calculation: 
\eq{
q~\bmatc{
\hline
\Cone & \Cone \dg& \\
& F_{\pm2} & F_{m} \\ \hline
}{|ccc|}_{~i}
\propto
q~\bmatc{
\hline
F_{\pm2} & F_m & \\
& F_{\pm1-m} & F_{{\mp1+m}} \\ \hline
}{|ccc|}_{~i+1}
= 0~.\label{eq:prf_of_doubling_A2}
}
Finally, we consider the case of $\Ctwo = F_{\pm1}$. 
If $\Athree=F_m$ and $-1\neq\pm m\leq1$, then
we have 
\eq{
q~\bmatc{
\hline
\Cone & \Cone \dg& \\
& F_{\pm1} & F_{m} \\ \hline
}{|ccc|}_{~i}
&\propto
q~\bmatc{
\hline
F_{\pm1} & F_m & \\
& F_{\mp1-m} & F_{{\pm1+m}} \\ \hline
}{|ccc|}_{~i+1}
\nx
&\propto
q~\bmatc{
\hline
\fz & F_{{\pm1+m}} & \\
& F_{\mp1-m} & F_{{\pm1+m}} \\ \hline
}{|ccc|}_{~i+2}
=0~,
}
where we applied Eq.~\eqref{eq:prf_of_doubling_A1}. 
When $\Athree=F_{\pm2}$, then 
\eq{
q~\bmatc{
\hline
\Cone & \Cone \dg& \\
& F_{\pm1} & F_{\pm2} \\ \hline
}{|ccc|}_{~i}
&\propto
q~\bmatc{
\hline
F_{\pm1} & F_{\pm2} & \\
& F_{\mp1} & F_{{\pm1}} \\ \hline
}{|ccc|}_{~i+1}
\nx
&\propto
q~\bmatc{
\hline
F_{\pm2} & F_{{\pm1}} & \\
& F_{\mp1} & F_{{\pm1}} \\ \hline
}{|ccc|}_{~i+2}
=0~,
}
where we applied Eq.~\eqref{eq:prf_of_doubling_A2}. 
This completes the proof.
\end{proof}

Furthermore, 
the coefficients of the doubling operators are proportional to each other, 
as Proposition~\ref{prop:dp_factor_k}.
\begin{prop}
Assume $Q$ to be a $k$-local conserved quantity of the BLBQ Hamiltonian~\eqref{eq:H_BLBQ} with $J_2=0$ or $J_2=2J_1$, 
then the coefficients of all doubling operators are proportional to each other. 
Precisely, their coefficients can be written as
\eq{
&q~\bmatc{
\hline
\Cone & \Cone\dg &          &        &       &          &\\
      & \Ctwo    & \Ctwo\dg &        &       &          &\\
      &          &          & \ddots &       &          &\\
      &          &          &        & \Ckmm & \Ckmm\dg &\\
      &          &          &        &       & \Ckm     &\Ckm\dg \\\hline
}{|ccccccc|}_{~i} \nx
&= q_k c_{(1)} c_{(2)} \cdots c_{(k-2)} c_{(k-1)}~,
\label{eq:dp_factor_singular}
}
by using some constant $q_k$. 
Here, we denote the coupling constant of the interaction term $C_{(n),j} C_{(n),j+1}\dg$ as $c_{(n)}$. 
\label{prop:dp_factor_singular}
\end{prop}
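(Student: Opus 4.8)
The plan is to mirror the two-step argument used for Prop.~\ref{prop:dp_factor_k} (first: the coefficients of all doubling operators are proportional to one another; then: the proportionality constants are the products of the relevant couplings), the only new point being that in the singular cases $J_{2}=0$ and $J_{2}=2J_{1}$ a \emph{doubling operator} is required to be built out of \emph{interaction terms} only, so every commutator invoked in the argument must be checked to involve exclusively basis elements with nonzero coupling. By Props.~\ref{prop:almost_dp_singular} and \ref{prop:dp_singular} the $k$-support part of a $k$-local conserved quantity $Q$ already consists only of such doubling operators, so only their coefficients must be related.

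For the proportionality I would fix a doubling operator with interaction-term word $(C_{(1)},\dots,C_{(k-1)})$ and pick any interaction term $C_{(k)}$ that does not commute with $C_{(k-1)}^{\dagger}$; then, exactly as in \eqref{eq:aaaad_k} and the proof of Prop.~\ref{prop:dp_factor_k}, the only two commutators of a $k$-support term of $Q$ with an interaction term of $H$ that can produce the associated $(k{+}1)$-support operator $C_{(1)}[C_{(1)}^{\dagger},C_{(2)}]\cdots[C_{(k-1)}^{\dagger},C_{(k)}]C_{(k)}^{\dagger}$ are the doubling operator $(C_{(1)},\dots,C_{(k-1)})$ at site $i$ commuted with $C_{(k),i+k-1}C_{(k),i+k}^{\dagger}$, and the doubling operator $(C_{(2)},\dots,C_{(k)})$ at site $i+1$ commuted with $C_{(1),i}C_{(1),i+1}^{\dagger}$; the on-site field $hF_{0}$ cannot contribute, since it does not enlarge the support. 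Imposing $r_{\bkpi}=0$ then yields $c_{(k)}\,q_{(C_{(1)},\dots,C_{(k-1)})_{i}}=c_{(1)}\,q_{(C_{(2)},\dots,C_{(k)})_{i+1}}$ with a definite nonzero ratio read off from Table~\ref{tbl:CommutatorTable}. Iterating this relation along a chain of interaction-term words and closing it with the periodic boundary condition connects the coefficient of any doubling operator with that of any other, \emph{provided} the alphabet of interaction terms, with an edge whenever two of them fail to commute, is a connected graph. This is a finite check from Table~\ref{tbl:CommutatorTable}: for $J_{2}=0$ (so $J_{1}\neq0$) the interaction terms are $\{\ez,\ep,\em\}$, connected via $\ep-\ez-\em$; for $J_{2}=2J_{1}$ (so $J_{1}\neq0$) they are $\{\fz,\fp,\fm,\fpp,\fmm\}$, connected e.g.\ along $\fpp-\fm-\fz-\fp-\fmm$. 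The same enumeration applies down to $k=2$, where a doubling operator is a single interaction term. Hence all doubling-operator coefficients carry a single degree of freedom.

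To fix the constants, I would observe that the relations $c_{(k)}\,q_{(C_{(1)},\dots,C_{(k-1)})_{i}}=c_{(1)}\,q_{(C_{(2)},\dots,C_{(k)})_{i+1}}$ above (together, for the full $k$-step shifts, with the analogues of the relation displayed in the proof of Prop.~\ref{prop:dp_factor_3}) are solved uniformly by the ansatz $q_{(C_{(1)},\dots,C_{(k-1)})_{i}}=q_{k}\,c_{(1)}c_{(2)}\cdots c_{(k-1)}$ for a single $i$-independent constant $q_{k}$: dropping $C_{(1)}$ and appending $C_{(k)}$ multiplies this expression by $c_{(k)}/c_{(1)}$, which is precisely what the relation demands. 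Since the interaction-term alphabet is connected, every doubling operator is reachable by such shifts, so the ansatz is forced and automatically path-independent, giving \eqref{eq:dp_factor_singular}.

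The step I expect to be the main obstacle is exactly this bookkeeping: one must confirm that after deleting the basis elements whose coupling vanishes, the surviving interaction terms still form a connected noncommutativity graph in \emph{both} singular cases, and that for each $(k{+}1)$-support output used in the chaining no third commutator contributes and the field term is genuinely inert. Once these finite checks are carried out, the rest is a verbatim adaptation of the non-singular proofs of Props.~\ref{prop:dp_factor_3} and \ref{prop:dp_factor_k}.
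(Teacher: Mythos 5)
Your route is the same as the paper's: the paper proves this proposition by remarking that, since every basis element fails to commute with some interaction term, the chaining-plus-normalization argument of Props.~\ref{prop:dp_factor_3} and \ref{prop:dp_factor_k} goes through verbatim once ``doubling operator'' is restricted to words of interaction terms. Your shift relation $c_{(k)}\,q_{(C_{(1)},\dots,C_{(k-1)})_{i}}=c_{(1)}\,q_{(C_{(2)},\dots,C_{(k)})_{i+1}}$, the exclusion of the on-site field, and the product ansatz are exactly that argument, and making the connectivity requirement explicit is a useful sharpening of the paper's one-line proof.

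There is, however, one concrete slip in the finite check you yourself single out as the main obstacle: you use the wrong adjacency relation. A word $(C_{(1)},\dots,C_{(k-1)})$ is a nonvanishing doubling operator, and the shift move is available, precisely when consecutive letters satisfy $[C_{(n)}^{\dagger},C_{(n+1)}]\neq 0$, not $[C_{(n)},C_{(n+1)}]\neq 0$ (see Eq.~\eqref{eq:general_column_expression} and the clause ``as long as $C_{(n-1)}^{\dagger}$ and $C_{(n)}$ are noncommutative'' in the proof of Prop.~\ref{prop:dp_factor_3}); you state the daggered condition when choosing $C_{(k)}$ but then build the graph with the undaggered one. For $J_{2}=2J_{1}$ this matters: by Table~\ref{tbl:CommutatorTable}, $[\fpp^{\dagger},\fm]=[\fmm,\fm]=0$ and $[\fm^{\dagger},\fmm]=[\fp,\fmm]\neq0$ but $[\fp^{\dagger},\fmm]=[\fm,\fmm]=0$, so two of the four edges in your proposed path $\fpp-\fm-\fz-\fp-\fmm$ are absent in the relevant graph and that path does not establish connectivity. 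The conclusion survives because the correct graph is still connected: $[\fmm,\fp]$, $[\fm,\fz]$, $[\fz,\fm]$, and $[\fp,\fmm]$ are all nonzero, giving the valid path $\fpp-\fp-\fz-\fm-\fmm$ (with self-loops at $F_{\pm1}$ and $F_{\pm2}$ for padding). For $J_{2}=0$ your path $E_{+1}-E_{0}-E_{-1}$ happens to be valid under the correct relation as well, so only the $J_{2}=2J_{1}$ check needs repair; with that substitution your proof is complete and equivalent to the paper's.
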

\begin{proof}
Using the fact that for any element of an operator basis 
there is an interaction term that does not commute with it, 
the proof is exactly the same as that of Proposition~\ref{prop:dp_factor_k}. 
\end{proof}

The argument so far holds for $k=2$. 
Therefore, as in the main text, 
we have also shown that there is no $2$-local conserved quantity independent of the Hamiltonian.

\subsection{Step 2 (analysis of $r_{{\bf B}^{k}_i} = 0$)}
The discussion in Step 2 of the main text does not use the fact 
that $f_m$ is not $0$ until Eq.~\eqref{eq:necessary_condition_k}. 
Therefore, Eq.~\eqref{eq:necessary_condition_k} is also correct for $J_2=0$, 
from which the absence of local conserved quantities is shown. 
On the other hand, since $e_m=0$ in the case of $J_2=2J_1$, 
we need to discuss this case separately. 
Specifically, instead of Eq.~\eqref{eq:multiple_outputs}, 
we utilize the following terms:
\eq{
\boldC^k_{i-m} &= \begin{cases}
\bmatc{
\hline
\ez & \ez & \fz &           &  \\
    &     & \fm & \ezo{k-4} & \fp \\\hline
}{|ccccc|}_{~i} \qquad  (m=0) \\ \\
\bmatc{
\hline
\fm & \ezo{m-1} & \fp &     &     &             & \\
    &           & \ez & \ez & \fz &             & \\
    &           &     &     & \fm & \ezo{k-m-4} & \fp \\\hline
}{|ccccccc|}_{~i-m}  \\ \qquad \qquad \qquad \qquad \qquad \qquad \qquad (1\leq m \leq k-4) \\ \\
\bmatc{
\hline
\fm & \ezo{k-4} & \fp &     &     \\
    &           & \ez & \ez & \fz \\\hline
}{|ccccc|}_{~i-k+3}  \quad  (m=k-3)~.
\end{cases}
\label{eq:multiple_outputs_c}
}\linebreak
Together with Eqs.~\eqref{eq:cd_k_left} and~\eqref{eq:cd_k_right} calculated in the main text, 
it is sufficient to obtain the factors of the following commutators:
\eq{
\bmat{
\hline
\Ic{\fp} &\cI{}\\
\Ic{C} & \cI{C^{\dg}}\\
\Ic{} & \cI{\fz}\\ \hline
D & {D^{\dg}} \\ \hline\hline \\[-0.8em] \hline
\Ic{\fp}&  \cI{} \\
\Ic{\ez} &\cI{\ez}\\ \hline
}\qquad
\bmat{
\hline
\Ic{\ez} &\cI{}\\
\Ic{C} & \cI{C^{\dg}}\\
\Ic{} & \cI{\fm}\\ \hline
D & {D^{\dg}} \\ \hline\hline \\[-0.8em] \hline
\Ic{\ez}&  \cI{\fz} \\
\Ic{} &\cI{\fm}\\ \hline
}~,
}
instead of Eqs.~\eqref{eq:cd_middle_left} and~\eqref{eq:cd_middle_right}. 
By a similar argument in the main text, 
we have the following equations of the coefficients: 
\begin{widetext}
\eq{
0 &= r~\bmatc{
\hline
    \ez & \ez & \fz &           &  \\
    &     & \fm & \ezo{k-4} & \fp \\\hline
}{|ccccc|}_{~i} 
= 
+\cfm q~\bmatc{
\hline
\ez & \ez & \fz &           &  \\
    &     & \fm & \ezo{k-5} & \fp \\\hline
}{|ccccc|}_{~i}
\begin{array}{llllll} + q_k f_{-1}^{k-3} 
            (\cfz (&+3/2 \cfp \cfm &+0 \cfpp \cfmm \\
                    &+3/2 \cfm \cfp &+0 \cfmm \cfpp))~,
\end{array}\label{eq:condition_singular_left}}
\eq{
0 &= r~\bmatc{
\hline
\fm & \ezo{m-1} & \fp &     &     &             & \\
    &           & \ez & \ez & \fz &             & \\
    &           &     &     & \fm & \ezo{k-m-4} & \fp \\\hline
}{|ccccccc|}_{~i-m} \nx
&= -\cfm q~\bmatc{
\hline
\Ic{\fm} & \ezo{m-2} & \fp &     &     &             & \cI{}\\
\Ic{}    &           & \ez & \ez & \fz &             & \cI{}\\
\Ic{}    &           &     &     & \fm & \ezo{k-m-4} & \cI{\fp}  \\ \hline}{ccccccc}_{~i-m}\!\!\!\!\!\!\!\!\!\!
+\cfm q~\bmatc{
\hline
\Ic{\fm} & \ezo{m-1} & \fp &     &     &             & \cI{}\\
\Ic{}    &           & \ez & \ez & \fz &             & \cI{}\\
\Ic{}    &           &     &     & \fm & \ezo{k-m-5} & \cI{\fp}  \\ \hline}{ccccccc}_{~i-m+1}
 \nx
&\qquad + q_k f_{-1}^{k-3} \cfz (-3   \cfp \cfm +0 \cfpp \cfmm +9/2 \cfm \cfp +0 \cfmm \cfpp)~,
\label{eq:condition_singular_middle}}
\eq{
0 &= r~\bmatc{
\hline
\fm & \ezo{k-4} & \fp &     &     \\
    &           & \ez & \ez & \fz \\\hline
}{|ccccc|}_{~i-k+3} 
= -\cfm q~\bmatc{
\hline
\fm & \ezo{k-5} & \fp &     &  \\
    &           & \ez & \ez & \fz  \\ \hline}{|ccccc|}_{~i-k+4}
\!\!\!\!\!\!\!\!\!\!
\begin{array}{llllll} + q_k f_{-1}^{k-3} 
            ( \cfz (&-3   \cfp \cfm &+0 \cfpp \cfmm \\
                    &+9/2 \cfm \cfp &+0 \cfmm \cfpp))~.
\end{array}\label{eq:condition_singular_right}}
\end{widetext}
By adding up these equations, we obtain 
\eq{
0 = 3/2q_k (k-1) \cfp f_{-1}^{k-2}\cfz~,
\label{eq:necessary_condition_singular}}
which means $q_k=0$. 
Therefore, we proved that 
the BLBQ model~\eqref{eq:H_BLBQ} with $J_2=2J_1$ has no $k$-local conserved quantity ($3\leq k\leq N/2$). 

\subsection{$k=1,2$ case}
Finally, we consider $k\in\{1,2\}$-local conserved quantities as in the main text. 
As already mentioned at the end of Appendix~\ref{subsec_appx_step1}, 
only the Hamiltonian is a $2$-local conserved quantity. 

For $1$-local conserved quantities, 
we show that, even for $J_2=0,2J_1$, 
the $1$-local conserved quantities are restricted to the total magnetization [Eqs.~\eqref{eq:total_z} and~\eqref{eq:total_pm} for $D=0$]. 
Again, the proof is almost the same as that in the main text. 
For example, we did not use the fact that 
the coupling constant is nonzero in the derivation of Eq.~\eqref{eq:dummy_qf0i}. 
Therefore, $q_{F_0,i}=0$ even in the case of $J_2=0,2J_1$. 
For $J_2=0$, the discussion in the text has already shown that 
the $1$-local conserved quantities are the total magnetization. 
In the case $J_2=2J_1$, only the arguments in Eqs.~\eqref{eq:dummy_qe0i} and~\eqref{eq:dummy_qe+i} need to be changed. 
We consider the following commutator:
\eq{
\bmat{
 &\ez &     \\
 &\fp & \fm \\ \hline
+&\fp & \fm
}\qquad
\bmat{
 &    & \ez \\
 &\fp & \fm \\ \hline
-&\fp & \fm
}~,
}
\eq{
\bmat{
 &\em &     \\
 &\fp & \fm \\ \hline
-&\fz & \fm
}\qquad
\bmat{
 &    & \em \\
 &\fz & \fz \\ \hline
+&\fz & \fm
}~,
}
instead of Eqs.~\eqref{eq:condition_absence_of_E0} and~\eqref{eq:condition_absence_of_E+}. 
These relations read
\eq{
&\cfp q_{E_{0,i}} = \cfp q_{E_{0,i+1}},\nx
&\cfp q_{E_{-1,i}} = \cfz q_{E_{-1,i+1}}
\quad\mbox{for all } i~.
}
The rest of the discussion is the same as in the main text.
In summary, it is shown that 
the $1$-local conserved quantities are restricted to the trivial ones, i.e., 
the total magnetization
even for $J_2=0,2J_1$. 

\begin{widetext}
\section{\change{The anticommutator table for the operator basis}}\label{sec:multiplication_table}
\change{For the convenience of the readers, 
we summarize the anticommutator of two elements in our operator basis in Table~\ref{tbl:AnticommutatorTable}. }

\begin{table*}[h!]
\centering
\caption{ \change{The anticommutator $\{a, b\}$ of elements $a$ and $b$ in the operator basis.}}
  \def\arraystretch{1.5}
\begin{tabular}{|c||c|c|c|c|c|c|c|c|} \hline
   \diagbox[dir=NW]{a}{b} & $\ep$ & $\ez$ & $\em$ & $\fpp$ & $\fp$ & $\fz$ & $\fm$ & $\fmm$ \\ \hline\hline
    $\ep$ & $2\fpp$ & $\fp$ & $\frac{4I-\fz}{3}$ & $0$ &  $0$ & $-\ep$ &$\ez$ &$\em$ \\ \hline
    $\ez$ & $\fp$& $\frac{4I+2\fz}{3}$ & $\fm$ & $0$ & $\ep$ & $2\ez$ & $\em$ & $0$\\ \hline
    $\em$ & $\frac{4I-\fz}{3}$ & $\fm$ & $2\fmm$ & $\ep$ & $\ez$ & $-\em$ & $0$ & $0$ \\ \hline
    $\fpp$ & $0$ & $0$ & $\ep$ & $0$ & $0$ & $2\fpp$ & $-\fp$ & $\frac{2I+\fz}{3}$ \\\hline
    $\fp$ & $0$ & $\ep$ & $\ez$ & $0$ & $-2\fpp$ & $-\fp$ & $\frac{4I-\fz}{3}$ & $-\fm$ \\\hline
    $\fz$ & $-\ep$ & $2\ez$ & $-\em$ & $2\fpp$ & $-\fp$ & $4I-2\fz$ & $-\fm$ & $2\fmm$ \\\hline
    $\fm$ & $\ez$ & $\em$ & $0$ & $-\fp$ & $\frac{4I-\fz}{3}$ & $-\fm$ & $-2\fmm$ & $0$ \\\hline
    $\fmm$ & $\em$ & $0$ & $0$ & $\frac{2I+\fz}{3}$ & $-\fm$ & $2\fmm$ & $0$ & $0$ \\\hline
  \end{tabular}\label{tbl:AnticommutatorTable}
\end{table*}

\section{\change{Necessary conditions for integrability in the extended model}}\label{sec:list_of_eq}
In the extended model~\eqref{eq:generalizedH} defined in Sec.~\ref{sec:necessary}, 
we continue the analysis of Step~2 and 
obtain the following polynomial conditions 
as necessary conditions for the existence of $k$-local conserved quantity for some $k$~\footnote{
In fact, one of the authors (A.~H.) has recently shown that this condition is equivalent to each other for all $3\leq k\leq N/2$~\cite{hokkyo2025}
} satisfying $3\leq k\leq N/2$: 
\eq{3/2\cfz|\cep-\cfp|^2 
+\cez ( -1/6 |\cep|^2 -1/2 \cep \cfm -1/2 \cfp \cem -1/6 |\cfp|^2 + 1/3 |\cfpp|^2)&=0,\label{eq:necessary_integrable_EzEzFz}\\
(\cep-\cfp)(\cfpp(\cez+3\cfz+6 h)-(\cep^2+6\cep\cfp+\cfp^2))-\cfpp^2(\cem-\cfm)&=0,\label{eq:necessary_integrable_CpCpFmm}\\
3\cfz (  \cfmm(\cep - \cfp  )^2 +\cfpp(\cem - \cfm  )^2 ) \qquad \qquad \qquad \qquad \qquad \qquad \qquad \qquad \qquad \nx 
+ |\cfpp|^2 (-4/3\cez^2 +1/3 |\cep|^2 + \cep \cfm + \cfp \cem +1/3 |\cfp|^2  - 1/3 |\cfpp|^2 )&=0,\label{eq:necessary_integrable_FzFppFmm}\\
\cfpp^2(|\cep|^2+3\cep\cfm-3\cfp\cem-|\cfp|^2)\qquad \qquad \qquad \qquad \qquad \qquad \qquad \qquad \qquad \nx 
+\cfpp(\cep+\cfp)(\cep-\cfp)(\cez-21\cfz-6h)
+(\cep+\cfp)(\cep-\cfp)^3&=0,\label{eq:necessary_integrable_EpEzFm}\\
(|\cep-\cfp|^2-2|\cfpp|^2)(\cfpp(\cem-\cfm)^2-\cfmm(\cep-\cfp)^2)+2|\cfpp|^2\cez(\cfp\cem-\cep\cfm)&=0,\label{eq:necessary_integrable_CpFmmCp_1}\\
(11\cez-42\cfz-12h)(\cfp\cem-\cep\cfm)\qquad \qquad \qquad \qquad \qquad \qquad \qquad \qquad \qquad \nx 
+\left(\cfpp(\cem-\cfm)^2-\cfmm(\cep-\cfp)^2\right)
\left(|\cfpp|^2-(\cep-\cfp)(\cem-\cfm)\right)&=0,\label{eq:necessary_integrable_CpFmmCp_2}\\
(\cep-\cfp)\left(9/2\cfz\left(\cfmm(\cep-\cfp)^2-\cfpp(\cem-\cfm)^2\right)
-|\cfpp|^2\left(1/2|\cfpp|^2+\cez(\cez-3\cfz-6h)\right)
\right)\nx
+(\cem-\cfm)|\cfpp|^2(-3\cez\cfpp+1/2(\cep^2+6\cep\cfp+\cfp^2))
&=0,\label{eq:necessary_integrable_CpDmEz_1}\\
(\cep-\cfp)\Bigl(3/2\cfz\left(\cfmm(\cep-\cfp)^2-\cfpp(\cem-\cfm)^2\right) \qquad \qquad \qquad \qquad  \qquad \qquad \qquad \qquad\qquad \qquad \nx
+|\cfpp|^2\Bigl(1/6|\cfpp|^2 -1/6|\cep|^2 -1/6|\cfp|^2 +3/2\cep\cfm +3/2\cem\cfp
+\cez^{2}/3 -18\cfz^2 -3\cez\cfz +18\cfz h\Bigr)\Bigr)\nx
+(\cem-\cfm)|\cfpp|^2(3\cfz\cfpp -4/3\cep\cfp)
&=0,\label{eq:necessary_integrable_CpCmEz_1}\\
(\cep+\cfp)\Bigl((2\cez+9/2\cfz)\cfmm(\cep-\cfp)^2-9/2\cfz\cfpp(\cem-\cfm)^2\qquad \qquad \qquad\qquad \qquad \nx
+|\cfpp|^2 \bigl(
-|\cfpp|^{2}/2  +2\cez^2 +|\cep-\cfp|^{2}/2 
\bigr)\Bigr)
&=0,\label{eq:necessary_integrable_CpDmEz_2}\\
(\cep+\cfp)\left(3/2\cfz(5\cfmm(\cep-\cfp)^2-\cfpp(\cem-\cfm)^2)+|\cfpp|^2(1/6|\cfpp|^2 +1/3\cez^2+6\cez\cfz-27\cfz^2)\right)\nx
+1/6(\cep-\cfp)|\cfpp|^2(|\cfp|^2 -|\cep|^2 +9\cep\cfm -9\cem\cfp)
&=0.\label{eq:necessary_integrable_CpCmEz_2}%
}

\end{widetext}

Here, we summarize our results for the extended model~\eqref{eq:generalizedH}. 
\begin{thm}\label{thm:extended_2}
In the extended model~\eqref{eq:generalizedH} with nonzero coupling constants,  
$k$-local conserved quantities with $3 \leq k \leq N/2$ are absent 
unless  all of Eqs.~\eqref{eq:necessary_integrable_EzEzFz}-\eqref{eq:necessary_integrable_CpCmEz_2}
hold.
\end{thm}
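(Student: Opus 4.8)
The plan is to rerun the two-step argument of Sec.~\ref{sec:proof} without specializing the couplings to Eq.~\eqref{eq:couple_const_BLBQ}: keep $\{e_m\}$, $\{f_m\}$, $h$ generic (with all $e_m,f_m\neq0$) and read off which polynomial identities in these couplings are forced by the existence of a $k$-local conserved quantity. First, Step~1 carries over verbatim. Because every nontrivial basis element occurs as the left factor of some interaction term $C_{(n),j}C_{(n),j+1}^{\dg}$ with nonzero coefficient whenever all $e_m,f_m\neq0$, Props.~\ref{prop:almost_dp_k}, \ref{prop:dp_k}, and \ref{prop:dp_factor_k} hold with unchanged proofs. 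Hence, for any $k$-local conserved quantity $Q$ with $3\le k\le N/2$, the $k$-support part of $Q$ is a linear combination of doubling operators whose coefficients are all fixed by a single scalar $q_k$ via Eq.~\eqref{eq:dp_factor_k} (here we use $k\le N/2$ exactly as in the original proof). If $q_k=0$, then $Q$ has no $k$-support part and is therefore at most $(k-1)$-local, contradicting that $Q$ is a $k$-local quantity; so it suffices to show that $q_k=0$ whenever one of Eqs.~\eqref{eq:necessary_integrable_EzEzFz}--\eqref{eq:necessary_integrable_CpCmEz_2} fails.

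For Step~2 one picks, for each target equation, a family of $k$-local outputs $\boldB^k_{i-m}$ built from a short ``seed'' operator padded on both sides by strings of $\ez$'s, in the spirit of Eq.~\eqref{eq:multiple_outputs} and Eq.~\eqref{eq:multiple_outputs_b}. For each member of the family one enumerates every commutator $[\,\text{term of }Q,\ \text{term of }H\,]$ that produces it; these fall into three types: (i) a $(k-1)$-support term of $Q$ against an interaction term sharing one site; (ii) a $k$-support (doubling) term of $Q$ against the field $hF_{0}$ sharing one site; (iii) a $k$-support (doubling) term against an interaction term sharing two sites. The type-(i) contributions telescope along the chain, and the families are chosen so that summing the relations over $m$ cancels all $\{q_{\boldA^{k-1}_j}\}$, precisely as Eqs.~\eqref{eq:condition_k_left}--\eqref{eq:necessary_condition_k} collapse to an identity in $q_k$ alone. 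What survives is $q_k\cdot P_j=0$ for a homogeneous polynomial $P_j$ in the couplings. The seeds $\ez\ez\fz$, the configuration of Eq.~\eqref{eq:multiple_outputs_b}, $\fz\fpp\fmm$, $\ep\ez\fm$, and the remaining analogous families yield in turn Eqs.~\eqref{eq:necessary_integrable_EzEzFz}, \eqref{eq:necessary_integrable_CpCpFmm}, \eqref{eq:necessary_integrable_FzFppFmm}, \eqref{eq:necessary_integrable_EpEzFm}, and Eqs.~\eqref{eq:necessary_integrable_CpFmmCp_1}--\eqref{eq:necessary_integrable_CpCmEz_2}. If $Q$ exists then $q_k\neq0$, so every $P_j$ must vanish, which is the assertion of Thm.~\ref{thm:extended_2}.

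The main obstacle is the type-(iii) bookkeeping. A commutator of two operators overlapping on two sites is generally a linear combination of several basis elements (cf.\ Eq.~\eqref{eq:efee_s_3}), so for each output one must, using the anticommutator Table~\ref{tbl:AnticommutatorTable}, list all pairs $(C,D)$ of basis elements whose products $CD$ and $DC$ can feed the seed on the overlap and record the exact scalar with which they do; this is the step most prone to error, and where the general couplings (rather than the special values of the BLBQ model) enter the polynomials. Equally delicate is the design of the output families: one needs the $(k-1)$-support contributions to cancel upon summation, and one wants the resulting $P_j$'s to be as simple and as independent as possible, so that Eqs.~\eqref{eq:necessary_integrable_EzEzFz}--\eqref{eq:necessary_integrable_CpCmEz_2} form an efficient set of necessary conditions. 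Finally, the boundary members $m=0$ and $m=k-3$ of each family must be treated separately, exactly as in Eqs.~\eqref{eq:condition_k_left} and~\eqref{eq:condition_k_right}, to verify that the edge contributions do not disturb the cancellation; away from these cases the remainder is a finite, if lengthy, verification.
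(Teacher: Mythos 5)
Your proposal is correct and follows essentially the same route as the paper: Step~1 (Props.~\ref{prop:almost_dp_k}--\ref{prop:dp_factor_k}) goes through unchanged because only the nonvanishing of all $e_m,f_m$ is used, reducing the $k$-support part to a single scalar $q_k$, and Step~2 with suitably padded output families produces relations $q_k P_j=0$ whose polynomials $P_j$ are exactly Eqs.~\eqref{eq:necessary_integrable_EzEzFz}--\eqref{eq:necessary_integrable_CpCmEz_2}, so a nonzero $q_k$ forces all of them to vanish. This is precisely how the paper argues (cf.\ the derivation of Eq.~\eqref{eq:necessary_condition_k} and of Eq.~\eqref{eq:coeff_codition_3} from the family~\eqref{eq:multiple_outputs_b}), including the telescoping cancellation of the $(k-1)$-support coefficients and the separate treatment of the boundary members of each family.
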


Theorem~\ref{thm:extended_2} represents \textit{necessary} conditions for integrability. 
Therefore, these equations provide a simple way to determine whether the model of interest is integrable or not, 
complementary to solving the Yang--Baxter equation
which gives a \textit{sufficient} condition for integrability. 
In particular, the above equation shows the necessity of fine-tuning of parameters for integrable systems. 
In other words, an integrable system becomes nonintegrable 
when perturbed in such a way that either of Eqs.~\eqref{eq:necessary_integrable_EzEzFz}-\eqref{eq:necessary_integrable_CpCmEz_2}
is no longer satisfied. 
For example, given an integrable system with $\cep\neq\cfp$, 
we can see from Eq.~\eqref{eq:necessary_integrable_CpCpFmm} that changing the anisotropic field $h$ breaks the integrability. 

Assuming that the coupling constants $e_m$ and $f_m$ are real, 
these polynomial relations can be simplified. 
For example, Eqs.~\eqref{eq:necessary_integrable_CpFmmCp_1} and \eqref{eq:necessary_integrable_CpFmmCp_2} 
become trivial identities. 
The remaining equations take the following form:
\begin{widetext}
\eq{3/2\cfz(\cep-\cfp)^2 
+\cez ( -1/6 \cep^2 -\cep \cfp-1/6 \cfp^2 + 1/3\cfpp^2)
&=0,\tag{\ref{eq:necessary_integrable_EzEzFz}$'$}\\
(\cep-\cfp)(\cfpp(\cez+3\cfz+6 h)-(\cep^2+6\cep\cfp+\cfp^2)-\cfpp^2)
&=0,\tag{\ref{eq:necessary_integrable_CpCpFmm}$'$}\\
6\cfz(\cep - \cfp  )^2 
+ \cfpp (-4/3\cez^2 +1/3 \cep^2 + 2\cep \cfp +1/3 \cfp^2  - 1/3 \cfpp^2 )
&=0,\tag{\ref{eq:necessary_integrable_FzFppFmm}$'$}\\
(\cep+\cfp)(\cep-\cfp)(\cfpp^2+\cfpp(\cez-21\cfz-6h)+(\cep-\cfp)^2)
&=0,\tag{\ref{eq:necessary_integrable_EpEzFm}$'$}\\
(\cep-\cfp)\left(
-1/2\cfpp^2-\cez(\cez-3\cfz-6h)
-3\cez\cfpp+1/2(\cep^2+6\cep\cfp+\cfp^2)
\right)
&=0,\tag{\ref{eq:necessary_integrable_CpDmEz_1}$'$}\\
(\cep-\cfp)\Bigl(
1/6\cfpp^2 -1/6\cep^2 -1/6\cfp^2 +5/3\cep\cfp
+\cez^{2}/3 -18\cfz^2 -3\cez\cfz +18\cfz h+3\cfz\cfpp \Bigr)
&=0,\tag{\ref{eq:necessary_integrable_CpCmEz_1}$'$}\\
(\cep+\cfp)\Bigl((2\cez+1/2\cfpp)(\cep-\cfp)^2
+\cfpp \bigl(-\cfpp^2/2  +2\cez^2 \bigr)\Bigr)
&=0,\tag{\ref{eq:necessary_integrable_CpDmEz_2}$'$}\\
(\cep+\cfp)\left((6\cfz-1/6\cfpp)(\cep-\cfp)^2+\cfpp(1/6\cfpp^2 +1/3\cez^2+6\cez\cfz-27\cfz^2)\right)
&=0.\tag{\ref{eq:necessary_integrable_CpCmEz_2}$'$}%
}
\end{widetext}
By solving these equations, we obtain the six classes of solutions listed in Theorem~\ref{thm:extended}.

}%

\bibliography{document}
\end{document}